\newcommand{\eps}{\varepsilon}
\newcommand{\argmax}{\operatornamewithlimits{argmax}}
\newcommand{\hbold}{\boldsymbol{h}}
\newcommand{\Hbold}{\boldsymbol{H}}
\newcommand{\Ebold}{\boldsymbol{E}}
\newcommand{\sbold}{\boldsymbol{s}}
\newcommand{\kbold}{\boldsymbol{k}}
\newcommand{\gbold}{\boldsymbol{g}}
\newcommand{\pbold}{\boldsymbol{p}}
\newcommand{\Ibold}{\boldsymbol{I}}
\newcommand{\Abold}{\boldsymbol{A}}
\newcommand{\abold}{\boldsymbol{a}}
\newcommand{\bbar}{\overline{b}}
\newcommand{\bbold}{\boldsymbol{b}}
\newcommand{\bboldbar}{\overline{\boldsymbol{b}}}
\newcommand{\aboldhat}{\widehat{\boldsymbol{a}}}
\newcommand{\rbold}{\boldsymbol{r}}
\newcommand{\rboldf}{\boldsymbol{r}^{\mathrm{f}}}
\newcommand{\rf}{r^{\mathrm{f}}}
\newcommand{\rboldbar}{\overline{\boldsymbol{r}}}
\newcommand{\rbar}{\overline{r}}
\newcommand{\thetabar}{\overline{\theta}}
\newcommand{\dF}{d^{\mathrm{F}}}
\newcommand{\dFMax}{\overline{d}^{\mathrm{F}}}
\newcommand{\dFzero}{d^{\mathrm{F0}}}
\newcommand{\dFzeroMax}{\overline{d}^{\mathrm{F0}}}
\newcommand{\dNzeroMax}{\overline{d}^{\mathrm{N0}}}
\newcommand{\dNR}{d^{\mathrm{NR}}}
\newcommand{\bboldbarMR}{\overline{\boldsymbol{b}}^{\mathrm{MR}}}
\newtheorem{theorem}{Theorem}
\newtheorem {corollary}{Corollary}
\newtheorem {property}{Property}
\newtheorem {remark}{Remark}
\newtheorem {definition}{Definition}
\begin{document}

	\raggedbottom
	\allowdisplaybreaks

     \title{A Study on Characterization of Near-Field Sub-Regions For Phased-Array Antennas
  \thanks{This research was supported by Business Finland via project 6GBridge - Local 6G (Grant Number: 8002/31/2022), and Research Council of Finland, 6G Flagship Programme (Grant Number: 346208).}}
	\author{Mehdi Monemi, \textit{Member}, IEEE, Sirous Bahrami, \textit{Member}, IEEE, Mehdi Rasti, \textit{Senior Member}, IEEE, Matti Latva-aho, \textit{Fellow}, IEEE 
		\thanks{
  A part of this work has been presented and published
in IEEE International Symposium on Personal, Indoor and Mobile Radio
Communications (IEEE PIMRC) 2024, Valencia, Spain, and a part has been accepted for presentation and publication in IEEE Global Communications Conference (IEEE GLOBECOME) 2024, Cape Town, South Africa.
}
\thanks{
  Mehdi Monemi, Mehdi Rasti, and Matti Latva-aho are with the University of Oulu, 90570 Oulu, Finland (emails: mehdi.monemi@oulu.fi, mehdi.rasti@oulu.fi, matti.latva-aho@oulu.fi).
}
\thanks{
Sirous Bahrami is with the Department of Electrical Engineering, Pohang University of Science and Technology (POSTECH), Pohang 37673, South Korea (e-mail:
bahramis@postech.ac.kr).
}
% \thanks{
%   Mehdi Monemi is with the Centre
% for Wireless Communications (CWC), University of Oulu, 90570 Oulu, Finland (email: mehdi.monemi@oulu.fi).
% }
% \thanks{
% Sirous Bahrami is with the Department of Electrical Engineering, Pohang University of Science and Technology (POSTECH), Pohang 37673, South Korea (e-mail:
% bahramis@postech.ac.kr).
% }
% \thanks{
% Mehdi Rasti is double affiliated with both Centre for Wireless Communications (CWC) and the Water, Energy and Environmental Engineering research unit (WE3) at the University of Oulu, 90570 Oulu, Finland (e-mail: mehdi.rasti@oulu.fi). The work of M. Rasti was supported by the University of Oulu and the the Research Council of Finland (former Academy of Finland) Profi6 336449.
% }
% \thanks{
% Matti Latva-aho is with Centre
% for Wireless Communications (CWC), University of Oulu, 90570 Oulu, Finland
% (email: matti.latva-aho@oulu.fi).
}

\maketitle	%\IEEEpeerreviewmaketitle

\begin{abstract}
We characterize three near-field sub-regions for phased array antennas by elaborating on the boundaries {\it Fraunhofer}, {\it radial-focal}, and {\it non-radiating} distances. The {\it Fraunhofer distance} which is the boundary between near and far field has been well studied in the literature on the principal axis (PA) of single-element center-fed antennas, where PA denotes the axis perpendicular to the antenna surface passing from the antenna center. The results are also valid for phased arrays if the PA coincides with the boresight, which is not commonly the case in practice. In this work, we completely characterize the Fraunhofer distance by considering various angles between the PA and the boresight. 
For the {\it radial-focal distance}, below which beamfocusing is feasible in the radial domain, a formal characterization of the corresponding region based on the general model of near-field channels (GNC) is missing in the literature. We investigate this and elaborate that the maximum-ratio-transmission (MRT) beamforming based on the simple uniform spherical wave (USW) channel model results in a radial gap between the achieved and the desired focal points. While the gap vanishes when the array size $N$ becomes sufficiently large, we propose a practical algorithm to remove this gap
in the non-asymptotic case when $N$ is not very large. Finally, the {\it non-radiating} distance, below which the reactive power dominates active power, has been studied in the literature for single-element antennas. We analytically explore this for phased arrays and show how different excitation phases of the antenna array impact it. We also clarify some misconceptions about the non-radiating and Fresnel distances prevailing in the literature. 
\end{abstract}
	%...................................................................................................................................
	% keywords
\begin{keywords}
	Near-field, characterization, Fraunhofer distance, Fresnel distance, non-radiating distance, radial beamfocusing.
\end{keywords}
		%...................................................................................................................................
	% Introduction
%\thispagestyle{empty}

\section{Introduction}
\label{sec:introduction} 
The sixth generation of wireless technology (6G) promises phenomenal advancements, aiming to deliver speeds hundreds of times faster than current capabilities \cite{9770093}. This ambitious goal is fueled by cutting-edge technologies like extremely large-scale antenna arrays (ELAAs) \cite{10379539} and operation at very high frequencies, including millimeter waves (mmWave) and terahertz (THz) bands \cite{10040913}. However, these advancements lead to consequences like significantly expanding the near-field region of electromagnetic (EM) wave propagation, extending its reach to potentially hundreds of meters 
\cite{cui2022near6G}.
 Signal propagation in the near-field region poses challenges for designing near-field communication systems but also offers more Degrees-of-Freedom (DoFs) and new opportunities for novel applications that are not feasible in the far-field. Some of these challenges and DoFs include locally varying polarization of EM waves, beamforming in the 3D domain, non-stationary antenna element characteristics,  non-sparse channel representation of ELAAs, and holographic properties of antenna arrays enabling holographic communication. These challenges and DoFs begin to appear when entering the near-field region characterized by the Fraunhofer distance.
%\cite{zhang20236g,cui2022near6G}.
%Near-field communication offers greater degrees of freedom (DoF) and enables novel applications not feasible in the far-field. For instance, it allows for beamforming in both radial and angular dimensions, leading to the possibility of 3D beamfocusing. Additionally, near-field exhibits other distinctive challenges and opportunities including locally varying polarization of EM waves, non-stationary antenna element characteristics, non-sparse channel representation of ELAAs, and holographic properties of antenna arrays. In this context, effectively managing the complexities of near-field communication, harnessing its unique features, and developing robust and efficient 6G wireless systems requires a deep understanding of near-field signal behavior, particularly when initiated from phased array antennas.
This paper aims to provide a comprehensive and categorized study of various electromagnetic propagation regions and present a more generalized perspective for the characterization of each region. We focus on three key boundaries\footnote{The {\it Fresnel} distance is another important near-field boundary investigated in many existing works for single-element antennas \cite{selvan2017fraunhofer,balanis2016antenna,capozzoli2024review}. In this paper, however, we do not investigate this due to space limitations. The extensions of the results for the Fresnel distance of phased array antennas can be found in \cite{monemi2024GLOBECOM}.}:
\begin{itemize}
    \item {\it Fraunhofer distance}, which is the transition boundary between the near-field and far-field.
    %\item {\it Fresnel distance}, which relates to a specific phase delay approximation inside the near-field region.
     \item {\it Radial focal distance}, identifying the region inside the near-field wherein beamfocusing in the radial domain is feasible.
    \item {\it Non-radiative distance}, below which the non-radiative reactive power dominates the active power.%relating to the traveling wavefront.
\end{itemize}
%For each of these boundries, we provide the existing 

These distances serve as critical boundaries for understanding signal behavior in the near-field region. The  {\it Fraunhofer} distance has been extensively studied and characterized in the literature for single-element center-fed antennas \cite{selvan2017fraunhofer,balanis2016antenna,capozzoli2024review} 
on the principal axis (PA) of the antenna, where the PA denotes the axis perpendicular to the antenna surface passing from the geometrical center (i.e., feed location) of the antenna. The results are also valid for phased arrays if the PA coincides with the boresight, which is not always the case in practice. In our work, we completely characterize the Fraunhofer distance by considering various angles between the PA and the boresight. 
The {\it radial focal distance} and the corresponding {\it radial focal region} are only applicable to phased arrays in the near-field region; Despite limited exploration, a comprehensive study regarding the formal definition and characterization of this region is also absent in the existing literature. Especially most existing works have analyzed and characterized the beamfocusing in the radial domain based on the uniform spherical wave (USW) channel model in the near-field region (e.g., \cite{alexandropoulos2022near}), and formal characterization based on the more general near-field channel models less investigated, which is to be explored in this paper. %the radial f We provide the characterization for two cases regarding the number of array elements $N$, (a): the asymptotic case where $N$ is extremely large, and $(b)$: $N$ is not extremely large.  
Finally, the non-radiating region has been analytically studied for a single-element (dipole) antenna, however, this has not been studied for phased arrays under different excitation phases, and besides, there exists some misconceptions in the literature about the discrimination of Fresnel distance and non-radiating distance which will be elaborated in this work.
To address these research gaps, we revisit existing definitions, perform detailed calculations, and provide characterizations for each of the above three boundaries by considering phased array antennas, and wherever possible, we present exact closed-form expressions. The regions characterized in this paper are schematically depicted in Fig. \ref{fig:regions}.  %Additionally, we identify potential misconceptions present in the literature. Notably, we emphasize the inaccuracies arising from directly applying single-element antenna characterization principles to phased arrays, as observed in prior works. Besides, we show that the exact characterization of the radial focal region is not possible by only considering the conventional array factor, formulated through the uniform spherical wave (USW) channel model in the near-field region,  as considered by related works in literature.  Our analysis underscores presenting proper formulations and a deeper understanding of these crucial propagation boundaries. A schematic representation of the stated near-field regions developed in existing works for a single-element antenna and those characterized in this work for phased array antennas having the same dimensions as the single-element one are depicted in Fig. \ref{fig:regions} as dotted and dashed lines respectively. 

\subsection{Background and Contributions}
In what follows, we explore the three regions related to the near-field propagation and the corresponding boundaries mentioned above. For each item, we commence by providing its background context. Subsequently, we delve into the specific contributions conducted in this paper. Before starting the literature review, we present some terms and definitions used in the following parts. We define the {\it principle axis} of the antenna as the axis perpendicular to the antenna surface, passing from the geometrical central point of the antenna. The {\it boresight} is the axis corresponding to the main lobe of the antenna. This is identical to the principal axis for single-element center-fed symmetrical antennas, however, for phased arrays, we use this term to represent the axis connecting the geometrical antenna center and the intended observation/transmitting point source. The principal axis and boresight are depicted for single-element and phased array antennas in Figs. \ref{fig:Fraunhofer_single} and \ref{fig:Fraunhofer_revisit} respectively. The {\it on-boresight} and {\it off-boresight} refer to the cases where the principal axis does and does not, respectively, coincide with the boresight of the antenna.     

{\hspace{-14pt} \it 1- \underline{Fraunhofer region}}

{\bf Background:} The Fraunhofer distance (Rayleigh distance) is the distance between the observation point and the antenna below which we have the near-field with spherical propagation, and beyond which the far-field exists wherein the spherical incident wavefront can be approximated as planar. % and the angular field distribution is independent of the distance to the antenna.
It is well-known that a maximum phase delay of $\pi/8$ between a point on the antenna surface and the feed, caused by the curvature of the wavefront, is small enough to approximate the spherical wavefront as planar. This phase delay has been accepted in the literature for defining and calculating the Fraunhofer distance. %Several works have calculated 
The Fraunhofer distance on the antenna's boresight is obtained as $\dFzeroMax=2D^2/\lambda$, where $D$ is the maximum dimension of the antenna and $\lambda$ is the wavelength. %While the maximum phase delay of $\pi/8$ over the antenna surface is the mostly accepted metric for determining the far-field region for center-fed antennas such as dipoles, there exists other definitions presented in some papers. For example, the authors of *** proposes three different definition  

Several works have calculated the Fraunhofer distance for a single-element antenna following different schemes.  The most straightforward scheme is to calculate the dominant terms of the binomial or Taylor series expansion approximation of the signal, and then obtain the maximum distance from the antenna where the dominant terms would result in a {\it differential phase delay} no more than $\pi/8$ on different points of the antenna surface \cite{balanis2016antenna,stutzman2012antenna}. As an alternative solution, the authors of \cite{selvan2017fraunhofer} have derived $\dFzeroMax$ by considering a continuous aperture of arbitrary shape, and then they employed scalar diffraction theory for obtaining Fraunhofer and Fresnel regions. The study by \cite{7590187} demonstrates that for single-element thin-wire dipole antennas, the boresight Fraunhofer distance $2D^2/\lambda$ is accurate only for antennas where $D \geq 5\lambda$. %Our study in this paper also confirms this inaccuracy for phased array antennas through full-wave simulations.
Noting the inexactness of $2D^2/\lambda$ when $D$ is comparable to $\lambda$, mathematical formulations have been obtained in \cite{9896807} for characterizing a more exact on-boresight Fraunhofer distance for a thin wire of length $D$. In \cite{10541333}, the {\it effective Fraunhofer distance} $\dFzeroMax_{\mathrm{eff}}$ is defined and characterized as the distance where the normalized beamforming gain under the far-field assumption is no less than a value $\eta$. By considering $\eta=95\%$, it has been shown that we have $\dFzeroMax_{\mathrm{eff}}=0.367\dFzeroMax$ on the boresight of the antenna. The authors of \cite{10278686} have shown that for MIMO phased array antennas with maximum dimensions of $D_1$ and $D_2$ corresponding to a transmitter and a receiver, the on-boresight Fraunhofer distance is obtained as $\dFzeroMax=2(D_1+D_2)^2/\lambda$.
To the best of our knowledge, all works in the literature except \cite{hu2023design} have analyzed the Fraunhofer distance on the principal axis of the antenna (e.g., \cite{cui2022near6G,yan2021joint,cui2022near,9978148}). In practice, however, it is of a very small probability that the observation point is exactly placed on the boresight. The authors of \cite{hu2023design}  have partially investigated the fact that for the case of an off-boresight phased array scenario, the Fraunhofer distance is increased compared to the on-boresight scenario, however, a comprehensive and analytical investigation is missing. In particular, a detailed analysis of how the observation angle $\theta$  exactly impacts the Fraunhofer distance, and how the Fraunhofer distance changes softly when moving from on-boresight to off-boresight angles is missing.  

{\bf Contributions:}
As in \cite{monemi2024GLOBECOM}, we delve into an analysis of the well-established derivation of the Fraunhofer distance, which traditionally applies to the boresight of a center-fed antenna model, and show that this procedure does not directly translate to off-boresight scenarios. This specifically holds for the phased array antennas, wherein each antenna element has its feed, and thus there exists not only a single central feed reference point.  As a result, the widely accepted values of the on-boresight Fraunhofer distance for phased arrays, are not valid in the off-boresight scenario. Then we derive a closed-form value for the Fraunhofer distance for phased arrays for $\theta\in[0,\pi]$. More specifically, we define and calculate the {\it Fraunhofer angle} $\theta^{\mathrm{F}}$ and show that for $\theta\in[0,\pi/2-\theta^{\mathrm{F}}] \cup [\pi/2+\theta^{\mathrm{F}},\pi]$ the Fraunhofer distance is $\dF=8D^2\sin^2(\theta)/\lambda$. %which can be about four times $\dFzero=2D^2\sin^2(\theta)/\lambda$ characterized by the center-fed antenna model.
For the case where $\theta$ is softly changed from $\pi/2\pm \theta^{\mathrm{F}}$ toward $\pi/2$ (where $\pi/2$ corresponds to the boresight), $\dF$ softly switches to $2D^2/\lambda$ according to a function whose value is derived in a closed form.
We obtain a tight closed-form approximation for the value of $\theta^{\mathrm{F}}$ and show that it is a small angle, its value being a decreasing function of $D/\lambda$; i.e., 
 for phased array antennas, as the antenna array scales up, this angle sharply tends toward zero. We will elaborate on how the Fraunhofer distance for phased arrays is increased about 4 times when debating from the boresight scenario. The approximation becomes tight when studying ELAAs, wherein a very slight deviation of the (user equipment) UE location from the on-boresight scenario (as is commonly the case in practice) extends the Fraunhofer region exactly 4 times.
 The schematic presentation of the characterized Fraunhofer distance $\dF$ for the on-boresight and off-boresight scenarios is depicted in Fig. \ref{fig:regions} per radiation angle $\theta$ in dashed blue line.

{\it \hspace{-14pt} 2- \underline{Radial-domain beamfocusing region}}

{\bf Background:}
The far-field 2D angular domain beamforming has been extensively investigated in the literature for phased array antennas \cite{7342886,8371237,9140420}. The angular-domain beamforming of phased array antennas on the near-field region has also been studied in the literature \cite{10035952,9473882,10129111}, however, one of the most distinguishing and unique features of the near-field is the capability of beamfocusing in the radial domain, which if employed with extremely large-scale arrays can lead to spot-like near-field 3D beamfocusing \cite{monemi2023towards,monemi20236GFresnel}.
 For the asymptotic case where the number of array elements $N$ is extremely large, the authors of \cite{monemi2023towards} have explored a comprehensive study on different technical and practical aspects of implementing spot beamfocusing (SBF) in the near-field region. In \cite{monemi20236GFresnel}, the authors provide a mathematical framework to prove that spot-like 3D radial and angular beamfocusing corresponds to the maximum power beamformer in the asymptotic case under certain conditions. Then they propose a modular (sub-array based) meta-lens structure to focus the beam at the desired focal point near-field region in a smart manner without requiring channel state information (CSI); This idea has then been further developed in \cite{fallah2024near} by implementing transfer-learning for training subarrays in a more adaptive manner. For the non-asymptotic case where the number of array elements is not extremely large, several works have explored the practical and theoretical aspects of near-field beamfocusing in the radial domain. In this regard, to characterize the region wherein near-field beamfocusing in the radial domain is feasible, the idea of {\it depth of focus} has first been defined for continuous apertures in \cite{1137900} and then characterized and applied to phased array antennas \cite{6740832,bjornson2021primer,Liu2023near}. For phased array antennas having small-size antenna patch elements and using Fresnel approximation, the authors of \cite{bjornson2021primer} have proved that the 3dB depth of focus is possible at distances lower than ${\dF}/{10}$. %In \cite{bjornson2021primer}, the concept of {\it depth of focus} has been discussed as an important metric for evaluating the attainability of the near-field array response in focusing the beam in the radial domain. 
For the case of a uniform linear array antenna, an approximate closed-form solution for 3dB depth of focus is calculated in \cite{Liu2023near} based on the uniform spherical wave (USW) channel model for near-field region. %By proposing a mechanism to control and focus the beam in both radial and angular domains for wideband signals, the authors of *** proposed a fast beam-training mechanism for wideband signals in the near-field region.

{\bf Contributions:} 
The depth of focus and radial focal properties investigated in existing works (e.g., \cite{Liu2023near}) are analyzed based on the simple USW model for near-field channels which ignores the variation of the near-field channels of different antenna array elements in the distance domain. We show that this leads to inaccuracies in characterizing the focal point in the radial domain. As in \cite{monemi2024PIMRC}, we express a formal definition of the radial focal distance and present the characterization of the region wherein the constitution of a radial focal point is feasible. Given a desired focal point (DFP) in the radial domain, the beamformer vector corresponding to the highest signal amplitude potentially results in the realization of a radial focal point at some other point between the antenna and the DFP. This leads to a {\it radial focal gap}  between the DFP and the achieved focal point (AFP). By employing ELAAs, we show that this gap tends to zero in the asymptotic case for most practical ELAA structures, considering the non-uniform spherical model (NUSW) for near-field channels. For the case of non-asymptotic cases where the number of array elements is not too large, we provide a practical algorithm for resolving the gap, and realizing the focal point at the exact desired location point.

{\it \hspace{-14pt} 3- \underline{Non-radiating region}}

{\bf Background:} Transmit antennas emit both active and reactive powers. The active power is associated with the traveling wave, while the reactive power is associated with the capacitive or inductive fields. The distance close to the antenna where the active and reactive power levels are equal is termed the {\it non-radiating distance} denoted by $\dNR$. The region $r<\dNR$ corresponds to the non-radiative near-field region where the reactive power dominates, and the region $r>\dNR$ constitutes the radiative region which includes both the radiative near-field ($\dNR<r<\dF$), and the far-field ($r>d^{\mathrm{F}}$) regions as depicted in Fig. \ref{fig:regions}. There has been no analytic study of $\dNR$ for $N$-element phased array antennas, not even for the simple case of $N=1$. To the best of our knowledge, the only theoretical investigation of $\dNR$ in the literature pertains to an infinitesimal dipole for which it is proven that $\dNR=\lambda/2\pi$ \cite{balanis2016antenna,stutzman2012antenna}. 
The non-radiating near-field region has been analyzed in \cite{8954758} through numerical results for three types of antennas: dipole, loop, and Yagi–Uda antennas. More specifically, the extreme value of $\dNR=\lambda/2\pi$, previously derived in the literature through theoretical analysis for infinitesimal dipole antennas, has been verified through numerical results to apply to all three aforementioned types, considering infinitesimal dimensions.

{\bf Contributions:} Similar to the Fraunhofer distance, the Fresnel distance has been characterized based on analyzing a specific phase delay relating to the curvature of the wave for different points on the antenna. The value of Fresnel distance has been derived in several works as $\dNzeroMax=0.62\sqrt{{D^3}/{\lambda}}$ \cite{balanis2016antenna, selvan2017fraunhofer}.
Many related works in the literature assume that the radiative region commences at the Fresnel distance (e.g. \cite{zhang2022beam,gowda2016wireless, you2023near}); we show that this assumption is not correct, as these two regions serve distinct purposes. The former is characterized through an analysis of active and reactive power, whereas the latter is defined based on an examination of phase delay on the aperture surface.
Initiating from the electrical potential function and then applying Maxwell’s equations, we derive exact mathematical expressions for the computation of the complex transmit power relating to an $N$-element dipole array of arbitrary size. By taking the active and reactive parts, we calculate the non-radiating distance $\dNR$ for a dipole array. We evaluate $\dNR$ for various numbers of array elements $N$, various dimensions of each antenna element $D^{\mathrm{s}}$, and various excitation phases for the antenna array elements. Our analysis reveals that, for all stated scenarios, $\dNR$ is consistently lower than half a wavelength, which is significantly smaller than the Fresnel distance of the phased arrays. Besides, it is seen through numerical results that a non-co-phase excitation vector results in a slightly larger value of $\dNR$ compared to implementing a completely co-phase excitation vector. 
\begin{figure}
    \centering
 \includegraphics[width=244pt]{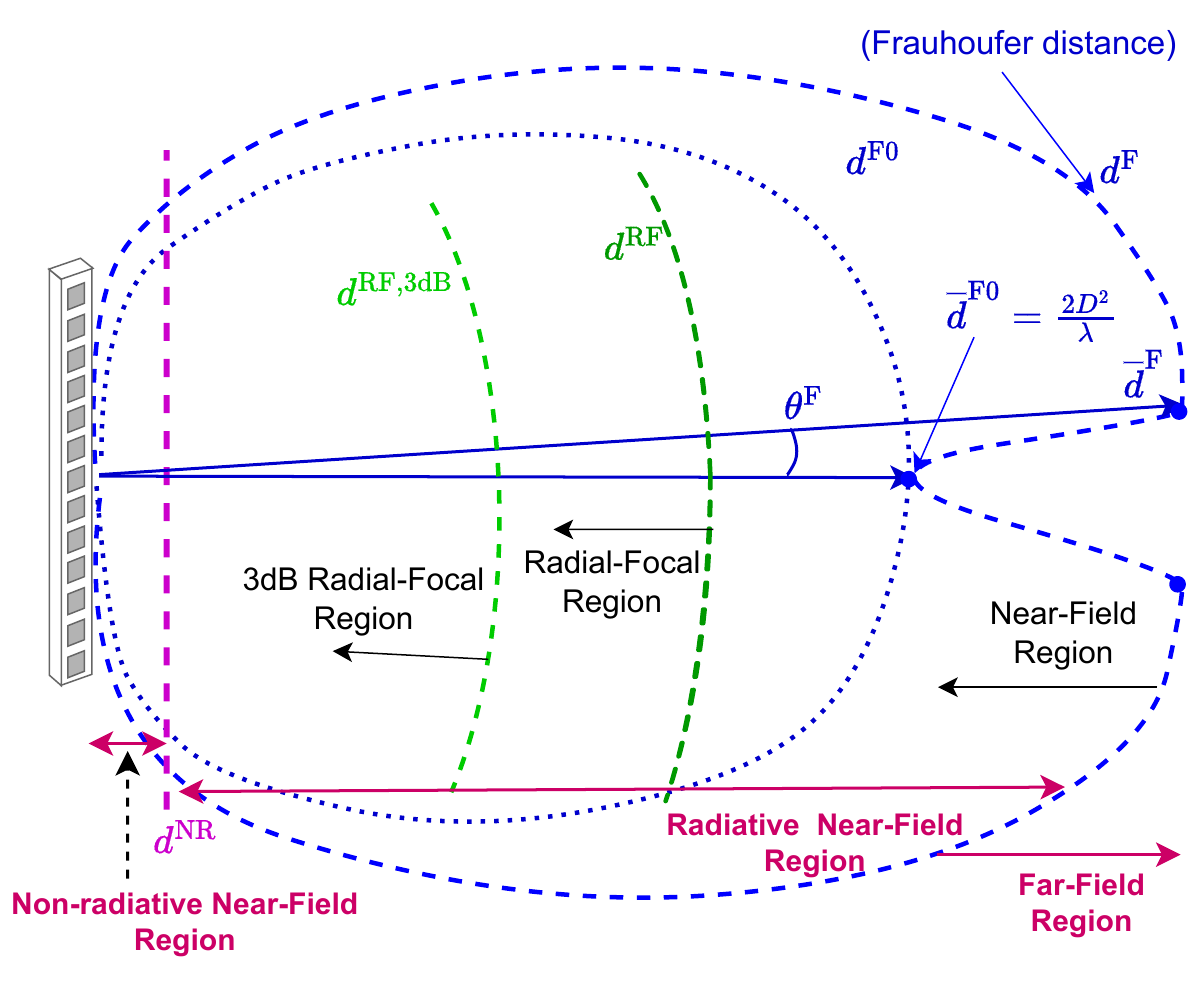}
    \caption{Various characterized near-field sub-regions for a phased array antenna (dashed lines), versus a center-fed single-element antenna having the same dimension (dotted lines).} 

    \label{fig:regions}
    % \vspace{-10mm}
\end{figure}
% \begin{figure*}
%     \centering
%  \includegraphics[width=500pt]{}
%     \caption{Near-field RA.} 

%     \label{fig:system_model}
%     % \vspace{-10mm}
% \end{figure*}
\subsection{Organization} 
The remainder of this paper is organized as follows. In Section \ref{sec:system_model} we present the system model. The characterization of the Fraunhofer region, radial focal region, and non-radiating region is presented in Sections \ref{sec:Fraunhofer_Fresnel}, \ref{sec:radial} and \ref{sec:nonradiating} respectively. %Section \ref{sec:Fraunhofer_Fresnel} explores how the Fraunhofer and Fresnel regions characterized based on single-element antenna models should be revisited for phased array antennas. The characterization of the near-field radial focal region is investigated in Section \ref{sec:radial}. In Section \ref{sec:nonradiating} we calculate and analyze the non-radiating distance for a phased array of dipole antennas and compare the results to that of a single-element antenna.
Finally, Section \ref{sec:conclusions} concludes the paper.    

\subsection{Notations} Throughout this paper, for any matrix $\Abold$, $\Abold_{m,n}$, $\Abold^{\mathrm{T}}$, $\Abold^*$, and $\Abold^\mathrm{H}$, denote the $(m,n)$-th entry, transpose,
conjugate, and conjugate transpose
respectively. Similarly, for each vector $\abold$, $\abold_{n}$, $\abold^{\mathrm{T}}$, $\abold^*$, $\abold^\mathrm{H}$, $\|\abold\|$
denote the $n$-th entry, transpose,
conjugate, conjugate transpose, and Euclidean norm
respectively.
%Finally, $\odot$ denotes the Kronecker product.

\section{System Model}
\label{sec:system_model}
This section introduces the system model used to characterize the Fraunhofer and the radial focal region, presented in Sections \ref{sec:Fraunhofer_Fresnel} and \ref{sec:radial}, respectively. Studying the non-radiating region requires a different system model that accounts for the exact expressions of the electric and magnetic fields based on Maxwell’s equations, which will be presented in Section \ref{sec:nonradiating}.
 \begin{figure}
    \centering
    \begin{tabular}{c} \includegraphics[width=234pt]{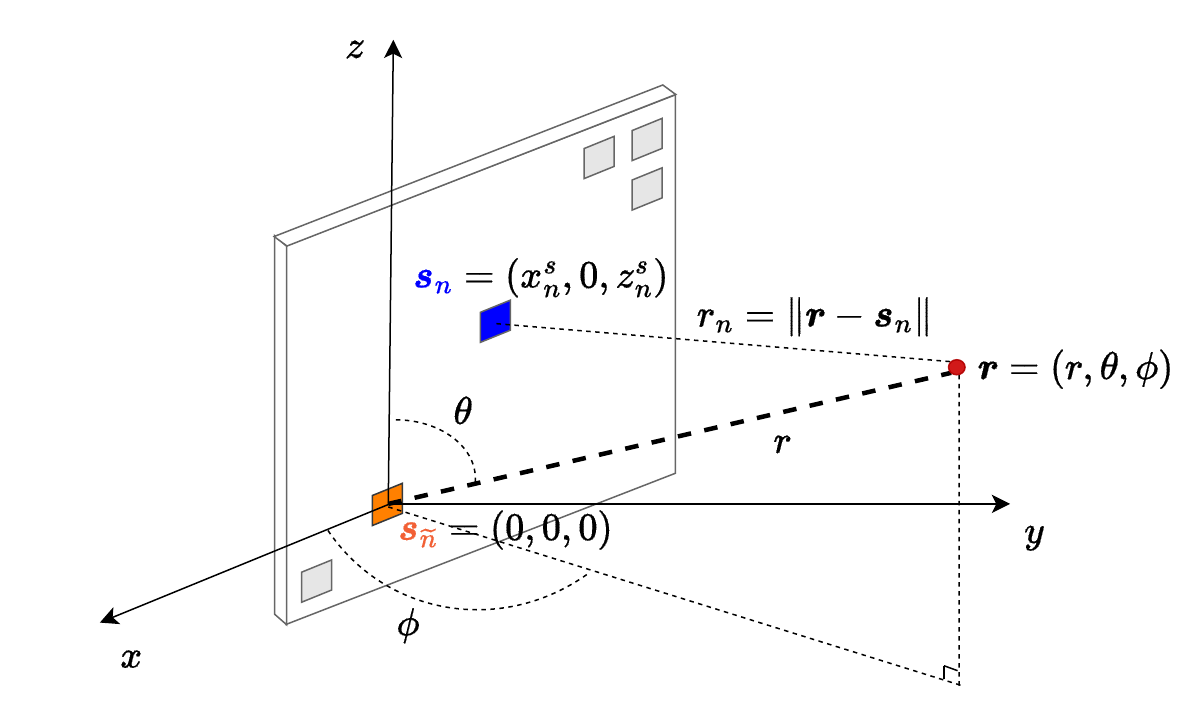}
    \\%[\abovecaptionskip]
        %\vspace{0pt}
    (a): UPA system model
    \\

    \includegraphics[width=234pt]       {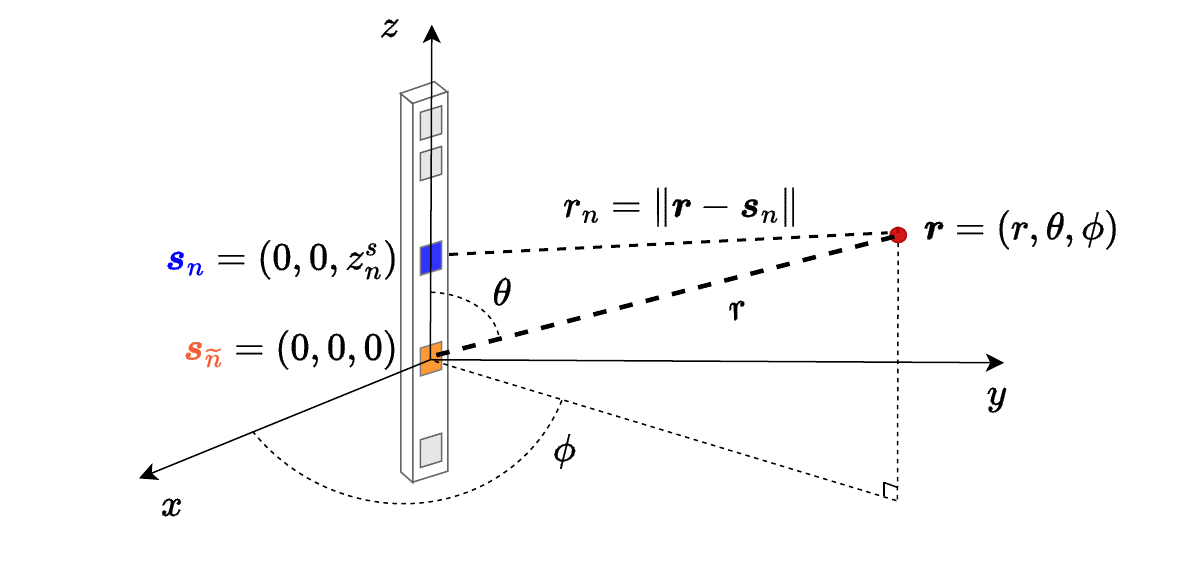}
    \\%[\abovecaptionskip]
    (b): ULA system model
    \end{tabular}
    \caption{The system model for UPA and ULA antennas.} 

    \label{fig:system_model}
    % \vspace{-10mm}
\end{figure}
Consider a phased array transmit/receive antenna consisting of $N$ radiating elements as seen in Fig. \ref{fig:system_model}, where the antenna can be a uniform planar array (UPA) located on the $xz$-plane (Fig. \ref{fig:system_model}-a) or a uniform linear array (ULA) located on the $z$-axis (Fig. \ref{fig:system_model}-b). The location of each antenna element $n$ is denoted by $\sbold_n=[x_n^{\mathrm{s}},y_n^{\mathrm{s}},z_n^{\mathrm{s}}]^{\mathrm{T}}$ in the Cartesian coordinate system, and one of the antenna elements indexed as $\widetilde{n}$ is located at the origin (i.e. ${\sbold}_{\widetilde{n}}=[0,0,0]^{\mathrm{T}}$) and considered as the {\it reference antenna element}. The reference point (which can be a transmitting or receiving point) is located at $\rbold=[r,\theta,\phi]^\mathrm{T}$ in the Spherical system, and the distance from point $\rbold$ to the antenna element $n$ is denoted by $r_n$. %(i.e., $r_n=\|\rbold-\sbold_n\|$). 
We can write $r_n$ in terms of $r$ and $\sbold_n$ as follows:
\begin{align}
\label{eq:rn_versus_r}
    r_n=\| \rbold-\sbold_n \|=\| r \kbold^{\mathrm{T}}(\theta,\phi)-\sbold_n\|\notag
    \\
    ^=\sqrt{r^2-2r\kbold^{\mathrm{T}}(\theta,\phi)\sbold_n+\|\sbold_n\|^2}
\end{align}
where $\kbold=[\sin\theta \cos \phi, \sin \theta \sin \phi, \cos\theta  ]^{\mathrm{T}}$. 
% For the UPA and ULA antennas, \eqref{eq:rn_versus_r} reduces to the following:
% \begin{align}
%     \mathrm{UPA:}\ r_n = \sqrt{r^2 -2r(\sin(\theta)\cos(\phi)x_n-\cos(\theta)z_n +(x_n^{\mathrm{s}}+z_n^{\mathrm{s}})}
% \end{align}
 If we assume that the antenna operates in transmit mode, the electric/magnetic signal at the receiving point $\rbold$ corresponding to the beamformer $\bbold$ is obtained according to the {\it general near-field channel} (GNC) model in Line-of-Sight (LoS) scenario as follows\cite{Liu2023near}:  
\begin{align}
\label{eq:E126}
    \!\!y(\rbold,\bbold)&= \big[\underbrace{\boldsymbol{l}(\rbold) \odot\gbold(\rbold) \odot \abold(\rbold)}_{\hbold(\rbold)}\big]^{\mathrm{T}} \bbold x
    \notag
    \\
    &= \sum_{n=1}^{N}\sqrt{{ {G_1(\rbold,\sbold_n)G_2(\rbold,\sbold_n)}}/{4\pi r_n^2}}e^{j \left(-\frac{2\pi}{\lambda}   r_n+\beta_n\right) }
\end{align}
% \begin{align}
%     E_n(\rbold,\sbold_n,\beta_n)=g_n(\rbold,\sbold_n) a_n(\rbold,\sbold_n) b_n
% \end{align}
where $x$ is the single stream input signal which is assumed to be unity, $\bbold=[e^{j\beta_n}]_{N\times 1}$ is the beamforming vector in which $\beta_n$ is the excitation phase of the $n$'th antenna element,  $\abold(\rbold)=[e^{-j\frac{2\pi}{\lambda}r_n}]_{N\times 1}$ is the array response, $\boldsymbol{l}(\rbold)=[{1}/{\sqrt{4\pi r_n^2}}]_{N\times 1}$ accounts for the free-space path loss, and finally $\gbold_n(\rbold)=[\sqrt{ G_1(\rbold,\sbold_n)G_2(\rbold,\sbold_n)}]_{N\times 1}$ is antenna elements gain vector in which $G_1(\rbold,\sbold_n)$ and $G_2(\rbold,\sbold_n)$ are effective aperture loss and polarization loss corresponding to antenna element $n$ respectively. %For simplicity, we consider $K=1$.
Therefore, the channel gain corresponding to the observation point $\rbold$ based on the {\it general near-field channel} (GNC) model is written as 
\begin{align}
    \mathrm{GNC\ model:}\ h_n=\frac{1}{\sqrt{4\pi}}\times \frac{g_n(\rbold)}{r_n}e^{-j\frac{2\pi}{\lambda}r_n}
\end{align}
%where $g_n(\rbold)=\sqrt{G_1(\rbold,\sbold_n)G_2(\rbold,\sbold_n)}$.
For the case where all array elements are considered isotropic point sources, we can consider $G_1(\rbold,\sbold_n)=G_2(\rbold,\sbold_n)=1$, which reduces the {\it general model} to the {\it non-uniform spherical wave} (NUSW) model presented as 
\begin{align}
    \mathrm{NUSW\ model:}\ h_n= \frac{1}{\sqrt{4\pi} r_n}  \times e^{-j\frac{2\pi}{\lambda}r_n}
\end{align}
Finally, if we approximate equal path loss for all antenna elements in the distance domain, and consider only the spherical near-field phase variations of each array element, the simplified channel model known as the {\it uniform spherical model} (USW) is expressed as
\begin{align}
    \mathrm{USW\ model:}\ h_n=\frac{1}{\sqrt{4\pi} r} \times e^{-j\frac{2\pi}{\lambda}r_n}
\end{align}

\section{Characterization of Fraunhofer Region for Off-boresight Scenarios}
\label{sec:Fraunhofer_Fresnel}
In what follows, first, we review the definitions and calculations of the on-boresight Fraunhofer distance \cite{balanis2016antenna,selvan2017fraunhofer}. Then, we extend the results to apply for the off-boresight scenarios.

\subsection{On-boresight Fraunhofer distance}
The on-boresight Fraunhofer distance is characterized by studying the wavefront's curvature through analyzing the array elements' arrival phase. Therefore, the USW near-field channel model applies here.
Consider a transmitting point source and a single-element center-fed receiving antenna with diameter $D$ as depicted in Fig. \ref{fig:Fraunhofer_single}-a. Let $r$ and $r'$ be the distance from the transmitting source to the center  (feed point) and some point $\pbold'$ on the antenna respectively, where the distance between the antenna center and point $\pbold'$ is denoted by $d'$, and the angle between the lines connecting respectively the transmitting point and $\pbold'$ to the feed is denoted by $\theta$. From \eqref{eq:rn_versus_r}, $r'$ is written in terms of $r$,  $d'$, and $\theta$ as follows:
\begin{align}
\label{eq:64328999}
    r'=%\sqrt{(x-x')^2+(y-y')^2+(z-z')^2}\big|_{(x',y',z')=(0,0,z')}
    %\\
    \sqrt{r^2+(-2rd'\cos \theta +(d')^2) }
\end{align}
By using the binomial expansion, the phase difference between the signals arrived at $\pbold'$ and the antenna center is obtained as 
% \begin{multline}
%     \Delta\theta=\frac{2\pi}{\lambda}\left(r'-r\right)=
%     \\
%     \underbrace{-\frac{\pi}{\lambda}D\cos \theta}_{\Delta\theta_1}
%     +\underbrace{\frac{\pi}{4\lambda} D^2 r^{-1}\sin^2 \theta}_{\Delta\theta_2^N}
%     +\underbrace{\frac{\pi}{8\lambda} D^3 r^{-2}\cos \theta\sin^2 \theta}_{\Delta\theta_3}
%     + ...
% \end{multline}
\begin{multline}
\label{eq:deltatheta}
    \Delta\theta=\frac{2\pi}{\lambda}\left(r'-r\right)
=
    \underbrace{\frac{-2\pi}{\lambda}\cos \theta d' }_{\Delta\theta_1}
+
\underbrace{\frac{\pi}{\lambda}  \sin^2 \theta  (d')^2 r^{-1}}_{\Delta\theta_2}
+
\\
\underbrace{\frac{\pi}{\lambda} \cos \theta\sin^2 \theta (d')^3 r^{-2}}_{\Delta\theta_3}
    + \ ...
\end{multline}
It is seen that the first term $\Delta \theta_1$ is independent of the distance $r$. %Thus, we start with the second term.%The Fraunhofer distance is defined as the distance where a maximum phase error experienced by $\Delta \phi_2$ equals $\pi/8$.
%It is well known that for most practical antennas, with overall lengths greater than a wavelength, a maximum total phase error of $\pi/8$ is not very detrimental in the analytical formulations. 
It is well known that a phase error equal to $\pi/8$  caused by the curvature of the wavefront is small enough to approximate the spherical wavefront as planar. Therefore, the Fraunhofer distance function of the antenna denoted by $\dFzero$ is defined as the boundary limit for which the maximum phase error between some point on the antenna and the feed is equal to $\pi/8$. Noting that $\Delta \theta_2$ is the main distance-dependant contributor term of $\Delta \theta$, we obtain $\dFzero$ as follows:
\begin{align}
\label{eq:Fraunhofer_single_calc}
    \dFzero=r,\ \mathrm{s.t.}\ \Delta\theta_2\bigg|_{ d'=D/2}=\pi/8.
\end{align}
From \eqref{eq:deltatheta}, %the {\it Fraunhofer distance function} 
$\dFzero$ is calculated as
\begin{align}
\label{eq:Fraunhofer_single}
    \dFzero={2D^2} \sin^2(\theta)/\lambda
\end{align}
and thus, the maximum {\it on-boresight Fraunhofer distance} corresponding to the boresight of the antenna ($\theta=\pi/2$) denoted by $\overline{d}^{\mathrm{F0}}$ is obtained as 
\begin{align}
    \label{eq:Fraunhofer_single_max}
    \dFzeroMax={2D^2}/{\lambda}
\end{align}
\begin{remark}
    Considering the Taylor series expansion of \eqref{eq:64328999}, one can verify that ignoring the terms $\Delta \theta_i,$ for $i\geq 3$ results in 
    negligible error
    on the upper bound of phase estimation for about $6\%$ relative to the target value $\frac{\pi}{8}$ for a single-element half-wavelength antenna (corresponding to $D=\lambda/2$). For phased arrays, the estimation is generally more tight. For example, considering a $5$-element ULA with inter-element spacing of half a wavelength (i.e., $D=2\lambda$), the upper bound of phase error resulted from ignoring $\Delta \theta_i,$ for $ i \geq 3$ is obtained equal to $0.2\%$. 
\end{remark}
% \begin{remark}
%     As seen in \eqref{eq:Fraunhofer_single} and Fig. \ref{fig:Fraunhofer_single}-b, the Fraunhofer distance $\dFzero$ which stands for the boundary between near-field and far-field regions is dependent on $\theta$. Therefore, as opposed to most works where the {\it Fraunhofer distance} is considered to be equal to $2D^2/\lambda$ which is independent of $\theta$,  in this work we name this as {\it maximum Fraunhofer distance}, and the angular dependant term expressed in \eqref{eq:Fraunhofer_single} as the {\it Fraunhofer distance}. %The same issue holds for the Fresnel and maximum Fresnel distances defined and obtained in the following.  
% \end{remark}

% In a similar way, if we consider $|\Delta \theta_3|\leq \pi/8$, the fourth term in \eqref{eq:deltatheta} can be neglected, and thus the {\it Fresnel distance function} denoted by $\dNzero$ is obtained by solving the following equation:  
% $
%     \dNzero=r,\ \mathrm{s.t.}\ \Delta\theta_3\big|_{ d'=D/2}=\pi/8.
% $, 
% which results in
% \begin{align}
%     \dNzero= \sqrt{ |\cos \theta|\sin^2 \theta  \frac{D^3}{\lambda}}
% \end{align}
% The {\it maximum Fresnel distance} or simply {\it Fresnel distance}  $\overline{d}^{\mathrm{N0}}$ is then obtained by solving $\partial \dNzero/\partial \theta=0$, which yields $\theta=\tan^{-1}(\pm \sqrt{2})$ and 
% \begin{align}
%     \dNzeroMax=0.62\sqrt{{D^3}/{\lambda}}
% \end{align}

\begin{figure}
    \centering
    \includegraphics[width=244pt]{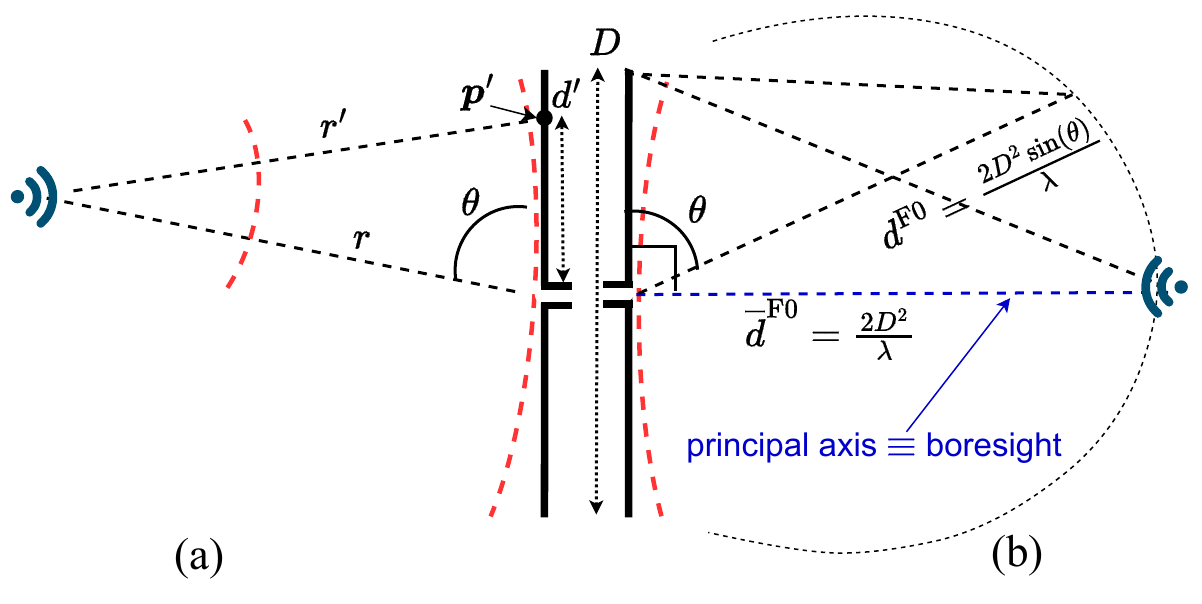}
    \caption{Antenna with diameter $D$. (a): UE located at an arbitrary distance.  (b): UE located at Fraunhofer distance on the boresight.}
    \label{fig:Fraunhofer_single}
\end{figure}

% Now we consider the phased array.

% \begin{align}
%     %r_1=\sqrt{(x-x')^2+(y-y')^2+(z-z')^2}\big|_{(x',y',z')
%     %=(0,0,- \widehat{D}/2)}=
%     r_1=&\sqrt{r^2+r\widehat{D}\cos \theta +\widehat{D}{}^2/4 }
%     \\
%     r_N=&\sqrt{r^2-r\widehat{D}\cos \theta +\widehat{D}{}^2/4 }
% \end{align}

% \begin{multline}
%     \Delta\theta=\frac{2\pi}{\lambda}\left(r_N-r_1\right)
% =
%     \underbrace{\frac{-2\pi}{\lambda}\cos \theta \Delta D }_{\Delta\theta_1}
% +
% \underbrace{\frac{\pi}{\lambda}  \sin^2 \theta  (\Delta D)^2 r^{-1}}_{\Delta\theta_2^N}
% +
% \\
% \underbrace{\frac{\pi}{\lambda} \cos \theta\sin^2 \theta (\Delta D)^3 r^{-2}}_{\Delta\theta_3}
%     + \ ...
% \end{multline}

\subsection{Characterizing the off-boresight Fraunhofer region for phased array antennas}

% \begin{multline}
%     %\Delta\theta_{n,m}=
%     \angle a_n(\rbold) - \angle a_m(\rbold)=
%     \frac{2\pi}{\lambda}\left(r_n-r_m\right)
% =
%     \underbrace{\frac{-2\pi}{\lambda}|\cos \theta| (z_n-z_m) }_{\Delta\theta_{n,m}^{(1)}}
% +
% \\
% \underbrace{\frac{\pi}{\lambda}  \sin^2 \theta  (z_n-z_m)^2 r^{-1}}_{\Delta\theta_{n,m}^{(2)}}
% +
% \underbrace{\frac{\pi}{\lambda} |\cos \theta|\sin^2 \theta (z_n-z_m)^3 r^{-2}}_{\Delta\theta_{n,m}^{(3)}}
%     + \ ...
% \end{multline}

%review
In this part, we elaborate on how the results for on-boresight Fraunhofer distance should be revisited to apply to phased array antennas for off-boresight scenarios. Consider a phased array antenna with the largest dimension of $D$. For simplicity we consider the ULA case, however, the results can be applied to the planar arrays such as UPA antennas as well. In this section, we consider $\frac{D}{\lambda}\geq 0.5$ which is commonly the case by having an inter-element spacing of half a wavelength and $N \geq 2$, or inter-element spacing lower than 0.5 (e.g., as in holographic surfaces) and a higher number of antenna elements.  %Assuming that the curvature of the wavefront is rather small on each antenna element , 
%we are interested in devising how distant the reference source point can be from the aperture in order to consider a planner  wavefront  in the whole aperture. 
As seen in Fig. \ref{fig:Fraunhofer_single}, for a single-element antenna of diameter $D$, a common practice is to calculate the Fraunhofer distance considering the maximum phase error corresponding to the maximum distance of $\pbold'$ to the center of the antenna which is equal to $D/2$. This is not however an exact scheme for calculating the Fraunhofer distance of phased array antennas in the off-boresight case. A phased array antenna consists of multiple elements, each having an independent feed. Therefore, unlike the conventional single-element center-fed antenna model, a more exact approach is to consider the maximum phase delay corresponding to the largest delay difference between any two elements of the array.
%To elaborate, considering a source point transmitting a spherical wave, depending on the distance from the source, each element of the phased array antenna receives a sinusoidal signal with a specific phase delay at its feed point. If the phase shift throughout different points of each antenna element is negligible (i.e., all array elements are in the far-field of the source point), the signal at each feed point is minimally distorted. Otherwise, the signal may exhibit some harmonic distortion. In either case, we can use the phase of the dominant harmonic of each feed point signal and compare it with the phase of any other feed point signal to assess the planarity of the wavefront. Thus, any feed point can serve as a reference for the phase delay comparison.
\begin{figure}
    \centering
    \includegraphics[width=244pt]{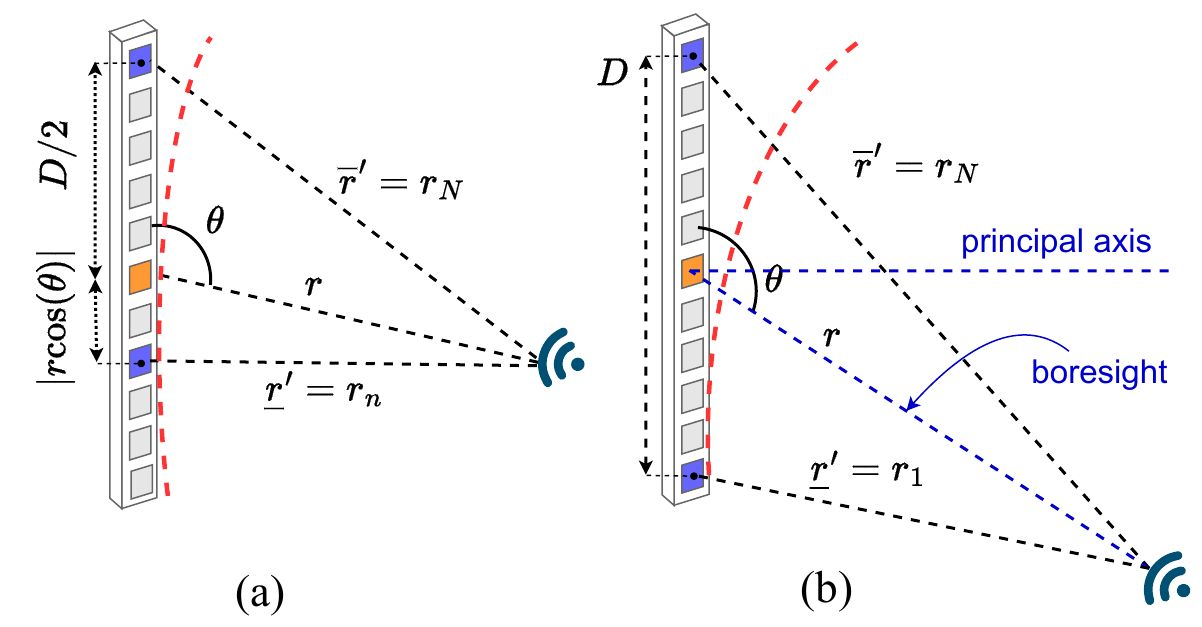}
    \caption{$N$-element ULA antenna with diameter $D$ for two scenarios regarding the position of the transmitter. 
}
    \label{fig:Fraunhofer_revisit}
\end{figure}
Fig. \ref{fig:Fraunhofer_revisit} illustrates an $N$-element ULA antenna with diameter $D$ representing two different cases for calculating the Fraunhofer distance. It is seen that the maximum phase delay is experienced for array elements with distance $\frac{D}{2}+|r\cos(\theta)|$ and $D$ corresponding to Figs. \ref{fig:Fraunhofer_revisit}-a and \ref{fig:Fraunhofer_revisit}-b respectively. {This is different from  the center-fed single-element antenna wherein the maximum phase delay has always been characterized in the literature considering the maximum distance from the geometrical center (i.e. $D/2$)}  \footnote{For the case shown in Fig. \ref{fig:Fraunhofer_revisit}-b, we have considered the approximation that the line perpendicular to the antenna plane (i.e., the line with distance $\underline{r}'$) coincides the center of some antenna element $n$.}

\begin{theorem}
\label{th:dF}
For phased array antennas, the Fraunhofer distance $\dF$ %and maximum Fraunhofer distance $\dFMax$ are respectively 
is obtained as 
\begin{align} 
\label{eq:dF234}
\dF =
\begin{cases}
     \frac{\lambda}{8}\times\frac{1-\frac{D}{\lambda}F(\theta)-\sqrt{1-\frac{2D}{\lambda}F(\theta)}}{2F(\theta)|\cos(\theta)|},
     & \text{if} \ \frac{\pi}{2}-\theta^{\mathrm{F}} \leq \theta \leq \frac{\pi}{2} +\theta^{\mathrm{F}}
 \\
8D^2\sin^2(\theta)/\lambda
    , & \text{otherwise } 
\end{cases} 
\end{align}
where $F(\theta)=8|\cos(\theta)|\sin^2(\theta)$, and $\theta^{\mathrm{F}}=\pi/2-F^{-1}(\frac{\lambda}{2D})$ in which $F^{-1}(\frac{\lambda}{2D})$ is the value of $\theta$ corresponding to the solution of $F(\theta)=\frac{\lambda}{2D}$ closet to $\pi/2$.
% \begin{align}
%     \theta^{\mathrm{F}}=\cos^{-1}\left(\sqrt{0.5+\sqrt{0.25-\left({\lambda}/{16D}\right)^2}}\right)
% \end{align}
% \begin{align}
%     \theta^{\mathrm{F}}=\cos^{-1}\left(\sqrt{0.5+\sqrt{0.25-\left({\lambda}/{16D}\right)^2}}\right)
% \end{align}
\end{theorem}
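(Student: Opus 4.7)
The strategy is to transfer the single-element phase-delay analysis of (8)--(10) to a phased array by choosing as effective feed the array element closest to the source, rather than the fixed geometric centre that suffices for a single-element antenna. With this choice, the effective aperture radius $d'$ in the criterion $\Delta\theta_2 = \pi\sin^2\theta\,(d')^2/(\lambda r) = \pi/8$ becomes the distance along the array from the closest element to the farthest one, and its form depends on whether the perpendicular foot from the source onto the array line (at axial position $z_p = r|\cos\theta|$) lies inside or outside the array.

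I first split along this geometric line. In the outer regime $z_p \ge D/2$ (Fig.~3(b)) both array endpoints sit on the same side of the foot; the closest element is the near endpoint and the farthest is the opposite one, so $d' = D$. Direct substitution into the criterion immediately yields $\dF = 8D^2\sin^2\theta/\lambda$, the second branch of (11). In the inner regime $z_p \le D/2$ (Fig.~3(a)), the closest element is approximately at the foot and the farther endpoint sits at axial distance $D/2 + z_p = D/2 + r|\cos\theta|$, so $d' = D/2 + r|\cos\theta|$.

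I then resolve the inner regime by inserting this $r$-dependent $d'$ into the criterion, obtaining the implicit equation
\begin{equation*}
\pi\sin^2\theta\,(D/2 + r|\cos\theta|)^2/(\lambda r) = \pi/8,
\end{equation*}
which rearranges into the quadratic
\begin{equation*}
8\sin^2\theta\cos^2\theta\,r^2 + \bigl(8D\sin^2\theta|\cos\theta| - \lambda\bigr)r + 2D^2\sin^2\theta = 0.
\end{equation*}
Introducing $F(\theta) = 8|\cos\theta|\sin^2\theta$, the discriminant factors cleanly as $\lambda^2\bigl(1 - 2DF(\theta)/\lambda\bigr)$; applying the quadratic formula and retaining the physically meaningful (smaller) root---the one carrying the ``$-$'' sign in front of the square root---delivers the first branch of (11).

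Finally I identify the junction angle and verify consistency. Geometrically the two regimes meet when $z_p = D/2$ at the Fraunhofer distance, i.e.\ $\dF|\cos\theta| = D/2$; algebraically they meet when the discriminant $1 - 2DF(\theta)/\lambda$ vanishes. Either condition reduces to $F(\theta) = \lambda/(2D)$, and since $F(\pi/2)=0$ with $F$ strictly monotone in a neighbourhood of $\pi/2$, this pins down a unique root $\pi/2 - \theta^{\mathrm{F}}$ there, with its mirror $\pi/2 + \theta^{\mathrm{F}}$ supplied by the symmetry $F(\pi-\theta)=F(\theta)$. Two closing checks complete the argument: a second-order expansion of $\sqrt{1-2DF/\lambda}$ shows the inner-regime expression limits to the classical on-boresight value $2D^2/\lambda$ as $\theta \to \pi/2$, reconciling with (10); and substituting $F = \lambda/(2D)$ into both branches collapses them to the common value $D/(2|\cos\theta^{\mathrm{F}}|)$, certifying piecewise continuity. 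The principal obstacle is the modelling step---recognising that in a phased array the correct effective aperture radius is $D/2 + r|\cos\theta|$ when the perpendicular foot is inside the array and $D$ when it is outside, in contrast to the constant $D/2$ appropriate for a centre-fed single element; once this identification is made, the remaining computations (quadratic solve, discriminant identification, boresight limit, junction match) are mechanical.
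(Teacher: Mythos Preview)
Your proposal is correct and follows essentially the same route as the paper: both identify the effective aperture radius as $d'=D/2+\Delta D$ with $\Delta D=\min\{|r\cos\theta|,\,D/2\}$, split into the two geometric regimes, solve the resulting quadratic in the inner regime (the paper's equation (\ref{eq:dF110})), and locate the switching angle via $F(\theta)=\lambda/(2D)$. Your additional closing checks (boresight limit $2D^2/\lambda$ and continuity at the junction) go slightly beyond what the paper spells out, but the argument is otherwise the same.
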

\begin{proof}
   Considering the two scenarios depicted in Fig. \ref{fig:Fraunhofer_revisit}-a and Fig. \ref{fig:Fraunhofer_revisit}-b, the Fraunhofer distance is calculated as follows.
\begin{align}
\label{eq:Fraunhofer_array_calc22}
    \dF=r,\ \mathrm{s.t.}\ \Delta\theta_2\bigg|_{d'=D/2 + \Delta D}=\pi/8.
\end{align}
where
\begin{align}
\label{eq:DeltaD}
    \Delta D=\min \{ |r\cos \theta |,D/2 \}.
\end{align}
From \eqref{eq:Fraunhofer_array_calc22} and \eqref{eq:DeltaD}, and considering the term $\Delta\theta_2$ expressed in \eqref{eq:deltatheta}, $\dF$ is found by solving the following equation:
\begin{align}
\label{eq:dF110}
    \frac{2D^2 }{\lambda}\sin^2(\theta)\left( 1+\min\left\{ 1, \frac{2\dF|\cos{\theta}|}{D} \right\}
    \right)^2=\dF
\end{align}
If we consider $2\dF|\cos(\theta)|< D$, after solving \eqref{eq:dF110} and applying the Taylor series, $\dF$ is found as follows.
\begin{multline}
\label{eq:dF1}
    \dF=\frac{\frac{\lambda}{8}-D|\cos(\theta)|\sin^2(\theta) -\frac{\lambda}{8}\sqrt{1-D|\cos(\theta)|\sin^2(\theta)\frac{16}{\lambda}}}
    {2\sin^2{\theta}\cos^2{\theta}}
    \\
    =
    \frac{2D^2 }{\lambda}\sin^2(\theta)
    +
    \frac{16D^3|\cos(\theta)| }{\lambda^2}\sin^4(\theta) + ...
\end{multline}
It is seen in \eqref{eq:dF1} that we have$\dF\geq 2D^2\sin^2(\theta)/\lambda$ and thus \eqref{eq:dF1}  satisfies \eqref{eq:dF110}. On the other hand, if we assume $2\dF|\cos(\theta)|\geq D$, the solution of \eqref{eq:dF110} is obtained as follows:
\begin{align}
    \label{eq:dF2}
    \dF=8D^2\sin^2(\theta)/\lambda
\end{align}
Besides, it can be verified from \eqref{eq:dF110} that by increasing $\theta$, the Fraunhofer distance undergoes a continuous switching from \eqref{eq:dF1} to \eqref{eq:dF2}  at  $\theta=\pi/2 - \theta^{\mathrm{F}}$ and then back to \eqref{eq:dF1} at $\theta=\pi/2 + \theta^{\mathrm{F}}$ wherein the following equality holds:
\begin{align}
    \frac{8}{\lambda}D^2\sin^2(\theta)=\frac{D}{2|\cos(\theta)|}, \ \mathrm{for}\ \theta=\pi/2 - \theta^{\mathrm{F}}
\end{align}
which verifies that $\theta^{\mathrm{F}}=\pi/2-F^{-1}(\frac{\lambda}{2D})$. Considering this, together with \eqref{eq:dF1} and \eqref{eq:dF2}, the Fraunhofer distance $\dF$ is obtained as \eqref{eq:dF234}.
\end{proof}
\begin{definition}
    As seen in Theorem \ref{th:dF}, for $|\pi/2 - \theta|\geq \theta^{\mathrm{F}}$, we have $d^F=8D^2\sin^2 \theta/\lambda$. We call $\theta^{\mathrm{F}}$ the {\it Fraunhofer angle}.
\end{definition}
\begin{property}
   The Fraunhofer angle is tightly approximated 
    \begin{align}
        \label{eq:tFApprox}
        \theta^{\mathrm{F}}\approx \frac{1}{2}\sin^{-1}\left({\frac{\lambda}{8{D}}} \right)
    \end{align}
   
\end{property}
 The validity of the approximation can be easily verified by considering $\theta^{\mathrm{F}}=\pi/2 - 
 F^{-1}(\frac{\lambda}{2D})\approx \pi/2-F_0^{-1}(\frac{\lambda}{2D})$ for $D/\lambda \geq 0.5$ where
% \begin{proof}
%     From Theorem \ref{th:dF} we have 
%     $\theta^{\mathrm{F}}=\pi/2-F^{-1}(\frac{\lambda}{2D})$ where 
%     \begin{multline}
%     F(\theta)=8|\cos(\theta)|\sin^2(\theta) 
%     > 8\cos^2(\theta)\sin^2(\theta)
%     \\
%     = {2}\sin^2(2\theta)
%     \equiv F_0(\theta),\forall \theta \in (0,\pi/2)
%     \end{multline}
%     Let  $\theta^{\mathrm{F0}}= F_0^{-1}(\frac{\lambda}{2D})=\frac{1}{2}\sin^{-1}\left(\sqrt{{\lambda}/{4{D}}} \right)$.
%     First, we consider the minimum value of $D/\lambda$, which is equal to 0.5. In this case, we have $\theta^{\mathrm{F0}}=F_0^{-1}(1)=0.39$ and $\theta^\mathrm{F}=F^{-1}(1)=0.37<\theta^{\mathrm{F0}}$. It can be verified that both function $F(\theta)$ and $F_0(\theta)$ are monotonically increasing functions in the interval $\theta\in[0,\theta^{\mathrm{F0}}]$, and thus the functions $\theta^{\mathrm{F}}=F^{-1}(\frac{\lambda}{2D})$ and $\theta^{\mathrm{F0}}=F_0^{-1}(\frac{\lambda}{2D})$ both monotonically decrease as the value of $D/\lambda$ increases. This, together with the inequality $F(\theta)<F_0(\theta),\forall \theta \in (0,\theta^{\mathrm{F0}}]$ results in $\theta^{\mathrm{F}}<\theta^{\mathrm{F0}}$ for any value of $D/\lambda$ higher than or equal to 0.5.
% \end{proof}
    \begin{multline}
        F(\theta)=8|\cos(\theta)|\sin^2(\theta) 
        = 
        8\cos(\theta)\sin(\theta)
        = {4}\sin(2\theta)
        \\
        \equiv F_0(\theta),
        \forall \theta \in [\underbrace{{\pi}/{2}- F^{-1}(1)}_{82.7^\circ},{\pi}/{2})
    \end{multline}
    Fig .\ref{fig:thetaF} depicts the exact value of $\theta^{\mathrm{F}}=\pi/2-F^{-1}(\lambda/2D)$ as well as the approximated value from \eqref{eq:tFApprox} with various numbers of array elements $N$ corresponding to different values of $\frac{D}{\lambda}$ for a ULA antenna with half-wavelength inter-element spacing.
   \begin{figure}
    \centering
    \includegraphics[width=244pt]{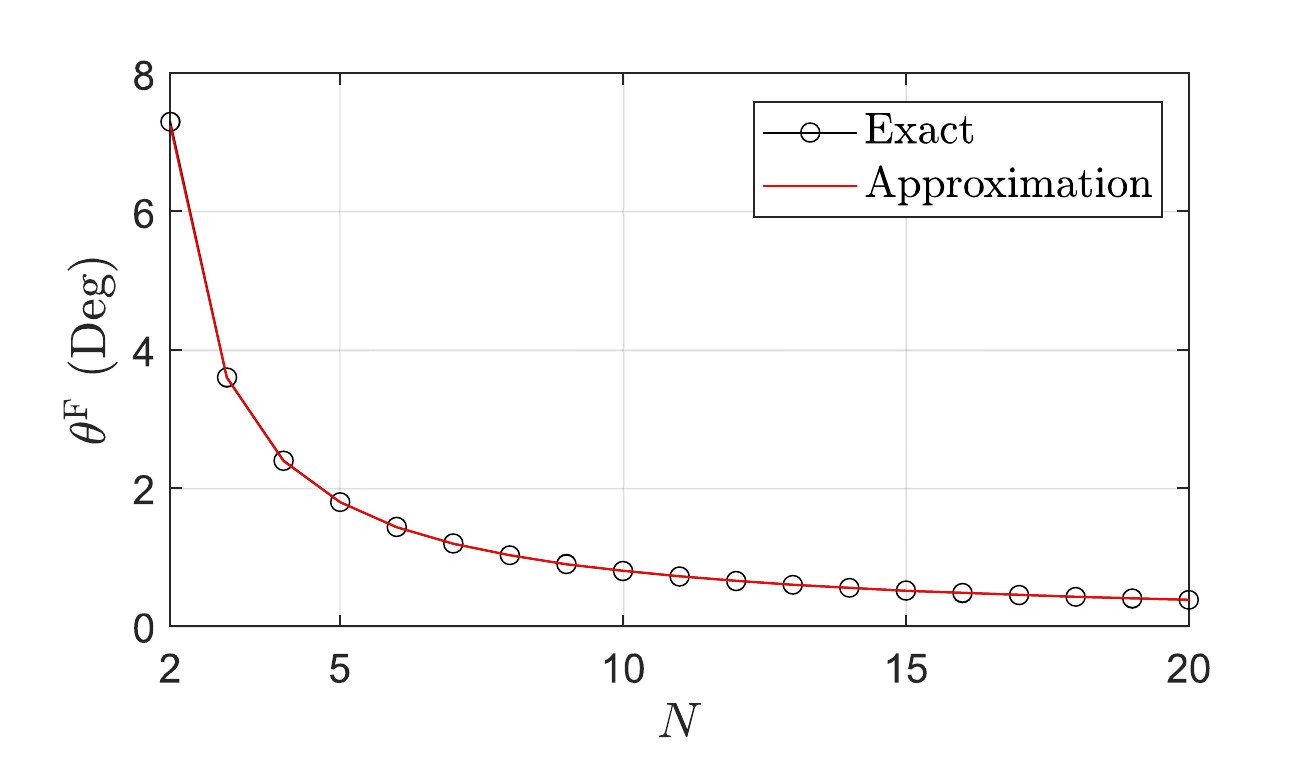}
    \caption{The exact and approximated value of $\theta^{\mathrm{F}}$ for various number of array elements $N$ for a ULA antenna with half-wavelength inter-element spacing.}
    \label{fig:thetaF}
\end{figure}

\begin{corollary}
   The maximum Fraunhofer distance is 
   \begin{align}
\label{eq:dFbar}
    \dFMax
    &=
    8D^2\cos^2(\theta^{\mathrm{F}})/\lambda\approx 4 \times \dFzeroMax
\end{align}
where $\dFzeroMax$ is the maximum boresight Fraunhofer distance of a single-element antenna having the same diameter as the intended phased array.
\end{corollary}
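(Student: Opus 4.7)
The plan is to treat $\dF$ as a function of $\theta$ on $[0,\pi]$ and locate its global maximum using the piecewise characterization from Theorem \ref{th:dF}. I would split the domain into the outer region $[0,\pi/2-\theta^{\mathrm{F}}]\cup[\pi/2+\theta^{\mathrm{F}},\pi]$ and the inner (near-boresight) region $[\pi/2-\theta^{\mathrm{F}},\pi/2+\theta^{\mathrm{F}}]$, and maximize separately on each.

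On the outer region, Theorem \ref{th:dF} gives $\dF=8D^2\sin^2(\theta)/\lambda$, so the analysis reduces to maximizing $\sin^2(\theta)$. Since $\sin^2(\theta)$ is strictly increasing on $[0,\pi/2-\theta^{\mathrm{F}}]$ and strictly decreasing on $[\pi/2+\theta^{\mathrm{F}},\pi]$, its supremum is attained at the two endpoints adjacent to the boresight, namely $\theta=\pi/2\mp\theta^{\mathrm{F}}$. At either of these points $\sin^2(\pi/2\mp\theta^{\mathrm{F}})=\cos^2(\theta^{\mathrm{F}})$, yielding the candidate value $8D^2\cos^2(\theta^{\mathrm{F}})/\lambda$.

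Next, I would rule out the inner region. The proof of Theorem \ref{th:dF} already establishes continuity of $\dF$ at $\theta=\pi/2\pm\theta^{\mathrm{F}}$ and that, as $\theta$ moves from $\pi/2\pm\theta^{\mathrm{F}}$ toward $\pi/2$, the piecewise first-branch formula decreases monotonically to the boresight value $2D^2/\lambda$. Hence on the inner region $\dF$ is strictly less than $8D^2\cos^2(\theta^{\mathrm{F}})/\lambda$, so the outer maximum is the global one. This yields the exact expression
\begin{equation*}
\dFMax=8D^2\cos^2(\theta^{\mathrm{F}})/\lambda,
\end{equation*}
attained at $\theta=\pi/2\pm\theta^{\mathrm{F}}$.

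For the approximation $\dFMax\approx 4\,\dFzeroMax$, I would invoke Property 1, which gives $\theta^{\mathrm{F}}\approx\tfrac{1}{2}\sin^{-1}(\lambda/(8D))$. In any practically relevant regime $D/\lambda\gtrsim 0.5$ this angle is small, and in the ELAA regime it is very small; therefore $\cos^2(\theta^{\mathrm{F}})\to 1$ and $8D^2\cos^2(\theta^{\mathrm{F}})/\lambda\approx 8D^2/\lambda=4\cdot(2D^2/\lambda)=4\dFzeroMax$, using the value $\dFzeroMax=2D^2/\lambda$ from \eqref{eq:Fraunhofer_single_max}. The main obstacle I anticipate is not any of the above algebraic steps, but rather being careful that the global maximum is not achieved inside the boresight-adjacent region; this is the only nontrivial point, and it is handled cleanly by appealing to the continuity and monotonicity already proved in Theorem \ref{th:dF}, rather than reanalyzing the complicated first-branch expression.
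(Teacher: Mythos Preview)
Your approach is essentially the paper's: exploit the symmetry about $\theta=\pi/2$, maximize $8D^2\sin^2(\theta)/\lambda$ on the outer region to get the candidate $8D^2\cos^2(\theta^{\mathrm{F}})/\lambda$ at $\theta=\pi/2\mp\theta^{\mathrm{F}}$, and then rule out the inner region by monotonicity of the first branch. One small correction: the monotone decrease of the first-branch expression as $\theta\to\pi/2$ is \emph{not} actually established in the proof of Theorem~\ref{th:dF}; that proof only asserts continuity at the switching angles. The paper handles this in the Corollary's own proof by stating that \eqref{eq:dF1} is (easily verified to be) monotonically decreasing on $[F^{-1}(\lambda/2D),\pi/2]$, so you will need to supply or assert that step yourself rather than deferring it to Theorem~\ref{th:dF}.
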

\begin{proof}
From \eqref{eq:dF234} we have $\dF(\theta)=\dF(\pi-\theta)$, and thus we only consider $\theta\in[0,\pi/2]$. Note that the term $\dF=8D^2\sin^2(\theta)/\lambda$ in \eqref{eq:dF234} is an increasing function of $\theta$ for $\theta\in(0,\pi/2-\theta^{\mathrm{F}}]$, and thus, we only need to show that $\dF<8D^2\sin^2(\theta)/\lambda\big|_{\theta=\pi/2-\theta^{\mathrm{F}}}= 8D^2\cos^2(\theta^{\mathrm{F}})/\lambda$  for $\theta\in[ \pi/2- \theta^{\mathrm{F}},\pi/2)$. It can be easily verified that \eqref{eq:dF1} is a monotonically decreasing function of $\theta$ for $F^{-1}(1)\leq F^{-1}(\frac{\lambda}{2D}) \leq \theta \leq \pi/2$ and thus the maximum Fraunhofer distance is obtained as \eqref{eq:dFbar}.
\end{proof}

Fig. \ref{fig:regions} schematically depicts how the Fraunhofer distance is initially increased and then decreased when moving from the on-boresight to the off-boresight scenario. 
The Fraunhofer distance versus $\theta$ for a single-element antenna (i.e., $N=1$) as well as ULAs with $N\in \{3,10,40\}$ is depicted in Fig. \ref{fig:Fraunhofer_array} by using \eqref{eq:dF234} as well as running full-wave simulations using HFSS software. It is seen as $N$ increases (corresponding to higher dimension of the antenna), the value of $\theta^{\mathrm{F}}$ rapidly decreases toward zero, and for all values of $\theta$ other than the very small region around the principal axis in the angular domain corresponding to $\theta\in[0,\pi/2-\theta^{\mathrm{F}}] \cup [\pi/2+\theta^{\mathrm{F}},\pi] $, the Fraunhofer distance is $8D^2\sin^2(\theta)/\lambda$ which is 4 times the Fraunhofer distance function $\dFzero$  corresponding to a single-element antenna with the same diameter $D$ having a centered reference point on the feed location. The full-wave simulations obtained from an array of half-wavelength dipoles verify the characterized formulation of the Fraunhofer distance. As shown in the figure, when the mutual coupling $\mu$ is set to zero, the results obtained from (11) and the full-wave simulation are in perfect agreement. By increasing $\mu$, the Fraunhofer distance is seen to be increased. The reason behind that can be expressed as follows. The mutual coupling between two adjacent elements in half-wavelength inter-element spacing dipole arrays is capacitive \cite{balanis2016antenna}; therefore, an {\bf increase} in the phase difference occurs between two neighboring elements if a non-zero mutual coupling is incorporated. This phase shift is observed for any pair of consecutive elements in the array, resulting in a corresponding increase in the maximum phase delay.
Therefore, if the reference point is located at the Fraunhofer distance characterized by considering $\mu=0$, the incorporation of a mutual coupling $\mu>0$ increases the maximum differential phase delay from $\pi/8$ to a higher value. This can be compensated by increasing the distance of the reference transmitting point source to return the corresponding delay at the receiving antenna aperture to $\pi/8$. %{\color{red}As another point, it is seen that for $\theta=\pi/2$ (i.e., on-boresight scenario), and when a mutual coupling $\mu>0$ is considered, the higher the number of antenna elements (corresponding to higher dimension of the aperture), the less the error between the ideal on-boresight Fraunhofer distance $\dFzeroMax= 2D^2/\lambda$ and the real measured value $\widetilde{d}^{\mathrm{F0}}$. For example considering $\mu=0.01$ in Fig. \ref{fig:Fraunhofer_array}, for $N=3$ ($D=\lambda$) we see from the figure that we have $\widetilde{d}^{\mathrm{F0}}=1.5\dFzeroMax$, while for $N=10$ ($D=4.5\lambda$) we have $\widetilde{d}^{\mathrm{F0}}=1.5\dFzeroMax$.}    

\begin{figure}
    \centering
    \includegraphics[width=244pt]{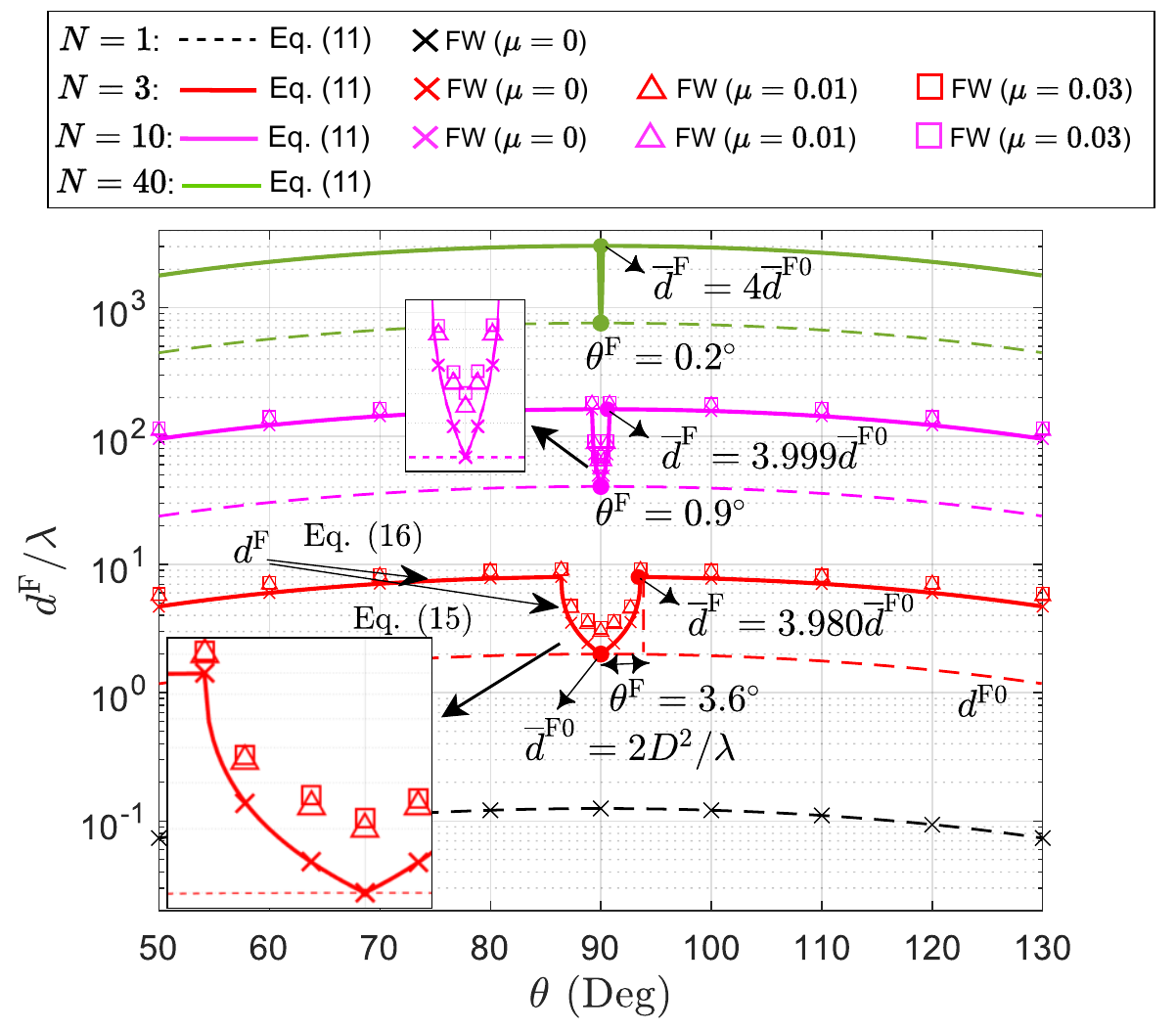}
    \caption{Fraunhofer distance per wavelength ($d^{\mathrm{F}}/\lambda$)  versus observation angle $\theta$ for $N$-element ULA with half-wavelength inter-element spacing (i.e., $D=\frac{(N-1) \lambda}{2}$). The dashed lines relate the principal axis Fraunhofer distance function $d^{\mathrm{F0}}$. The value of $\dFzeroMax$ for all curves corresponds to $2D^2/\lambda$ at $\theta=90$. The results from full-wave (FW) simulations are provided as well by considering an array of half-wavelength dipole elements with different values of mutual coupling ($\mu$). % for the case of a single-element antenna whose diameter is the same as the corresponding antenna array diameter and the reference point is the center of the antenna, where $d^{\mathrm{F0}}=2D^2\sin^2(\theta)/\lambda$, and $\overline{d}^{\mathrm{F0}}=2D^2/\lambda$ is the maximum value of $d^{\mathrm{F0}}$.
    }
    \label{fig:Fraunhofer_array}
\end{figure}

Considering that access points (APs) are typically installed at higher elevations compared to user equipment (UEs), leading to off-boresight scenarios, Fig. \ref{fig:Fraunhofer_h}  illustrates the impact of the relative height of the serving phased array antenna (denoted by $h$) on the maximum near-field coverage distance (denoted by $\overline{d}$). Three setups are considered including 2 different carrier frequencies and 2 different aperture sizes. For instance, in the case of a carrier frequency of $f=28$ GHz and an aperture dimension of $0.7\times 0.7 \ \mathrm{m^2}$, increasing $h$ from zero (on-boresight scenario) to $\overline{h}=35$ cm results in an increase in the maximum near-field coverage distance $\overline{d}$ from 183 m to 731 m. Further increasing $h$ from 35 cm to about 100 m has a negligible effect on $\overline{d}$. Finally, increasing $h$ from 100 m to 290 m reduces $\overline{d}$ to zero due to the dominance of far-field propagation.

\begin{figure}
    \centering
    \includegraphics[width=244pt]{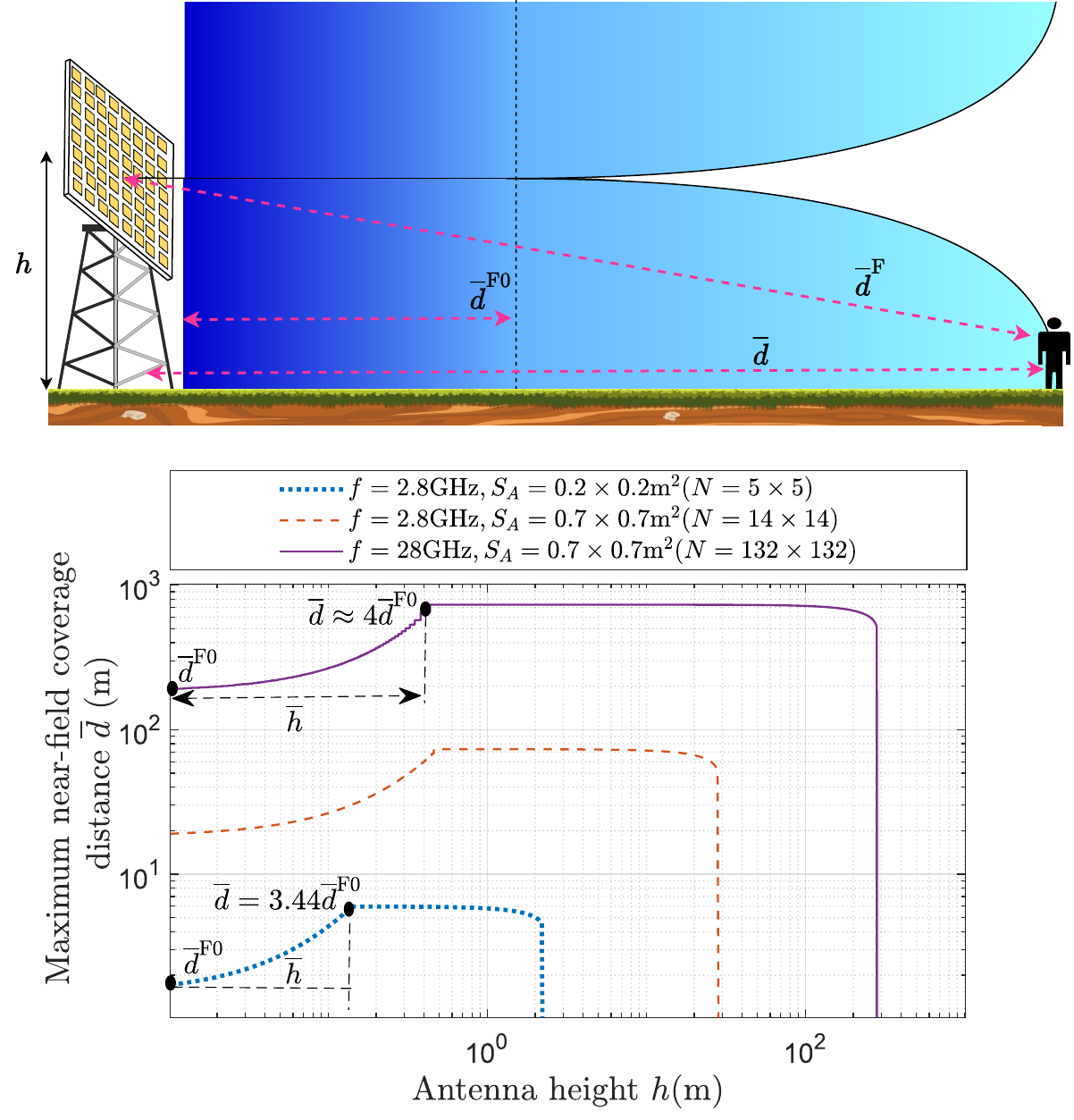}
    \caption{Maximum near-field coverage distance $\overline{d}$ for a phased array antenna at height $h$, considering the operating frequency $f\in\{2.8,28\}$ GHz and aperture area $S_A\in \{0.2\times 0.2, 0.7 \times 0.7\}\mathrm{m}^2$}
    \label{fig:Fraunhofer_h}
\end{figure}

\section{Characterization of Radial Beamfocusing Region}
\label{sec:radial}
In this section, we explore the near-field beamfocusing for phased array antennas in the radial domain. First, we present the formal definitions relating to characterizing the radial focal region and beamfocusing in the radial domain. Then we show that the maximum ratio transmission (MRT) beamformer based on the near-field uniform spherical wave model results in a focal gap between the desired focal point and the achieved focal point. We discuss the challenge of finding a closed-form solution for a radial focal point beamformer and propose an efficient algorithm for achieving this in a few iterations.

%The channel gain for each antenna element is a decreasing function of the radial distance $r$ in the far-field region. Therefore, while phased arrays effectively provide the angular domain directivity in the far-field region, this capability does not extend to the radial domain.  In the near-field, a three-dimensional directivity is observed, encompassing both the angular domain (azimuth and elevation angles) and the radial domain. Our particular focus here lies in characterizing the radial domain beamfocusing region. 
\subsection{Characterization and properties of beamfocusing in the radial domain}
Consider a transmit antenna array such as the UPA or ULA antennas illustrated in Fig. \ref{fig:system_model}. First, we start investigating the radial beamfocusing in the far-field region. In this region, we have $g_n(\rbold)\approx  \frac{1}{r},\forall n$, and besides, the polarization loss and effective aperture loss are only a function of $\theta$ and $\phi$. Thus we have
\begin{subequations}
\begin{align}
    G_1(\rbold,\sbold_n)=G_1(\rbold,\sbold_{\widetilde{n}})\equiv G_1(\theta,\phi)
    \\
    G_2(\rbold,\sbold_n)=G_2(\rbold,\sbold_{\widetilde{n}})\equiv G_2(\theta,\phi)
\end{align}
\end{subequations}
Therefore, from \eqref{eq:E126}, the received signal is formulated from the GNC model as
\begin{align}
\label{eq:E129}
    y(\rbold,\bbold)= 
    \underbrace{\sqrt{\frac{ G_1(\theta,\phi)G_2(\theta,\phi)}{4\pi r^2}} 
    e^{-j\frac{2\pi}{\lambda}r}
    }_{y_0(\rbold)}
    \underbrace{\sum_{n=1}^{N}e^{j \left(-\frac{2\pi}{\lambda}   (r_n-r)+\beta_n\right) }}_{AF}
\end{align}
where AF is the array factor. It can be verified that for the far-field, it suffices to consider only the first-order term of the Taylor series of $r_n$ in \eqref{eq:rn_versus_r}, which leads to the following:
\begin{align}
\label{eq:rn_minus_r}
    r_n-r\approx x_n^{\mathrm{s}}\sin(\theta)\cos(\phi)+z_n^{\mathrm{s}}\cos(\theta)
\end{align}
From \eqref{eq:rn_minus_r} and \eqref{eq:E129} %it is seen that the array factor in the far-field region is not a function of $r$ and the magnitude of
the magnitude of the received signal can be written as
\begin{align}
    \label{eq:absoluty_far}
    \textrm{far-field:\ \ \ }|y(\rbold,\bbold)|=A(\theta,\phi)\times \frac{1}{r}|AF(\theta,\phi,\bbold)|
\end{align}
where $A(\theta,\phi)=\sqrt{G_1(\theta,\phi)G_2(\theta,\phi)/4\pi}$ and $AF(\theta,\phi,\bbold)$ is the {\it array factor}.
It is seen in \eqref{eq:absoluty_far} that the array factor in the far-field is only a function of $\theta$ and $\phi$, and it does not depend on $r$. For the near-field, \eqref{eq:E129} can be written as
\begin{multline}
\label{eq:yarraynear}
      y(\rbold,\bbold)= 
      \underbrace{\sqrt{\frac{ G_1(\theta,\phi)G_2(\theta,\phi)}{4\pi r^2}} 
    e^{-j\frac{2\pi}{\lambda}r}
    }_{y_0(\rbold)}
    \times
    \\
    \underbrace{\sum_{n=1}^{N}
    \sqrt{
    \frac{ G_1(\rbold,\sbold_n)G_2(\rbold,\sbold_n) }{G_1(\theta,\phi)G_2(\theta,\phi)}\times\frac{r^2}{r_n^2}
    }
    e^{j \left( -\frac{2\pi}{\lambda}  (r_n-r)+\beta_n\right) }}_{AF}
\end{multline}
Therefore, the array factor is a function of $r$, as well as $\theta$ and $\phi$, and thus we have
\begin{align}
    \label{eq:absoluty_near}
    \textrm{near-field:\ \ \ }|y(\rbold,\bbold)|
    &=
    A(\theta,\phi)\times \frac{1}{r}|AF(r,\theta,\phi,\bbold)|
    \notag
    \\
    &\equiv
     A(\theta,\phi)\times \frac{1}{r}|AF(\rbold,\bbold)|
\end{align}
It is seen from \eqref{eq:absoluty_far} that for the far-field, $|y(\rbold,\bbold)|\propto \frac{1}{r}$ is always a decreasing function of $r$ and thus there exists no directivity in the radial domain. For the near-field, however, this does not hold; as seen in \eqref{eq:absoluty_near}, the relation of the array factor to $r$ can result in a non-monotonic behavior of $|y(\rbold,\bbold)|$.

% \begin{align}
% \label{eq:E895}
%     E
%     &= \frac{K}{r} A(\theta,\phi)e^{jkr}
%     \sum_{n=1}^{N}e^{j(n-1)\psi}\notag
% \\
% &= \underbrace{\frac{K}{r} A(\theta,\phi)e^{jkr}}_{E_0(r,\theta,\phi)}\times
% \underbrace{
%     e^{j[(N-1)/2]\psi}
%     \left[
%     \frac{\sin\left(\frac{N}{2}\psi\right)}{\sin\left(\frac{1}{2}\psi\right)}
%     \right]}_{AF(\theta,\phi,\betabold)}
% \end{align}
% where $\psi=kd\cos(\theta)+\beta$, and the first element relates to the field initiated from one of the  been considered as the reference point. It is seen that the signal due to the phased array is the multiplication of a single element antenna to the term array factor (AF), where AF formulates the gain of the array at different angles. As seen AF is not a function of the radial distance $r$, and besides the maximum value of array factor if $N$, which is obtained by choosing $\beta=- kd\cos(\theta)$.  It is seen from \eqref{E895} that in the far-field we have $|E|\propto \frac{1}{r}$ and thus the amplitude of the signal is a decreasing function of $r$. Now we deal with the near-field scenario where the far-field approximation is not valid. For example, for the linear array example we have

% \begin{align}
% \label{eq:E127}
%     E
%     &=\sum_{n=1}^{N}\frac{K}{r_n} A(\theta_n,\phi_n)e^{j \left(k   r_n+\beta_n\right) }
%     \notag \\
%     &=
%     K\sum_{n=1}^{N}\frac{A(\theta_n,\phi_n) e^{j \left(k   \sqrt{r^2-2rz'_n|\cos \theta| +(z')^2 }+\beta_n\right) }}{\sqrt{r^2-2rz'_n|\cos \theta| +(z'_n)^2 }} 
%     \notag \\
% \end{align}

\begin{definition}
Given the angular values $\theta$ and $\phi$, we call the point $\rbold^{\mathrm{f}}=(r^{\mathrm{f}},\theta,\phi)$ a {\it radial focal point} (RFP) corresponding to some focal beamfocusing vector $\bbold^{\mathrm{f}}$, if it corresponds to a local maximum of the absolute value of the radiated signal in the radial domain. i.e., 
\begin{align}  
\label{eq:7795}
    \frac{\partial |\frac{1}{r}AF(\rbold,\bbold^{\mathrm{f}})|}{\partial r}\bigg|_{r=r^{\mathrm{f}}}\hspace{-18pt}=0 \ \ \ \mathrm{and} \ \ \  \frac{\partial^2 |\frac{1}{r}AF(\rbold,\bbold^{\mathrm{f}})|}{\partial r^2}\bigg|_{r=r^{\mathrm{f}}}\hspace{-18pt}<0 
\end{align}
The RFP $\rbold^{\mathrm{f}}=(r^{\mathrm{f}},\theta,\phi)$ is called a {\it 3dB radial focal point} (3dB RFP), if there exists some $\underline{r}^{\mathrm{f}}<r^{\mathrm{f}}$ and $\overline{r}^{\mathrm{f}}>r^{\mathrm{f}}$
for which we have
\begin{align}
\label{eq:7796}
    \left| \frac{1}{r} AF(\rbold,\bbold^{\mathrm{f}}) \right|
    \geq
    \frac{1}{\sqrt{2}}
     \left| \frac{1}{r^{\mathrm{f}}} AF(\rbold^{\mathrm{f}},\bbold^{\mathrm{f}}) \right|, \forall r\in [\underline{r}^{\mathrm{f}}, \overline{r}^{\mathrm{f}}]
\end{align}
The minimum possible interval corresponding to $\overline{r}^{\mathrm{f}}-\underline{r}^{\mathrm{f}}$ is called the {\it 3dB radial focal depth}.
\end{definition}
% Note that in the near-field region we have $\mathcal{D}^{\mathrm{f}}(\rbold)\subset [0,2\pi]^N$, and in the far-field region $\mathcal{D}^{\mathrm{f}}(\rbold)=\O$.
\begin{definition}
\label{def:dRF}
For a given $\theta$ and $\phi$, let $\mathcal{D}^{\mathrm{f}}(r)$ and $\mathcal{D}^{\mathrm{f,3dB}}(r)$ be the  the domain of the beamfocusing vectors for which \eqref{eq:7795} and  \eqref{eq:7796} hold respectively. The {\it radial focal distance} denoted by $d^{\mathrm{RF}}$ and {\it 3dB radial focal distance} denoted by $d^{\mathrm{RF,3dB}}$ are maximum values of $r$ corresponding to a feasible RFP and 3dB RFP respectively. i.e.,
\begin{align}
    &d^{\mathrm{RF}}
    =
    \max\{r\}, \ \mathrm{s.t.}\ \mathcal{D}^{\mathrm{f}}(r)\neq \O
    \\
    &d^{\mathrm{RF,3dB}}
    =
    \max\{r\}, \ \mathrm{s.t.}\ \mathcal{D}^{\mathrm{f,3dB}}(r)\neq \O
\end{align}
\end{definition}
From Definition \ref{def:dRF}, for radial domain distance values $r\leq d^{\mathrm{RF}}$, we can find at least one beamfocusing vector   $\bbold_0\in \mathcal{D}^{\mathrm{f}}(r)$ for which the point $(r,\theta,\phi)$ is an RFP; and for $r>d^{\mathrm{RF}}$, there exists no beamfocusing vector to realize an RFP. A similar discussion holds for $d^{\mathrm{RF,3dB}}$. It should be noted that $d^{\mathrm{RF,3dB}}<d^{\mathrm{RF}}$.

% \begin{lemma}
%     For any array antenna, we have $d^{\mathrm{RF}}<d^{F0}$. 
% \end{lemma}
% \begin{proof}
%     It is clear that increasing the effective aperture loss $G1$ as well as the polarization loss $G2$ decreases $d^{\mathrm{RF}}$. Therefore, the maximum value of $d^{\mathrm{RF}}$ is realized in the ideal case of $G_1=G_2=0$. From *** and *** we have

%     \begin{align}
%        \frac{\partial 
%        \left| 
%        \sum_{n=1}^{N} \frac{1}{r_n} e^{j(kr_n+\beta_n)}
%        \right|
%        }{\partial r} 
%        \Bigg|_{ r=d^{\mathrm{RF}}}=0
%     \end{align}
%     where $r_n$ is obtained from \eqref{eq:rn_versus_r}.
    
% \end{proof}

\begin{property}
    For any array antenna, and any beamfocusing vector $\bbold$, we have $d^{\mathrm{RF}}<d^{\mathrm{F}}$.
\end{property}   

This property is directly verified by observing  \eqref{eq:absoluty_far}, which implies that in the far-field, there exists no beamfocusing vector $\bbold$ corresponding to the solution of $\frac{\partial |y(\rbold,\bbold)|}{\partial r}=0$. For example, for phased array antennas having small-size antenna patch elements and using Fresnel approximation, the authors of \cite{bjornson2021primer} have proved that the 3dB depth of focus is possible at distances lower than ${\dF}/{10}$.  

    Given the angles $\theta$ and $\phi$, let ${\bbold}^{\mathrm{MR}}$ be the {\it maximum ratio transmission} (MRT) beamformer corresponding to maximum signal amplitude, and ${\bbold}^{\mathrm{f}}$ be the {\it maximum radial focusing} beamformer corresponding to the focal point with highest amplitude at distance $r$. From \eqref{eq:absoluty_near} and Definition \eqref{def:dRF},  ${\bbold}^{\mathrm{MR}}$ and  ${\bbold}^{\mathrm{f}}$ are obtained as follows:
\begin{align}
    \label{eq:betabar}
     {\bbold}^{\mathrm{MR}}
     &=\argmax_{\bbold}{|AF(\rbold,\bbold)|} 
     \\
     \label{eq:betafbar}
   {\bbold}^{\mathrm{f}}
   &=\argmax_{\bbold}{|AF(\rbold,\bbold)|}, \ \bbold\in \mathcal{D}^{\mathrm{f}}(r)
   % \notag \\
   % &
   %  \ \ \ \ \mathrm{s.t.} \ 
   % \frac{\partial |E(r,\boldsymbol{\beta})|}{\partial r}=0 \ \mathrm{and} \ \frac{\partial^2 |E(r,\boldsymbol{\beta})|}{\partial r^2}<0
\end{align}
%We call $\bbold^{\mathrm{MR}}$ and $\bbold^{\mathrm{f}}$ as {\it maximal beamforming vector} and  {\it maximum radial beamfocusing vector} respectively.
    The solution to \eqref{eq:betabar} is trivial and is obtained as $b^{\mathrm{MR}}_n=e^{j\frac{2\pi}{\lambda}r_n}, \forall n$, however, it is not straightforward to obtain the solution to \eqref{eq:betafbar} since finding the domain $\mathcal{D}^{\mathrm{f}}(r)$ corresponding to the feasibility of \eqref{eq:7795} is a challenging issue and necessitates the incorporation of GNC or NUSW models for solving \eqref{eq:7795} which is generally intractable.
    %To show this challenge,  consider the simplest example where the observation point is located at distance $r$ on the boresight ($\theta=90^\circ$) of the ULA antenna depicted in Fig. \ref{fig:system_model}-b having half-wavelength interelement spacing, and $g_n(r,\theta)=1$ (NUSW model). Considering this simple scenario, the received signal corresponding to the beamforming vector $\bboldbar$ is obtained as
     % \begin{align}
     %     y(\rbold,\bboldbar)
     %     &= \frac{1}{\sqrt{4\pi}}\sum_{n=-\frac{N-1}{2}}^{\frac{N-1}{2}}
     %    \frac{1}{\sqrt{r^2+n^2\lambda^2/4}} e^{-j\frac{2\pi}{\lambda}u_n(r,\rbar)} \notag
     %    \\
     %    &\approx
     %     \frac{1}{\sqrt{4\pi}}\int_{-\frac{N-1}{2}}^{\frac{N-1}{2}}
     %    \frac{1}{\sqrt{r^2+n^2\lambda^2/4}} e^{-j\frac{2\pi}{\lambda}u_n(r,\rbar)}dn
     %    \end{align}
     % where $u_n(r,\rbar)=\sqrt{r^2+n^2\lambda^2/4}-\sqrt{\rbar^2+n^2\lambda^2/4}$  
 The important question that arises is whether the beamformer $\bboldbarMR$ corresponding to MRT transmission for point $\rboldbar=[\rbar,\theta,\phi]^{\mathrm{T}}$  results in a dominant radial focal point at some location $\rbold=[r,\theta,\phi]^{\mathrm{T}}$. %$\rbold^{\mathrm{f}}=(r^{\mathrm{f}},\theta,\phi)$ in the radial axis connecting $\overline{\rbold}$ and the origin of the coordinate system. 
The answer to this question provides a practical method to obtain the beamformer corresponding to a desired radial focal point at  $\rboldf$ as will be elaborated in Section \ref{sec:radial_alg}. %It can be shown through various numerical results that in most practical scenarios, there exists at least one focal point with radial distance $r^{\mathrm{f}}<\overline{r}$ provided that $\overline{r}<d^{\mathrm{RF}}$ and the near-field effect is dominant.
\begin{property}
\label{prop:MRRR}
\label{prop:mfp}
    Let $\bboldbarMR$ be the MRT beamformer corresponding to point $\rboldbar=(\rbar,\theta,\phi)$, and let $\rboldf=(\rf,\theta,\phi)$ be the  radial focal point closest to $\overline{\rbold}$ in the radial domain, called as {\it dominant radial focal point} with respect to $\overline{\rbold}$. We have $|y(\rbold^{\mathrm{f}},\bboldbarMR)|>|y(\overline{\rbold},\bboldbarMR)|$ and $\rf<\rbar$.
\end{property}

% \begin{definition}
%     Given a beamfocusing vector $\bbold$, a focal point $\rbold^{\mathrm{f}}=(r^{\mathrm{f}},\theta,\phi)$ is called as {\it closest dominant focal point} with respect to point $\rbold=(r,\theta,\phi)$, the magnitude $|y(r^{\mathrm{f}},\beta)|>|E(r,\beta)|$
% \end{definition}
% {\it Property 1:}  Consider the excitation vector $\overline{\boldsymbol{\beta}}$  relating to point $\overline{\rbold}=(\overline{r},\theta,\phi)$ resulting in a dominant focal point $\rbold^{\mathrm{f}}=(r^{\mathrm{f}},\theta,\phi)$.  We have $r^{\mathrm{f}}\leq \overline{r}$.

To justify Property \ref{prop:MRRR}, note that $\bboldbarMR$ results in the highest signal amplitude at point $\rboldbar=(\rbar,\theta,\phi)$. Increasing the radial distance from $\overline{r}$  to some distance $r>\overline{r}$ decreases the signal level because the completely coherent arrival of the signals from different array elements does not exist anymore, and besides, the channel gains of all antenna elements are decreased due to higher distance. %Even if a complete coherence exists at some point $\rbold$ with $r> \rbar$, the channel gains are still lower due to higher distance, leading to a weaker signal. 
Therefore, the existence of a dominant focal point is not possible at $r> \overline{r}$. 
Decreasing the radial distance from $\overline{r}$ to some value $r<\overline{r}$, however, has two opposite effects. On one hand, changing the location results in the violation of a fully coherent arrival of the signals, leading to a weaker signal, and on the other hand, increases the array factor due to lower values of $r_n$ in \eqref{eq:yarraynear}. 
The point where the former attenuating factor begins to dominate the latter intensifying one is the focal point $\rbold^{\mathrm{f}}$, wherein the signal magnitude is higher than $\overline{\rbold}$ due to the monotonically decreasing property of $|y|$ around $\rbar$ in the radial domain.   
\begin{definition}
    Let $\bboldbarMR$ be the MRT beamformer corresponding to the point $\rboldbar$, and $\rboldf$ be the resulting dominant radial focal point. The {\it radial focal gap} corresponding to the point $\rboldbar$ is defined as $G(\rboldbar)=\rbar-\rf$.
\end{definition}
\subsection{Radial Focal Gap and Spot Beamfocusing in the asymptotic scenario}
In this section, we study the radial focal gap in the asymptotic case where the number of antenna elements tends to infinity. The following Property will be used in providing the proof for the next Theorem.
\begin{property} 
\label{prop:1}
  Considering the USW normalized channels, for two points $\rbold_1\neq \rbold_2$, the array response has 2D asymptotic orthogonality for the ULA and 3D asymptotic orthogonality for the UPA if the number of antennas $N$ is sufficiently large. This property is mathematically expressed as follows:
\begin{align}
    \lim_{N\rightarrow \infty} \frac{1}{N} |\abold^{\mathrm{T}} (\rbold_1) \abold^*(\rbold_2)|=0
\end{align}
where $\rbold_i=(r_i,\theta_i)$ in the 2D domain, and $\rbold_i=(r_i,\theta_i,\phi_i)$ in the 3D domain.
\end{property}

\begin{proof}
    Please refer to \cite{10123941}.
\end{proof}
It can be demonstrated through various simulations that the radial focal gap $\rbar-\rf$ approaches arbitrarily small values for many antenna structures if they have a sufficiently large number of array elements. The following theorem provides a sufficient condition for this.
\begin{theorem}
\label{th:gap}
     Let $\rboldf=(\rf,\theta,\phi)$ be the dominant radial focal point resulting from the MRT beamformer $\bboldbarMR$ corresponding to the point $\rboldbar$, and assume that the antenna elements gain vector $\gbold(\rbold)$ can be approximated by a set of piece-wise uniform gain vectors (i.e., $\gbold(\rbold)_{N\times 1}\approx[\gbold_1^{\mathrm{T}}(\rbold),\gbold_2^{\mathrm{T}}(\rbold), ....,\gbold_M^{\mathrm{T}}(\rbold)]^{\mathrm{T}}$, where  $\gbold_m(\rbold)$ is $\frac{N}{M}\times 1$, and   $[\gbold_m(\rbold)]_{n_1}=[\gbold_m(\rbold)]_{n_2}\equiv g_m,\forall n_1,n_2,m$. In the asymptotic case, the radial focal gap vanishes, i.e., $         \lim_{N\rightarrow\infty} \left(\overline{r}- r^{\mathrm{f}} \right)=0$,
     and besides, a 3D spot-like beamfocusing is achieved around $\rboldbar$.
\end{theorem}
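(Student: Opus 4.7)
The plan is to show that, in the asymptotic regime, the MRT beamformer $\bboldbarMR$ delivers a signal whose magnitude at the intended point $\rboldbar$ dominates its magnitude at every other point in a \emph{vanishing ratio} sense. Combined with Property~\ref{prop:mfp}, this will force the dominant radial focal point $\rboldf$ to coincide with $\rboldbar$ in the limit, which simultaneously gives the zero-gap conclusion and the 3D spot-beamfocusing claim.

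First, I would substitute $\bMR_n = e^{j\frac{2\pi}{\lambda}\bar r_n}$ into the GNC expression of the array factor in \eqref{eq:yarraynear}. Exploiting the piecewise-uniform-gain hypothesis, I would partition the array into the $M$ subarrays $\{\mathcal{S}_m\}_{m=1}^M$ with common gain $g_m$ and rewrite
\begin{align}
y(\rbold,\bboldbarMR) \;=\; \frac{y_0(\rbold)}{\sqrt{4\pi}}\,\sum_{m=1}^{M} g_m \sum_{n\in \mathcal{S}_m} \frac{r}{r_n}\, e^{-j\frac{2\pi}{\lambda}(r_n-\bar r_n)}.
\end{align}
Inside each subarray, a first-order Taylor expansion in the transverse coordinate lets me factor a slowly varying amplitude $r/r_n \approx 1$ out of the inner sum, so that the remaining object is, up to normalization, the inner product between the USW array responses $\abold(\rbold)$ and $\abold(\rboldbar)$ restricted to $\mathcal{S}_m$.

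Second, I would invoke Property~\ref{prop:1} subarray-by-subarray: for fixed $M$ and $|\mathcal{S}_m|\to\infty$, the normalized inner product $\frac{1}{|\mathcal{S}_m|}|\abold_{\mathcal{S}_m}^{\mathrm T}(\rbold)\abold_{\mathcal{S}_m}^*(\rboldbar)|$ tends to zero whenever $\rbold\neq\rboldbar$. Summing across $m$ with weights $g_m$ that are uniformly bounded by the antenna pattern, I obtain that for every fixed $\rbold\neq\rboldbar$,
\begin{align}
\lim_{N\to\infty}\frac{|y(\rbold,\bboldbarMR)|}{|y(\rboldbar,\bboldbarMR)|}\;=\;0,
\end{align}
since at $\rbold=\rboldbar$ the inner sums add coherently and grow linearly with $|\mathcal{S}_m|$, whereas at any other point they grow strictly sub-linearly. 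The same argument handles off-axis points $(r,\theta',\phi')$ with $(\theta',\phi')\neq(\theta,\phi)$, which yields the spot-like (3D) concentration claim once the zero-gap claim is established.

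Finally, I would combine this with Property~\ref{prop:mfp}, which guarantees at finite $N$ that the dominant focal point satisfies $|y(\rboldf,\bboldbarMR)|>|y(\rboldbar,\bboldbarMR)|$ with $\rf<\bar r$. If $\bar r - \rf$ did not converge to zero along some subsequence, one could extract a limit point $\rboldf_\infty\neq\rboldbar$ on the same radial ray, and the vanishing-ratio bound above would give $|y(\rboldf_\infty,\bboldbarMR)|/|y(\rboldbar,\bboldbarMR)|\to 0$, contradicting Property~\ref{prop:mfp}. Hence $\lim_{N\to\infty}(\bar r - \rf)=0$; applying the analogous contradiction in the $(\theta,\phi)$ directions yields the 3D spot-beamfocusing conclusion around $\rboldbar$.

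The main obstacle will be step two: Property~\ref{prop:1} is stated for the USW channel with a uniform element gain, whereas the statement here is for GNC/NUSW channels whose amplitudes $g_n(\rbold)/r_n$ vary across the aperture. The piecewise-uniform-gain hypothesis is precisely the device that reduces the NUSW/GNC inner product to a finite sum of USW-type inner products, one per subarray; the subtle point is ensuring that the residual $r/r_n$ amplitude modulation inside each subarray is negligible uniformly in $\rbold$ over the compact neighborhood of $\rboldbar$ where the dominant focal point must lie, so that Property~\ref{prop:1} can be invoked cleanly.
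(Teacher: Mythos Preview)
Your proposal is correct and follows essentially the same approach as the paper, which simply states that ``the proof can be followed by employing Property~\ref{prop:1}'' and omits all details. You have in fact supplied considerably more than the paper does: the subarray decomposition that converts the GNC/NUSW inner product into a finite sum of USW-type inner products (so that Property~\ref{prop:1} applies block-by-block), and the contradiction argument via Property~\ref{prop:mfp} to pass from the vanishing-ratio statement to the vanishing-gap conclusion, are exactly the ingredients one would expect but which the paper leaves implicit. The residual subtlety you flag---that the piecewise-uniform hypothesis controls $g_n(\rbold)$ but not the $r/r_n$ amplitude modulation inside each block---is real and is not addressed in the paper's one-line proof either; it is effectively the content of Theorems~\ref{th:ULA_asymptotic_ratio} and~\ref{th:UPA_asymptotic_ratio}, which handle the NUSW amplitude variation directly for ULA/UPA geometries.
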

\begin{proof}
    The proof can be followed by employing Property \ref{prop:1}. The details are omitted here due to conciseness.
\end{proof}

It is worth noting that Theorem \ref{th:gap} establishes a sufficient condition for ensuring the disappearance of the asymptotic radial focal gap. Even though the conditions of the theorem may not hold for many antenna structures such as ULA or UPA, we can prove that the asymptotic disappearance of radial focal gap under the NUSW model holds for many antenna array structures. In what follows we present two theorems to verify this for ULAs and UPAs. %through which we may prove the vanishing focal gap and the capability of 3D {\it spot beamfocusing} for such antennas. 

\begin{theorem}
\label{th:ULA_asymptotic_ratio}
    %The normalized near-field channel response vectors under the NUSW model have asymptotic orthogonality.
    Consider a ULA antenna and two  points  $\rbold=[r,\theta]^{\mathrm{T}}$ and $\rboldbar=[\rbar,\thetabar]^{\mathrm{T}}$ in the 2D domain in the near-field region of the antenna %where $r$ and $\rbar$ exceed the Fresnel distance $\dN$,
    and let $\bboldbarMR$ be  the MRT beamformer corresponding to $\rboldbar$. Under the NUSW model, the normalized signal amplitude at any point $\rbold\neq \rboldbar$ ($r\neq \rbar$ or $\theta\neq \thetabar$) asymptotically tends to zero as $N$ increases. i.e.,%, provided that $r$ and $\rbar$ exceed the Fresnel boundary.
    \begin{align}
        \lim_{N\rightarrow \infty}
        \kappa_N=\lim_{N\rightarrow \infty}\frac{\big|y(\rbold,\bboldbarMR)\big|}{\big|y(\rboldbar,\bboldbarMR)\big|}=0
    \end{align}
\end{theorem}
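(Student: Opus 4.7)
The plan is to substitute the MRT beamformer $\bar{b}^{\mathrm{MR}}_n = e^{j\frac{2\pi}{\lambda}\bar{r}_n}$ into the NUSW channel and express the ratio as
\[
\kappa_N = \frac{\bigl|\sum_{n=1}^{N} \tfrac{1}{r_n}\, e^{j\frac{2\pi}{\lambda}(\bar{r}_n - r_n)}\bigr|}{\sum_{n=1}^{N} \tfrac{1}{\bar{r}_n}}.
\]
The denominator is a sum of positive reals whose growth with $N$ is driven purely by geometry and carries no cancellations, while the numerator is an oscillating sum whose phases $\psi_n := \frac{2\pi}{\lambda}(\bar{r}_n - r_n)$ vary non-trivially with $n$ whenever $\rbold \neq \rboldbar$. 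The heart of the proof is therefore to quantify the destructive interference in the numerator and show that it is of strictly smaller order than the positive accumulation in the denominator.

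First I would separate the amplitude weights $1/r_n$ from the pure phases by Abel's summation-by-parts, recasting the numerator as a linear combination of the partial sums $S_n = \sum_{k\le n} e^{j\psi_k}$. Those $S_n$ are precisely the USW array-response inner products to which Property \ref{prop:1} applies, so $|S_n|/n \to 0$. Combining this with the elementary ULA estimate $|r_n^{-1} - r_{n+1}^{-1}| = O(n^{-2})$ at the aperture edges, the Abel decomposition yields a numerator of order $o(\log N)$. Meanwhile, an integral approximation $\sum_n 1/\bar{r}_n \approx \frac{1}{d}\int dz/\bar{r}(z)$ shows the denominator grows as $\Theta(\log N)$ for fixed $\bar{r}$ and half-wavelength inter-element spacing. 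Dividing then delivers $\kappa_N \to 0$. An equally valid route is the piecewise-uniform amplitude partition used in the proof of Theorem \ref{th:gap}: chop the ULA into $M$ sub-arrays within which $1/r_n$ is nearly constant, apply Property \ref{prop:1} to each sub-array's (essentially USW) inner product, and let $M=M(N)\to\infty$ slowly enough that the partition error vanishes while each sub-array remains asymptotically large.

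The main obstacle will be making Property \ref{prop:1}'s $o(N)$ rate quantitative and uniform enough to dominate the amplitude-difference weights emerging from Abel's transformation: informally, one needs a rate $|S_n|\le n\epsilon_n$ with $\epsilon_n\to 0$ such that $\sum_n \epsilon_n/n = o(\log N)$, which follows by Kronecker's lemma provided $\epsilon_n$ genuinely vanishes. The real content is therefore a non-stationary-phase estimate verifying that $\psi'(z)$, viewed as a continuous function of the element coordinate $z$ along the ULA, has no zero inside the aperture whenever $\rbold\neq\rboldbar$; such a coincidence would force $\bar{r}'(z)=r'(z)$ on an open set and hence $\rbold=\rboldbar$. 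A uniform lower bound on $|\psi'(z)|$ then supplies the missing decay rate and closes the argument.
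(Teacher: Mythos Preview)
Your overall plan---show the denominator $\sum_n 1/\rbar_n$ grows like $\log N$ while the oscillatory numerator is $o(\log N)$---matches the paper's, but the execution differs. The paper does not use Abel summation or Property~\ref{prop:1}; it Taylor-expands the phase as $\frac{2\pi}{\lambda}(r_n-\rbar_n)\approx \mathrm{const}+\eta_1 n+\eta_2 n^2$, replaces sums by integrals, and bounds the resulting oscillatory integrals $\int \cos(\eta_1 n+\eta_2 n^2)\big/\sqrt{r^2-an+\delta^2n^2}\,dn$ (and its sine counterpart) via the cosine/sine-integral functions $\mathrm{ci}$, $\mathrm{si}$, which are finite. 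The denominator integral is shown to diverge logarithmically. Your Abel/Property~\ref{prop:1} route is more conceptual and in principle viable, but it rests on a claim that is false as stated.

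The concrete gap is your stationary-phase assertion. You claim $\psi'(z)$ has no zero whenever $\rbold\neq\rboldbar$, arguing that a zero would force $\rbar'(z)=r'(z)$ on an open set. That inference is wrong: an isolated zero of $\psi'$ is only a pointwise coincidence. A direct counterexample is the purely radial case $\theta=\thetabar$, $r\neq\rbar$: on boresight ($\theta=\pi/2$) one has $r'(z)=z/\sqrt{r^2+z^2}$ and $\rbar'(z)=z/\sqrt{\rbar^2+z^2}$, so $\psi'(0)=0$ exactly even though $\rbold\neq\rboldbar$. More generally, whenever $\theta=\thetabar$ the linear coefficient $\eta_1=\frac{2\pi\delta}{\lambda}(\cos\thetabar-\cos\theta)$ vanishes and the phase is locally quadratic. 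Even away from interior zeros, $\psi'(z)\to 0$ as $|z|\to\infty$ (both $r'(z)$ and $\rbar'(z)$ tend to $\mathrm{sgn}(z)$), so no uniform lower bound on $|\psi'|$ over the growing aperture can exist. This is exactly why the paper's proof splits into the two cases $\eta_1\neq 0$ and $\eta_2\neq 0$ and uses the quadratic term (a Fresnel-integral bound, again reduced to $\mathrm{ci}$/$\mathrm{si}$) to handle the $\theta=\thetabar$ situation. Your Abel argument can be salvaged by replacing the first-derivative lower bound with a second-derivative (van~der~Corput--type) estimate, but as written the step you identify as ``the real content'' fails.
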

\begin{proof}
     See \textbf{Appendix A.}
\end{proof}

\begin{theorem}
    \label{th:UPA_asymptotic_ratio}
    Consider a UPA antenna and two points  $\rbold=[r,\theta,\phi]^{\mathrm{T}}$ and $\rboldbar=[\rbar,\thetabar,\overline{\phi}]^{\mathrm{T}}$ in the 3D domain in the near-field region of the antenna. %where $r$ and $\rbar$ exceed the Fresnel distance $\dN$
    Let $\bboldbarMR$ be the MRT beamformer corresponding to $\rboldbar$. Under the NUSW model, the normalized signal amplitude at any point $\rbold\neq \rboldbar$ ($r\neq \rbar$ or $\theta\neq \thetabar$ or  $\phi\neq \overline{\phi}$) asymptotically tends to zero as $N$ increases. i.e.
    \begin{align}
        \lim_{N_1,N_2\rightarrow \infty}
        \kappa_{N_1,N_2}=\lim_{N_1,N_2\rightarrow \infty} \frac
{\big|y(\rbold,\bboldbarMR)\big|}{\big|y(\rboldbar,\bboldbarMR)\big|}=0
    \end{align}
\end{theorem}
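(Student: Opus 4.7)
The plan is to mirror the approach used in the proof of Theorem \ref{th:ULA_asymptotic_ratio} (Appendix A) and extend it to the three-dimensional UPA geometry. I would first substitute the NUSW channel $h_n = \frac{1}{\sqrt{4\pi}\,r_n}\, e^{-j\frac{2\pi}{\lambda} r_n}$ together with the MRT coefficients $b_n^{\mathrm{MR}} = e^{j\frac{2\pi}{\lambda}\bar r_n}$ into \eqref{eq:E126}, which reduces the quantity of interest to
\begin{equation*}
\kappa_{N_1,N_2} = \frac{\left| \sum_{n_1=1}^{N_1}\sum_{n_2=1}^{N_2} \frac{1}{r_{n_1,n_2}}\, e^{j\frac{2\pi}{\lambda}(\bar r_{n_1,n_2} - r_{n_1,n_2})} \right|}{\sum_{n_1=1}^{N_1}\sum_{n_2=1}^{N_2} \frac{1}{\bar r_{n_1,n_2}}},
\end{equation*}
where $r_{n_1,n_2}$ and $\bar r_{n_1,n_2}$ denote the distances from the UPA element at $\sbold_{n_1,n_2}=[(n_1-1)d,\,0,\,(n_2-1)d]^{\mathrm{T}}$ to $\rbold$ and $\rboldbar$, respectively. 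A simple upper bound $\bar r_{n_1,n_2}\leq \bar r + (N_1+N_2)d$ shows that the denominator grows unboundedly as $N_1,N_2\to\infty$, so it suffices to establish strictly slower growth of the numerator.

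For the numerator, I would exploit the near-field Taylor expansion of $\bar r_{n_1,n_2}-r_{n_1,n_2}$ in the array coordinates $(n_1 d,\, n_2 d)$ and decompose the exponent into a leading part that is separable in $n_1$ and $n_2$ plus a higher-order remainder. The separable leading part factors the double exponential sum into a product of two one-dimensional phase sums, each of which can be controlled by the phase-ramp cancellation argument used for the ULA in Theorem \ref{th:ULA_asymptotic_ratio}. To neutralize the non-separable amplitude weight $1/r_{n_1,n_2}$, I would adopt the piecewise-uniform-gain device introduced in Theorem \ref{th:gap}: partition the UPA into a fixed number $M$ of rectangular tiles on which $1/r_{n_1,n_2}$ is essentially constant, factor the amplitude out of each tile-sum, apply the separable phase argument tile-by-tile, and finally sum over the tiles; since $M$ is held fixed while $N_1,N_2\to\infty$, each tile itself satisfies the asymptotic-orthogonality condition of Property \ref{prop:1}.

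The main obstacle I anticipate concerns the degenerate sub-cases in which only one or two of the three coordinate differences $r-\bar r$, $\theta-\thetabar$, $\phi-\overline{\phi}$ are nonzero: in those regimes the separable decomposition collapses so that one of the two 1D phase factors becomes trivially unit-modulus, forcing all cancellation onto the remaining direction. Each such sub-case must be verified individually, using that the surviving 1D sum is of exactly the form already controlled by Theorem \ref{th:ULA_asymptotic_ratio}. A secondary technical issue is justifying the joint limit $N_1,N_2\to\infty$ (rather than iterated) when invoking Property \ref{prop:1}; I expect this can be handled via the tile decomposition with a mild growth condition on $M$ relative to $N_1,N_2$, but this deserves care and is the step I would budget the most effort for.
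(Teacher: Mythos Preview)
Your proposal is plausible but takes a route that differs from the paper's in two substantive ways. The paper does \emph{not} attempt to separate the phase into a product of one-dimensional factors, nor does it invoke the tiling device of Theorem~\ref{th:gap} or Property~\ref{prop:1}. Instead it proceeds exactly as in Appendix~A but in two array indices: it Taylor-expands $\frac{2\pi}{\lambda}(r_{n_1,n_2}-\bar r_{n_1,n_2})$ to second order, obtaining a full bivariate quadratic phase $\widetilde u(n_1,n_2)=\zeta_1 n_1+\zeta_2 n_2+\zeta_3 n_1^2+\zeta_4 n_2^2+\zeta_5 n_1 n_2$ (the cross term $\zeta_5 n_1 n_2$ is kept), replaces the double sum by a double integral, and then argues directly that the denominator integral $C_{N_1,N_2}$ diverges while the oscillatory numerator integrals $A_{N_1,N_2},B_{N_1,N_2}$ stay bounded whenever at least one $\zeta_i\neq 0$, using the same cosine/sine-integral bookkeeping as in the ULA case. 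This is shorter and avoids both the tile-count/tile-size tension you flag and any separate treatment of degenerate coordinate sub-cases: those are absorbed into the single condition ``some $\zeta_i\neq 0$.''

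Your plan can be made to work, but note one concrete weakness in the separable-phase step: the cross term $\zeta_5 n_1 n_2$ is of the \emph{same} quadratic order as $\zeta_3 n_1^2$ and $\zeta_4 n_2^2$, so it cannot be pushed into a ``higher-order remainder'' in any uniform sense; in particular, in the purely radial sub-case $\theta=\thetabar,\ \phi=\overline\phi,\ r\neq\bar r$ the linear coefficients vanish and the cross term carries part of the leading behaviour. If you keep the tiling idea, you are effectively reproving Property~\ref{prop:1} under NUSW amplitudes, and you will need $M\to\infty$ slowly (so that the amplitude is approximately constant on each tile) while each tile still has enough elements for the USW orthogonality to kick in; this is doable but is exactly the delicate double limit you anticipated. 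The paper's direct 2D oscillatory-integral argument sidesteps all of this.
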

\begin{proof}
  See \textbf{Appendix B.}
\end{proof}

The following corollary can be directly deducted from Theorems \ref{th:ULA_asymptotic_ratio} and \ref{th:UPA_asymptotic_ratio}.
\begin{corollary}
    Under the NUSW near-field channel model of UPA and ULA antennas, in the asymptotic case where $N$ tends to infinity, the following properties hold:
    \begin{enumerate}[(a)]
        \item The radial focal gap tends to zero.
        \item The focal region reduces in size and concentrates into:
        \begin{itemize}
            \item a small spot-like 3D focal point applied to the UPA.
            \item a thin 2D focal ring applied to the ULA.
        \end{itemize}
    \end{enumerate}
\end{corollary}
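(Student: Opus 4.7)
The plan is to derive both parts as direct consequences of the asymptotic orthogonality established in Theorems \ref{th:ULA_asymptotic_ratio} and \ref{th:UPA_asymptotic_ratio}, combined with Property \ref{prop:mfp} which characterizes the dominant radial focal point associated with the MRT beamformer. The guiding idea is that once all points $\rbold\neq \rboldbar$ have vanishing normalized amplitude, the only way a competing local maximum or a nontrivial level set can survive is to collapse onto $\rboldbar$ itself.

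For part (a), I would proceed by contradiction. Suppose the radial focal gap does not vanish; then there exists $\epsilon>0$ and a subsequence of array sizes along which $\overline{r}-r^{\mathrm{f}} \geq \epsilon$. Property \ref{prop:mfp} guarantees that the dominant focal point satisfies $|y(\rboldf,\bboldbarMR)| > |y(\rboldbar,\bboldbarMR)|$, so the ratio $\kappa_N = |y(\rboldf,\bboldbarMR)|/|y(\rboldbar,\bboldbarMR)|$ remains bounded below by $1$ along this subsequence. However, since $\rboldf=(\rf,\theta,\phi)$ differs from $\rboldbar=(\overline{r},\theta,\phi)$ only in the radial coordinate by at least $\epsilon$, Theorems \ref{th:ULA_asymptotic_ratio} and \ref{th:UPA_asymptotic_ratio} force $\kappa_N\to 0$, yielding a contradiction. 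Hence $\overline{r}-r^{\mathrm{f}}\to 0$, which is precisely the asymptotic vanishing of the radial focal gap.

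For part (b), fix a threshold $\eta\in(0,1)$ and consider the level set $\mathcal{L}_N(\eta)=\{\rbold:|y(\rbold,\bboldbarMR)|\geq \eta|y(\rboldbar,\bboldbarMR)|\}$. For the UPA case, Theorem \ref{th:UPA_asymptotic_ratio} establishes that the normalized amplitude tends to zero at any $\rbold\neq\rboldbar$ differing in any of $r$, $\theta$, or $\phi$. A standard compactness argument on the bounded near-field region then shows that for every $\delta>0$ there is an $N_0$ such that $\mathcal{L}_N(\eta)\subset B_\delta(\rboldbar)$ for all $N\geq N_0$, which yields the spot-like 3D focal point. For the ULA, the radiated field is rotationally invariant about the array axis because all antenna-element positions lie on the $z$-axis, so $|y(\rbold,\bboldbarMR)|$ depends only on $(r,\theta)$. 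Theorem \ref{th:ULA_asymptotic_ratio} then collapses $\mathcal{L}_N(\eta)$ to the point $(\overline{r},\overline{\theta})$ in any meridional slice, and rotating around the array axis produces the thin 2D ring claimed.

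The main obstacle is making the "shrinks to a point/ring" statement fully rigorous: the contradiction in part (a) and the diameter argument in part (b) both rely on the asymptotic orthogonality holding uniformly on compact sets of observation points excluding $\rboldbar$, whereas Theorems \ref{th:ULA_asymptotic_ratio} and \ref{th:UPA_asymptotic_ratio} are stated only as pointwise limits. The cleanest route is to extract an explicit rate from the bounds used in the proofs in Appendices A and B (which should decay like a power of $N$), and then couple this with a subsequence/diagonal argument that properly handles potential sequences $\rbold_N\to\rboldbar$ approaching the target at a controlled rate, ruling out pathological escape of the level set towards the boundary of the near-field region.
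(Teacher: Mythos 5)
Your proposal follows essentially the same route as the paper, which states this corollary as a direct consequence of Theorems \ref{th:ULA_asymptotic_ratio} and \ref{th:UPA_asymptotic_ratio} without writing out a proof; your contradiction argument via Property \ref{prop:mfp} for part (a) and the level-set/compactness plus rotational-symmetry argument for part (b) simply make explicit the deduction the paper leaves implicit. The uniformity caveat you raise (the theorems are pointwise in $\rbold$ while the focal point $\rboldf$ and the level sets move with $N$) is a genuine subtlety that the paper's one-line deduction also glosses over, so flagging it and proposing to extract rates from the appendix bounds is an appropriate refinement rather than a deviation.
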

\begin{figure}
    \centering
    \includegraphics[width=254pt]{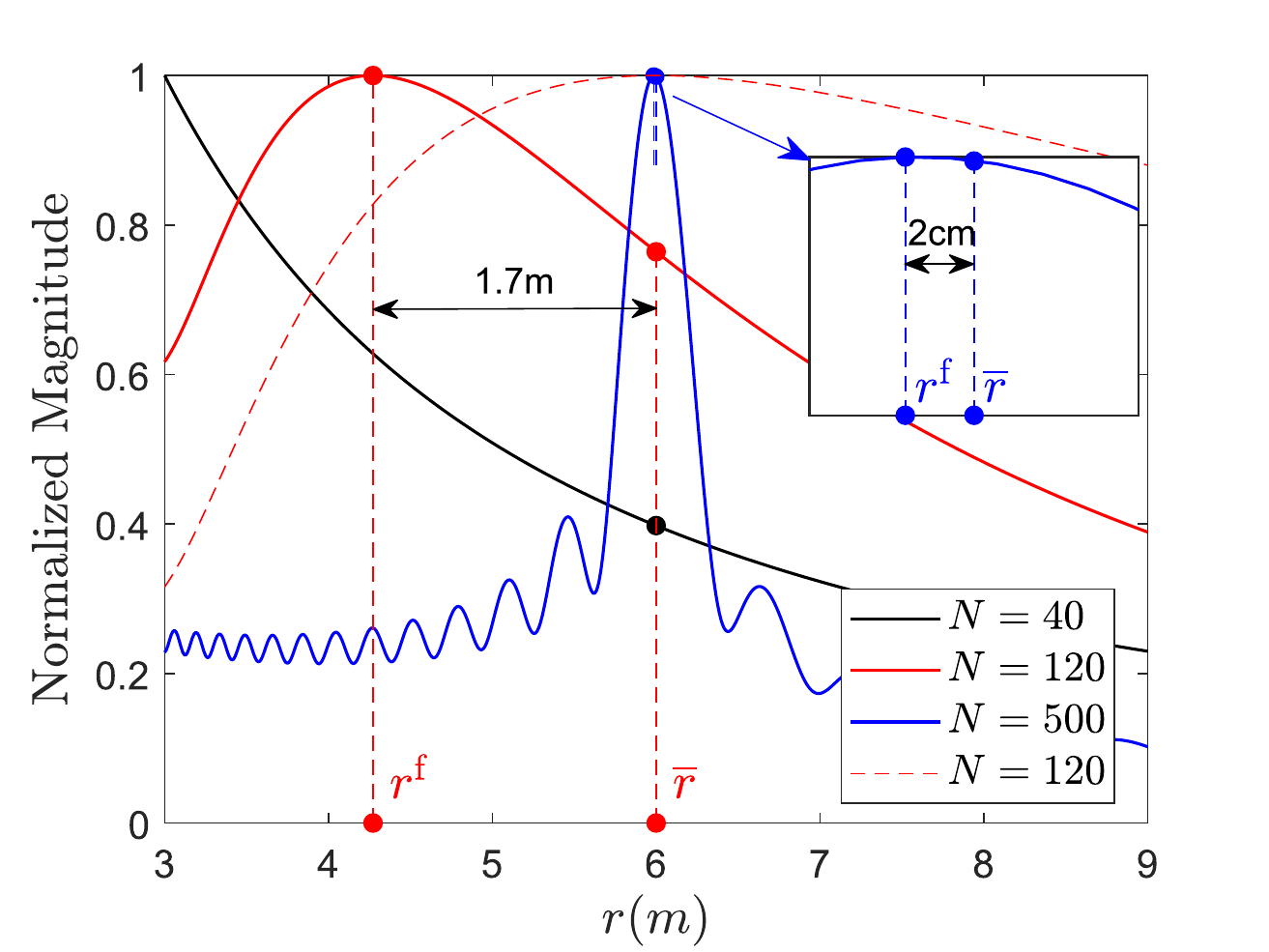}
    \caption{Radial beamfocusing for $N$-element ULA antenna. The frequency is 28GHz, and the inter-element spacing is half wavelength. The solid curves and the dashed curve correspond to employing the NUSW and USW channel models.}
    \label{fig:BF}
\end{figure}

3D near-field spot beamfocusing (SBF) which can be realized by ULA and UPA antennas in the asymptotic case, has many potential applications not only in wireless communication and high-power and safe wireless power transfer (WPT) \cite{monemi2023towards} but also in health and medical sensing (e.g., stimulating specific neurons through neuromodulation \cite{krames2018neuromodulation}), %ultrasound array imaging \cite{seo2017ultrasound},
semiconductor and THz technology (e.g., high speed turning on/off nano switch arrays through casting spot-like power on each switch element \cite{ou2020tunable}), %, the study of matters through THz spectroscopy \cite{neu2018tutorial}, 
etc.

Fig. \ref{fig:BF} depicts the normalized signal magnitude of a ULA consisting of $N$ antenna elements in the radial domain, where $N\in\{40,120,500\}$. We have considered the beamformer vector $\bbold=\bboldbarMR$ corresponding to a point with a distance of $\rbar=6$m from the antenna center located on the boresight. First, it is seen that for $N=40$ (corresponding to $\dF=8.6$m), the MRT beamformer results in no focal point at any point $\rf<\rbar$ in the radial domain. %For this case, although we have $r\leq \dF$, however, the DFP is not located in the radial focal region. 
Considering $N=120$, a dominant focal point is achieved at $\rf=4.3$m, exhibiting a radial focal gap of 1.7m. To illustrate that the efficacy of the USW is limited in analyzing the radial domain focusing and capturing the achieved radial focal gap, the normalized magnitude for $N=120$ obtained using the USW model is also depicted by the red dashed curve. It is observed that this model yields inaccurate results, and even more concerning, it fails to capture the radial gap between $\rf$ and $\rbar$, and consequently, employing the GNC or NUSW becomes necessary.  
Finally, following Theorem~\ref{th:ULA_asymptotic_ratio}, it is observed that for a significantly large value of $N$ (equal to 500), the gap between $\rboldbar$ and the achieved radial focal point, $\rboldf$, is substantially reduced, reaching a mere 2cm.

\begin{figure}
    \centering
    \includegraphics[width=254pt]{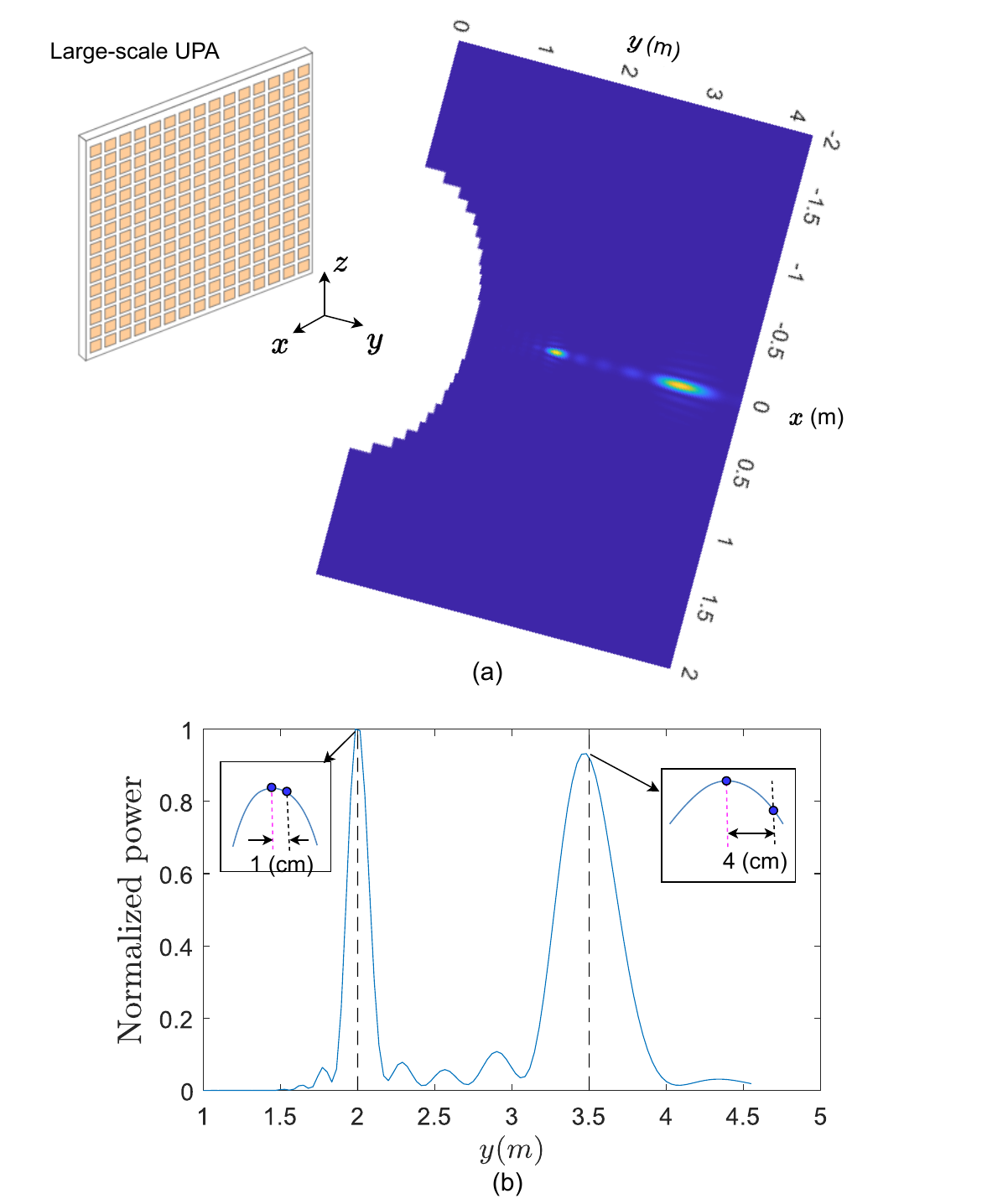}
    \caption{Multi-focal radial and angular beamfocusing considering the NUSW model for a large-scale $100\times 100$ UPA with a fully connected MRT beamformer having two RF chains. The frequency is 28GHz, the inter-element spacing is half wavelength, and array elements are considered to be isotropic. The antenna is located on the $xz$-plane centered is (0,0,0.5) and the DFPs are located at the $(0,2,0)$ and $(0,3.5,0)$.}
    \label{fig:BF_UPA}
\end{figure}

The radial focal gap between the DFP and AFP for a multi-focal 3D beamfocusing scenario using a large-scale planner array is investigated in Fig. \ref{fig:BF_UPA}. We have considered a $100\times 100$ UPA, having a fully connected MRT beamformer with two RF chains, where the DFPs are located at points $(0,2,0)$ and $(0,3.5,0)$. It is seen that 3D multi-focal beamfocusing is achievable through large-scale planner arrays both in radial and angular domains. Besides, it is observed that the 3dB depth of focus as well as the radial focal gap between the AFP and DFP are greater for the DFPs with higher distances. This is expected as the farther DFPs are from the aperture, the closer to the Fraunhofer boundary, and thus less near-field spherical curvature is experienced. Here, for the near and far DFPs, we have a gap of $1$cm and $4$cm respectively.

\subsection{Beamformer design to realize radial focal point in the non-asymptotic scenario}
\label{sec:radial_alg}
% \begin{align}
%     \max_{\bbold} & |y(\bbold,\rboldf)|
%     \\
%     \mathrm{s.t.} & \bbold \in \mathcal{D}^{\mathrm{f}}(r)
% \end{align}
Theorems \ref{th:gap}, \ref{th:ULA_asymptotic_ratio} and \ref{th:UPA_asymptotic_ratio} reveal that under the NUSW near-field channel model, for most antenna structures, if we have an array with a very large number of elements, the radial focal point is simply realized through analog MRT beamformer corresponding to the point $\rboldbar \approx \rboldf$. This leads to a minimal gap between the DFP and AFP.
\begin{algorithm}[t]
		\caption{\small\!: Calculation of beamfocusing vector corresponding to the focal point $\rboldf$}
		\begin{algorithmic}[1]
           
		  %\State \textbf{Initialization:}

            %\Statex
            %{\bf Input:}
            %The UE is located at the desired focal point $\rboldf=(\rf,\theta,\phi)$;

            \Statex
            \hspace{-14pt}
              {\bf Output:}
            Beamfocusing vector $\bbold^*$ corresponding to a desired radial focal point $\rboldf=(\rf,\theta,\phi)$;
            
            \Statex
            \hspace{-14pt}
            {\bf Initialization:}
            \State
            Let $\alpha\leftarrow 1+\eps$, where $\eps \ll 1$ is a small number, $k\leftarrow 1$; {\color{black}$\hat{y}'_1 \leftarrow -1$};  $\rbar_{1}\leftarrow \rf$; 
 {\color{black}$\hat{y}_{1}\leftarrow y(\rboldf,\bboldbar_{1})$}, where $\bboldbar_{1}$ is the maximal beamforming vector corresponding to the point $\rbold_{1}=\rboldf=(\rbar_{1},\theta,\phi)$ ;
            \Statex
            \hspace{-14pt}
            {\bf Main procedure:}

            \Repeat

            \State Let $k\leftarrow k+1$;
            
            \State Let  $\rbar_{k} \leftarrow \alpha \rbar_{k-1}$,
            $\hat{y}_{k}\leftarrow \left|y(\rboldf,\bboldbar_{k})\right|$, where $\bboldbar_{k}$ is the maximal beamforming vector corresponding to the point $\rboldbar_{k}=(\rbar_{k},\theta,\phi)$;

            \State Let $\hat{y}'_k\leftarrow (\hat{y}_k-\hat{y}_{k-1})/(\rbar_k-\rbar_{k-1})$;

            %\State $y_k\leftarrow y(\rboldbar,\bboldbar)$;
            \Until{$\hat{y}'_k \times \hat{y}'_{k-1} \leq 0$}
            
            \State Obtain the optimal $\rbar^*\in[\rbar_k, \rbar_{k-2}]$ corresponding to the  zero-point of $\hat{y}'$  at $r=\rf$ through bisection search algorithm, with initial inputs $\rbar_k$ and $\rbar_{k-2}$;

            \State Obtain $\bbold^*$ as the maximal beamfocusing vector corresponding to the point $\rboldbar^*=(\rbar^*,\theta,\phi)$;

    \end{algorithmic}
\end{algorithm}
\begin{figure}
    \centering
    \begin{tabular}{c} 
     \includegraphics[width=254pt]{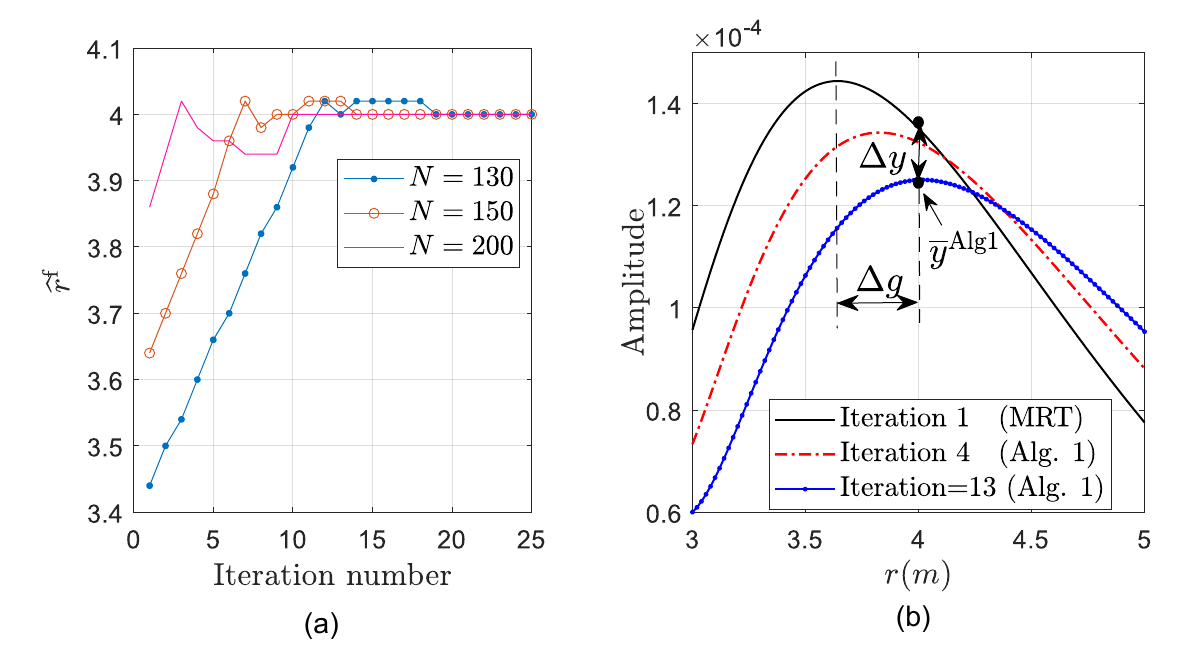}
    \end{tabular}
    \caption{The convergence of Algorithm 1 for implementing the radial beamfocusing at the focal distance $r^{\mathrm{f}}=4$m on the boresight of the antenna for an $N$-element ULA. The left figure illustrates the achieved focal distance per iteration for $N\in\{130,150,200\}$, and the right figure plots the signal amplitude in the radial domain at three iterations for $N=150$.% through beamfocusing matrix obtained from Algorithm 1. %The frequency is 28GHz, the inter-element spacing is half wavelength, and array elements are considered to be isotropic (i.e., $G_1(\rbold,\sbold_n)=G_2(\rbold,\sbold_n)=1$).
    } 

    \label{fig:BF_Converge}
    % \vspace{-10mm}
\end{figure}
A practical problem to be addressed is how to find a beamfocusing vector that results in a focal point at some DFP with the distance $\rf$ in the radial domain when the number of array elements is not too large. In general, there exists no analytical closed-form solution to this problem based on the GNC or UNSW models, even for the simple ULA case. %From Theorem \ref{th:gap}, when the number of array elements is very large (i.e., the near-field effect is very high), this can be simply achieved through the maximal beamforming vector corresponding to $\rbar\approx \rf$. 
Considering the monotonically increasing property of the signal magnitude when decreasing the radial distance from $\rbar$ to $\rf$, as described in Property \ref{prop:mfp}, we propose Algorithm 1 which practically obtains the beamfocusing vector corresponding to some desired focal point at a distance $\rf$ through a limited number of iterations. As seen in the algorithm,  
since we have  $\rf<\rbar$ for $\bbold=\bboldbarMR$, instead of setting the MRT beamfocusing vector $\bboldbar$ corresponding to $\rf$, we set it for some points with higher distances in an iterative manner according to steps 4 and 5 in the loop, and measure numerically the slope of the received signal magnitude at the observation point from two successive measurements. As soon as a change in the sign of the slope is detected for some $\rbar_k$, the desired point for the optimal MRT beamformer can be found as some $\rbar^*\in[\rbar_k, \rbar_{k-2}]$. At this step, the bisection search algorithm finds the exact value of $\rbar^*$.
It is seen that in the initialization phase, we have set the initial slope of $|y_1|$ denoted by $\hat{y}'_1$ to -1; this is because the initial dominant radial focal point is realized for MRT beamformer $\bboldbar_1$ at Step 1 at some point $\Tilde{\rbold}_1$ with $\Tilde{r}_1<\rbar_1$ and with an amplitude $|y(\Tilde{\rbold}_1,\bboldbar_1)|>|y({\overline{\rbold}}_1,\bboldbar_1)|$. Considering this, and noting that the algorithm is only influenced by the sign of $\hat{y}'_k$, we can set an arbitrary negative value to $\hat{y}'_1$ in the initialization phase. 
To show the convergence of the proposed algorithm in realizing a radial focal point at the DFP, we have considered ULAs with $N\in\{130,150,200\}$ and a focal point on the boresight of the antenna at $r^f=4m$. Fig. \ref{fig:BF_Converge}-a depicts the convergence of the achieved focal point $\widehat{r}^{\mathrm{f}}$ to the desired value. It is seen that for a larger value of $N$, corresponding to smaller gaps between $\rf$ and $\rbar$, the algorithm converges with a higher speed, and besides, the higher the number of array elements, the smaller, the initial distance between $\widehat{r}^f$ and DFP. In addition, for $N=150$, we have depicted the radial pattern for three iterations in Fig. \ref{fig:BF_Converge}-b to demonstrate how the patterns in the radial domain are updated. 
Finally, noting that Algorithm 1 and the MRT follow different objectives, the performance comparison is presented in Fig. \ref{fig:BF_Compare} in terms of two different metrics.  Consider a circle for representing the focal region, centered at the DFP and having radius $R$ (where we have considered here $R=0.5$ m), and let $\mathcal{F}$ denote the portion of the circle area wherein the signal amplitude is lower than that corresponding to the DFP. While Algorithm 1 is seen to achieve the target peak power at the DFP resulting in $\mathcal{F}^{\mathrm{Alg1}}=1$, a decay in the amplitude level is experienced on the other hand as observed in both Fig. \ref{fig:BF_Converge}-b and the orange-colored lines of Fig. \ref{fig:BF_Compare}. It is seen how increasing the number of antenna elements lowers the amplitude gap and highers the value of $\mathcal{F}$, leading to equal performance measures of Algorithm 1 and MRT in the asymptotic scenario.

\begin{figure}
    \centering
    \begin{tabular}{c} 
     \includegraphics[width=244pt]{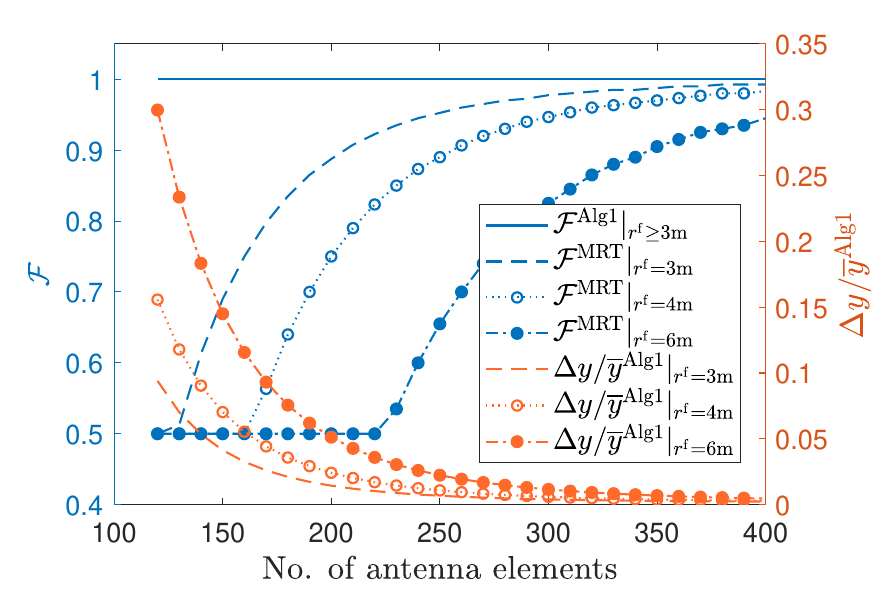}
    \end{tabular}
    \caption{Performance comparison of Algorithm 1 and MRT for different values of radial focal distance $r^\mathrm{f}$ and No. of antenna elements $N$. 
    }
    \label{fig:BF_Compare}
    % \vspace{-10mm}
\end{figure}

To analyze the computational complexity, note that Algorithm 1 contains two iterative procedures, corresponding to the loop expressed in Steps 2 to 6, and the bisection search stated in Step 7. These two procedures have limited iterations and generally converge fast, as seen in Fig. \ref{fig:BF_Converge}. 
    The complexity for each iteration is determined by the dominant computationally complex term corresponding to the MRT beamformer $\bboldbar_k$, which is $\mathcal{O}(N)$ when considering a fully digital beamformer with complete channel state information (CSI). The computational complexity increases when employing hybrid beamforming structures, due to the overhead associated with analog and digital beamformers.
    
\section{Characterization of Non-Radiating Region}
\label{sec:nonradiating}
In this part, we investigate the non-radiating distance of phased array antennas.  The non-radiating distance $\dNR$ is a distance very close to the antenna where the near-field propagation models should be fully considered. Hence we require the very exact expressions of electric and magnetic fields to calculate $d^{\mathrm{NR}}$. Consider a ULA of diameter $D$ consisting of $N$ center-fed thin dipole antennas each having a diameter $D^{\mathrm{s}}$ located on the $z$ access and centered at the origin as seen in Fig. \ref{fig:ULA_dR}. We aim to calculate the active and reactive powers at some point on the boresight with distance $r$ to the antenna. To do so, as seen in the figure, first consider an infinitesimal dipole located on point $(0,0,z')$ at the Cartesian coordinate system with electric current density $\Ibold^{(n)}(z')=I^{(n)}(z')\aboldhat_z$. The electric potential function relating to an infinitesimal current distribution on $z'$ is obtained as follows \cite{balanis2016antenna}:
\begin{align}
    d\Abold^{(n)}(r,\phi,z)=\aboldhat_z \frac{\mu I_n(z') d{z'}}{4 \pi R'} e^{-jkR'}
\end{align}
where $\mu$ is the permeability and $R'=\sqrt{r^2+(z')^2}$ is the distance of the observation point to the infinitesimal antenna at $(0,0,z')$. The electric and magnetic fields resulting from $d{\Abold}^{(n)}$ are obtained respectively from the following Maxwell's equations:
\begin{figure}
    \centering
    \includegraphics[width=254pt]{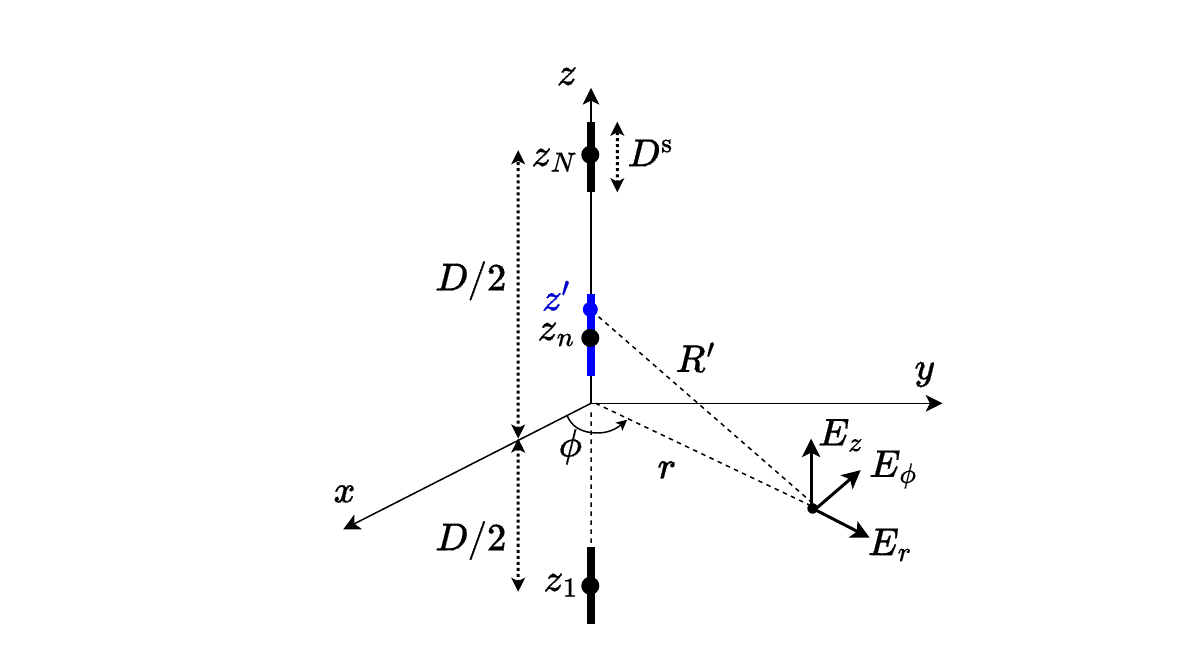}
    \caption{Dipole ULA structure for calculating $d^{\mathrm{NR}}$.}
    \label{fig:ULA_dR}
\end{figure}
\begin{subequations}
\label{eq:Maxwells}
        \begin{align}
        \label{eq:Maxwells_H}
        d\Hbold^{(n)}
        &=
        \frac{1}{\mu} \nabla \times d{\Abold^{(n)}}
        \\
        \label{eq:Maxwells_E}
        \nabla \times d\Hbold^{(n)}
        &=
        j\omega  \eps \times d\Ebold^{(n)}
    \end{align}
\end{subequations}
By taking the curls in the Cylindrical coordinate system and after doing some mathematical manipulations, $d\Hbold^{(n)}$ and $d\Ebold^{(n)}$ are derived as follows.
\begin{subequations}
\begin{align}
    d\Hbold^{(n)}
    &=
    dH_{\phi}^{(n)}\aboldhat_{\phi}
    \\
 d\Ebold^{(n)}
 &=
 dE_r^{(n)}\aboldhat_{r}
 + 
 dE_z^{(n)}\aboldhat_{z}
\end{align}
\end{subequations}
where
\begin{subequations}
\begin{align}
    dH_{\phi}^{(n)}
    &=
    \frac{r I^{(n)}(z')d{z'}}{4\pi}e^{-jkR'}
    \left( 
    jkR'^{-2}+R'^{-3}
    \right)
    \\
      \notag
    dE_r^{(n)}
    &=
    \frac{jrk\eta I^{(n)}(z')d{z'}R'^{-3}}{4\pi} e^{-jkR'}(z-z') \times
    \\
    &
    \hspace{60pt}
    \left(
    1-3jk^{-1}R'^{-1}-3k^{-2}R'^{-2}
    \right)
     \\
     \notag
    dE_z^{(n)}
    &=
    \frac{\eta I^{(n)}(z')d{z'}R'^{-2}}{4\pi} e^{-jkR'} \times
    \\
    &
    \hspace{-10pt}
    \left(
    2-j\left(2k^{-1}+kr^2\right)R'^{-1}-3r^2R'^{-2}+j3r^2k^{-1}R'^{-3}
    \right)
\end{align}
\end{subequations}
The current distribution of the dipole antenna element $n$ can be well approximated as follows\cite{balanis2016antenna}:
\begin{multline}
    I^{(n)}(z')=I_0^{(n)}
    \sin
    \left[
    k\left(
    \frac{D^{\mathrm{s}}}{2}-|z'-z_n|
    \right)
    \right]
    \\
    , z_n-\frac{D^{\mathrm{s}}}{2} \leq z' \leq z_n+\frac{D^{\mathrm{s}}}{2}
\end{multline}
To express the electric field $\Ebold^{(n)}$ and magnetic field $\Hbold^{(n)}$ of the $n$'th dipole antenna in a simple form, first, we define the functions $F$ and $G$ as follows.
\begin{subequations}
\begin{align}
    &\hspace{-5pt}F_\alpha^{(n)} \equiv F_{\alpha}(D^{\mathrm{s}},z_n,r)=
    \notag
    \\
    & \hspace{-15pt}
    \hspace{10pt}\int_{z_n-\frac{D^{\mathrm{s}}}{2}}^{z_n+\frac{D^{\mathrm{s}}}{2}}
\frac
{\sin
    \left[
    k\left(
    \frac{D^{\mathrm{s}}}{2}-|z'-z_n|
    \right)
    \right]}
{\left(r^2 +(z')^2\right)^{\alpha/2}}
e^{-jk(r^2 +(z')^2)^{0.5}}dz'
\\
    & \hspace{-5pt}G_\alpha^{(n)} \equiv G_{\alpha}(D^{\mathrm{s}},z_n,r)=
    \notag
    \\
    & \hspace{-10pt}\int_{z_n-\frac{D^{\mathrm{s}}}{2}}^{z_n+\frac{D^{\mathrm{s}}}{2}}
\frac
{\sin
    \left[
    k\left(
    \frac{D^{\mathrm{s}}}{2}-|z'-z_n|
    \right)
    \right]}
{\left(r^2 +(z')^2\right)^{\alpha/2}}
e^{-jk(r^2 +(z')^2)^{0.5}}z'dz'
\end{align}
\end{subequations}
One can verify that the following equality holds:
\begin{subequations}
\begin{align}
    F_{\alpha}^{(n)}(D^{\mathrm{s}},z_n,r)
    &=
    \lambda^{1-\alpha}\widetilde{F}^{(n)}_{\alpha}(\widetilde{D}^\mathrm{s},\widetilde{z}_n,\widetilde{r})
    \\
    G_{\alpha}^{(n)}(D^{\mathrm{s}},z_n,r)
    &=
    \lambda^{2-\alpha}\widetilde{G}^{(n)}_{\alpha}(\widetilde{D}^\mathrm{s},\widetilde{z}_n,\widetilde{r})
\end{align}
\end{subequations}
where $\widetilde{D}^{\mathrm{s}}=D^{\mathrm{s}}/\lambda$, $\widetilde{z}_n=z_n/\lambda$,  $\widetilde{r}=r/\lambda$, and
\begin{subequations}
\begin{align}
    &\widetilde{F}_\alpha^{(n)} 
    \equiv \widetilde{F}^{(n)}_{\alpha}(\widetilde{D}^\mathrm{s},\widetilde{z}_n,\widetilde{r})=
    \notag
    \\
    &\int_{-\frac{\widetilde{D}^\mathrm{s}}{2}}^{\frac{\widetilde{D}^\mathrm{s}}{2}}
\frac
{\sin
    \left[
    2\pi\left(
    \frac{\widetilde{D}^\mathrm{s}}{2}-|z'|
    \right)
    \right]}
{\left(\widetilde{r}^2 +(z'+\widetilde{z}_n)^2\right)^{\alpha/2}}
e^{-j2\pi(\widetilde{r}^2 +(z'+\widetilde{z}_n)^2)^{0.5}}dz'
\\
 &\widetilde{G}_\alpha^{(n)} 
    \equiv \widetilde{G}^{(n)}_{\alpha}(\widetilde{D}^\mathrm{s},\widetilde{z}_n,\widetilde{r})=
    \notag
    \\
    &\!\!\!\!\int_{-\frac{\widetilde{D}^\mathrm{s}}{2}}^{\frac{\widetilde{D}^\mathrm{s}}{2}}
\frac
{\sin
    \left[
    2\pi\left(
    \frac{\widetilde{D}^\mathrm{s}}{2}-|z'|
    \right)
    \right]}
{\left(\widetilde{r}^2 +(z'+\widetilde{z}_n)^2\right)^{\alpha/2}}
e^{-j2\pi(\widetilde{r}^2 +(z'+\widetilde{z}_n)^2)^{0.5}}z'dz'
\end{align}
\end{subequations}
One can verify that $H^{(n)}_{\phi}$, $E^{(n)}_r$, and $E^{(n)}_z$ can be obtained in terms of functions $\widetilde{F}$ and $\widetilde{G}$ as follows:
\begin{align}
    H^{(n)}_{\phi}
    =
    \int_{z_n-\frac{D^{\mathrm{s}}}{2}}^{z_n+\frac{D^{\mathrm{s}}}{2}}
    dH^{(n)}_{\phi} =\frac{jkI_0^{(n)} \widetilde{r} }{4\pi} 
    \left[
    \widetilde{F}_2^{(n)} - j(2\pi)^{-1}\widetilde{F}_3^{(n)}
    \right]
\end{align}
\begin{multline}
    E^{(n)}_{r}
    =
    \int_{z_n-\frac{D^{\mathrm{s}}}{2}}^{z_n+\frac{D^{\mathrm{s}}}{2}}
    dE^{(n)}_{r} 
    =
    \frac{j\eta k I_0^{(n)} \widetilde{r} }{4\pi} 
    \bigg[
    -\widetilde{G}_3^{(n)}+
    \\
     3j(2\pi)^{-1}\widetilde{G}_4^{(n)} +3(2\pi)^{-2}\widetilde{G}_5^{(n)}
    \bigg]
\end{multline}
\begin{multline}
    E^{(n)}_{z}
    =
    \int_{z_n-\frac{D^{\mathrm{s}}}{2}}^{z_n+\frac{D^{\mathrm{s}}}{2}}
    dE^{(n)}_{z} 
    = \frac{\eta k I_0^{(n)} }{8\pi^2} 
    \bigg[
    2\widetilde{F}_2^{(n)} - 
    \\
    \hspace{-6pt}
    2j\left((2\pi)^{-1}+ 
    \pi \widetilde{r}^2\right)\widetilde{F}_3^{(n)} -3\widetilde{r}^2\widetilde{F}_4^{(n)} +3j(2\pi)^{-1}\widetilde{r}^2\widetilde{F}_5^{(n)}
    \bigg]
\end{multline}
Ignoring the mutual coupling between array elements, the total power is then obtained as follows:
\begin{multline}
\label{eq:P3299}
    \boldsymbol{P}=\frac{1}{2}\boldsymbol{E}\times \boldsymbol{H}^*
    =
    \\
    \frac{1}{2}
    \left[
    \sum_{n=1}^N \left( E_r^{(n)}\aboldhat_r + E_z^{(n)}\aboldhat_z \right)
    \right]
    \times
    \left[
    \sum_{n=1}^N  H_{\phi}^{*(n)}\aboldhat_\phi
    \right]
\end{multline}
Finally, the active and reactive powers are obtained respectively as
\begin{subequations}
\begin{align}
    \boldsymbol{P}^{\mathrm{a}}
    &=
    Re\{\boldsymbol{P}\}
    = Re \left\{P_z \right\} \aboldhat_z
    +
    Re\left\{P_r\right\}\aboldhat_r
    \\
     \boldsymbol{P}^{\mathrm{r}}
    &=
    Im\{\boldsymbol{P}\}
    = Im \left\{P_z \right\} \aboldhat_z
    +
    Im\left\{P_r\right\}\aboldhat_r
\end{align}
\end{subequations}
where $
    P_z =
    \frac{1}{2} 
    \sum_{n=1}^N E_r^{(n)} \sum_{n=1}^N H_{\phi}^{*(n)}
    $, and $
    P_r =
     \frac{1}{2} 
    \sum_{n=1}^N E_z^{(n)} \sum_{n=1}^N H_{\phi}^{*(n)}$. 
% \begin{subequations}
% \begin{align}
%     P_z =
%     \frac{1}{2} 
%     %\left\{
%     \sum_{n=1}^N E_r^{(n)} \sum_{n=1}^N H_{\phi}^{*(n)}
%     %\right\} 
%     \\
%     P_r =
%      \frac{1}{2} 
%     %\left\{
%     \sum_{n=1}^N E_z^{(n)} \sum_{n=1}^N H_{\phi}^{*(n)}
%     %\right\} 
% \end{align}
% \end{subequations}
\begin{figure}
    \centering
    \includegraphics[width=254pt]{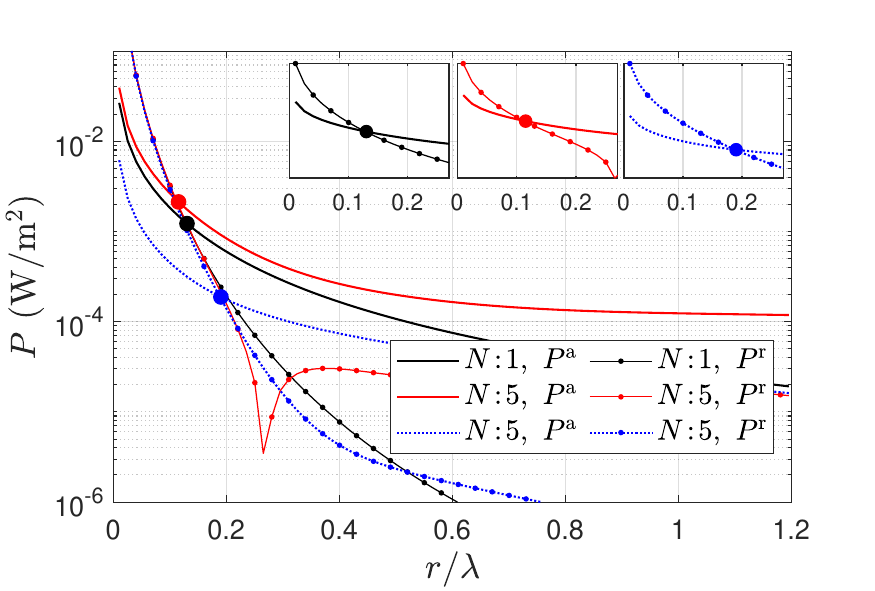}
    \caption{The magnitude of active power density $P^\mathrm{a}$ and reactive power density $P^{\mathrm{r}}$  considering a single-element dipole antenna, as well as 5-element phased arrays with current densities corresponding to $I_0^{(n)}=I_0, \forall n$ (solid lines), and $I_0^{(n)}=(-1)^nI_0, \forall n$ (dotted lines). The inter-element spacing is half-wavelength and $D^{\mathrm{s}}=0.25\lambda$}
    \label{fig:dR_r}
\end{figure}
The non-radiating distance $d^{\mathrm{NR}}$ is obtained as the maximum distance $r$ at which the magnitude of the active and reactive powers are equal. i.e.,
\begin{multline}
    d^{\mathrm{NR}}=\max\{r\}, \mathrm{s.t.} 
    \\
      Re\{P_z\}^2 + Re\{P_r\}^2 
    =
     Im\{P_z\}^2 + Im\{P_r\}^2 
\end{multline}

\begin{figure}
    \centering
    \includegraphics[width=254pt]{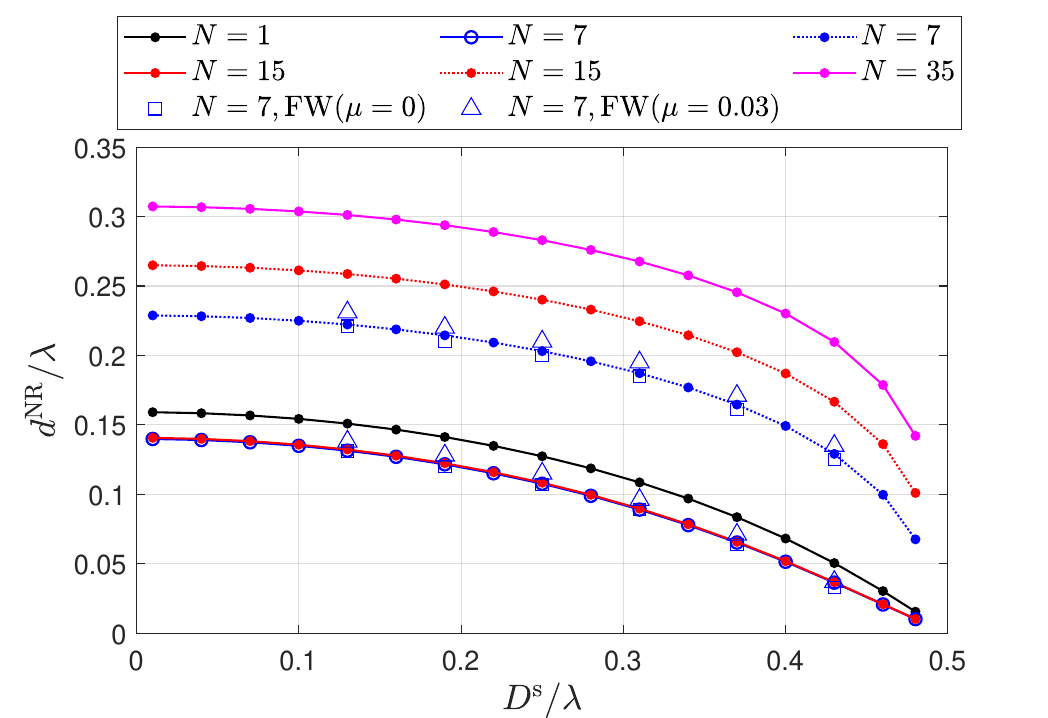}
    \caption{The non-radiating distance per $\lambda$ corresponding to in-phase transmissions with $I_0^{(n)}=I_0, \forall n$ (solid lines) and non-in-phase transmissions with $I_0^{(n)}=(-1)^n I_0, \forall n$ (dotted lines) versus the diameter of antenna element (per $\lambda$) for various number of antenna elements. The square and triangle marks are obtained by full-wave simulations for $N=7$ for both in-phase and non-in-phase scenarios considering the mutual coupling of $\mu=0$ and $\mu=0.03$ respectively.}
    \label{fig:dR_Ns}
\end{figure}
To evaluate $\dNR$,
two scenarios are considered for simulations. In the first one, we consider that the transmitted signals from the array elements are all of the same phase, corresponding to $I_0^{(n)}=I_0, \forall n$, and in the second one we consider the phase difference between the signal of each antenna element and the neighboring element is $180^{\circ}$ corresponding to $I_0^{(n)}=(-1)^n I_0, \forall n$. We call the first and second cases in-phase and non-in-phase scenarios respectively.  The simulation results have been obtained using our derived formulations which have been verified by full-wave simulations through HFSS software as well for $N=7$ considering $\mu=0$ and $\mu=0.03$.
%Fig. \ref{fig:dR_r} depicts the active and reactive power magnitude on the boresight for a single-element antenna, as well as 5-element ULA with in-phase and non-in-phase excitation. It is seen for all cases that at distances very close to the antenna, the reactive power dominates, and after some distance of  $0.1\lambda \leq r \leq 0.2 \lambda$, the active power begins to dominate. 
Fig. \ref{fig:dR_r} illustrates how the magnitude of the active and reactive power density decreases as the distance $r$ is increased considering a single-element antenna, as well as 5-element in-phase and 5-element non-in-phase arrays, considering half-wavelength interelement spacing.  It is seen that in all cases the reactive power density dominates the active power at distances very close to the antenna, and after a very short distance (much less than a wavelength), the active power dominates for both in-phase and non-in-phase scenarios.  
To investigate how $D^{\mathrm{s}}$ impacts the radiating distance, we have considered in Fig. \ref{fig:dR_Ns} various values of $D^{\mathrm{s}}$ ranging from 0.01 to 0.48 and obtained the corresponding non-radiating distance $\widetilde{d}^{\mathrm{NR}}$ per $\lambda$ for different numbers of array elements $N$ and both scenarios of in-phase and non-in-phase transmissions. Firstly, it is seen that for all cases, as $D^{\mathrm{s}}$ increases toward half a wavelength, the non-radiating distance decreases toward zero for all values of $N$ and both in-phase and non-in-phase cases. This is because of the well-known property that a half-wavelength is highly efficient and the power is mainly radiated actively.  Besides, it is seen that for a single dipole antenna ($N=1$) and a very small value of $D^{\mathrm{s}}$, $d^{\mathrm{NR}}$ tends to $0.159\lambda$; this is equivalent to a distance of $\lambda/2\pi$ at which the active and reactive power density are equal for infinitesimal dipole, as mathematically proven in \cite{balanis2016antenna}. %Moreover, the full-wave simulations verify the theoretical results. It is also seen that considering the mutual coupling 
On the other hand, it is seen that the non-radiating distance for various array configurations lies consistently below half a wavelength, which is significantly smaller than the Fresnel distance. This observation invalidates the notion that the non-radiating distance is constrained by the Fresnel zone, a common misconception prevalent in many existing works.  For example, if we consider an antenna array with $N=20$ element and half-wavelength interelement spacing, the maximum Fresnel distance is $1.75\sqrt{\frac{(10\lambda)^3}{\lambda}}\approx 55 \lambda$ which is much higher than half a wavelength.

\section{Conclusions}
\label{sec:conclusions} 
In this paper, we characterized the off-boresight Fraunhofer distance as well as the non-radiating distance for phased array antennas. We showed how the Fraunhofer distance extends to about 4 times when moving from on-boresight to off-boresight angles. %We identified some prevailing misconceptions in the literature regarding these regions. 
Furthermore, we elaborated on the formal definition and characterization of the radial focal region based on the general near-field channel model. We discussed how the MRT beamformer designed based on the USW near-field channel model leads to a radial focal gap between the desired and achieved focal points. We presented an algorithm to realize the desired radial focal point in the non-asymptotic scenario (when the number of array elements is not very large) at the exact DFP location by removing the impact of the radial focal gap. In total, our analysis emphasizes the importance of proper formulations and a deeper understanding of various near-field propagation regions.%, especially when using phased array antennas. %This knowledge has significant implications for practical resource allocation schemes employing phased array antennas.
\begin{appendices}
    \section{Proof of Theorem \ref{th:ULA_asymptotic_ratio}}
Consider a ULA with interelement spacing $\delta$ located on the $z$ axis as shown in Fig. \ref{fig:system_model}-b, having a large number of elements $N$, and excited with MRT beamformer $\bboldbarMR$ corresponding to point $\rboldbar=(\rbar,\thetabar)$. Assuming that the central antenna element is located at the origin corresponding to $n=0$, the distance from element $n$ to point $\rbold$ is $r_n=\sqrt{r^2-2\delta\cos(\theta)rn +\delta^2n^2}$. By using the Taylor series $\sqrt{1+x}\approx 1+ \frac{x}{2}-\frac{x^2}{8}$, we may approximate the phase component $\frac{2\pi}{\lambda}(r_n-\rbar_n)$ as
 \begin{align}
 \label{eq:5676wvsdgf}
        %&\frac{2\pi}{\lambda}(r_n-\rbar_n) =
        & \frac{2\pi}{\lambda}
        \big( 
        \sqrt{r^2-2\delta\!\cos(\theta)rn + \delta^2 n^2 } 
        - \sqrt{\rbar^2-2\delta\!\cos(\thetabar)\rbar n + \delta^2 n^2 }
        \big)
         \notag
         \\
         & \hspace{100pt}\approx{2\pi}(r-\rbar)/\lambda +\eta_1 n + \eta_2 n^2
    \end{align}
   where $\eta_1=\frac{2\pi}{\lambda}\delta \left(\cos(\overline{\theta}) - \cos(\theta) \right)$ and $\eta_2=\frac{\pi\delta^2}{\lambda}\left( \frac{\sin^2(\theta)}{r}-
     \frac{\sin^2(\overline{\theta})}{\rbar}
     \right)$. %This is a valid approximation if the distances $r$ and $\rbar$ exceed the Fresnel boundary $\dN$.
     Considering that the term $2\pi(r-\rbar)/\lambda$ in \eqref{eq:5676wvsdgf} is independent of $n$, the signal amplitude at point $\rbold=(r,\theta)$ can be obtained as
    \begin{align}
        \left|y(\rbold,\bboldbarMR)\right|
        &=\left|\sum_{-(N-1)/2}^{(N-1)/2} \frac{ e^{-j\frac{2\pi}{\lambda}(r_n-\rbar_n)}}{\sqrt{r^2-2\delta\cos(\theta) rn + \delta^2 n^2 }} \right|
        \notag
        \\
        &\approx
        \left|\int_{-\frac{N}{2}}^{\frac{N}{2}} \frac{ e^{-j(\eta_1 n+\eta_2 n^2)}}{\sqrt{r^2-2\delta\cos(\theta) rn + \delta^2 n^2 }} dn
        \right|
    \end{align}
    %For simplicity, we consider the radial-domain boresight case ($\theta=\pi/2$ and $\thetabar=\pi/2$), however, it can be easily verified that the results are also valid for off-boresight ($\theta \neq \frac{\pi}{2}$ or $\thetabar \neq \frac{\pi}{2}$).
    Therefore we have%Let $\kappa_N$ denote the normalized signal amplitude for the MRT beamformer $\bboldbarMR_{N \times 1}$ at point $\rbold$.
    %The ratio of the signal amplitudes captured at point $\rbold$ to that at point $\rboldbar$ is obtained as
    \begin{align}
    \label{eq:kappaULA}&\kappa_N^2=
    {\big|y(\rbold,\bboldbarMR)\big|^2}/{\big|y(\rboldbar,\bboldbarMR)\big|^2}
        =
        \notag
        \\
        &
        \frac{
        \Bigg[
        \overbrace{
                \int_{-\frac{N}{2}}^{\frac{N}{2}} \frac{
        \cos(\eta_1 n + \eta_2 n^2) dn
        }
        {\sqrt{r^2-an + \delta^2 n^2 }
        }
        }^{A_N}
        \Bigg]^2
        +
        \Bigg[
         \overbrace{\int_{-\frac{N}{2}}^{\frac{N}{2}} \frac{
        \sin(\eta_1 n +\eta_2 n^2) dn
        }
        {\sqrt{r^2-an + \delta^2 n^2 }
        }}^{B_N} 
        \Bigg]^2
        }
        {
       \Bigg[
        \underbrace{\int_{-\frac{N}{2}}^{\frac{N}{2}} \frac{1}{\sqrt{\rbar^2-\overline{a} n + \delta^2 n^2 }} dn
        }_{C_N}\Bigg]^2
        }
        \end{align}
    where $a=2\delta\cos(\theta)r$ and $\overline{a}=2\delta\cos(\thetabar)\rbar$. It should be noted that $\rbold\neq \rboldbar$ results in $\eta_1\neq 0$ or $\eta_2 \neq 0$.
    First, we consider $\eta_1\neq 0$ and $\eta_2=0$. In this case, we can write $\lim_{N\rightarrow \infty} A_N\equiv A_{\infty}$ as follows:
    \begin{align}
        A_{\infty} &\approx  \int_{-\frac{N_0}{2}}^{\frac{N_0}{2}} \frac{
        \cos(\eta_1 n) dn
        }
        {\sqrt{r^2-an + \delta^2 n^2 }} + \frac{2}{\delta} \int_{\frac{N_0}{2}}^{\infty} \frac{\cos(\eta_1 n)}{n} dn
        \notag
        \\
        &=
        A_{N_0}-\frac{2}{\delta}\mathrm{ci}({\eta_1 N_0}/{2})
    \end{align}
    where $N_0/2$ is a threshold value, such that for $n>N_0/2$, the second-order term $\delta^2 n^2$ dominates the zero-order and first-order terms in the denominator of the integrand of $A_N$.
 The first term $A_{N0}$ is bounded since it corresponds to a proper integral, and the second term is also a bounded cosine integral (Equation 203(5) in \cite{gradshteyn2014table}). Similarly, for $B_N$ we can write
    \begin{align}
        B_{\infty}\approx B_{N_0}-\frac{2}{\delta}\mathrm{si}({\eta_1 N_0}/{2})
    \end{align}  
%We start with $B_N$. , then the integrand associated with $B_N$ is approximated by $\sin(\eta_2 n)/\delta n$ for sufficiently large values of $n$. As a consequence, $B$ is approximated by an improper integral of the first kind involving a sinc function. This integral is known to be bounded. It can further be shown that even when $\eta_2 \neq 0$, the integrand remains asymptotically similar to sinc function. Therefore, we can conclude that $B_{\infty}$ necessarily represents a bounded value. On the other hand, the integrand of $C_N$ can be approximated by $\frac{1}{\delta |n|}$ whose integral is divergent as $N$ tends to infinity. Therefore, we conclude that
On the other hand, for large values of $N$, from \eqref{eq:kappaULA}, $C_{N}$ can be written as
\begin{align}
    C_{N}\approx C_{N'_0}+\frac{2}{\delta}\int_{\frac{N'_0}{2}}^{N}\frac{dn}{n}=C_{N'_0}-\frac{2}{\delta}\ln({2N}/{ N'_0}) %+ \frac{2}{\delta}\ln(N)
\end{align}
%where the $n^2$ is the dominant term of the denominator of the integrated of $C_N$ for $n> N'_0/2$ in the polynomial $\delta^2 n^2 -\overline{a} n + \rbar^2$.
Therefore, we conclude that
\begin{multline}
    \lim_{N\rightarrow \infty} \kappa_N \approx
    \\
    \lim_{N\rightarrow \infty}
    \frac{
    \sqrt{\left[A_{N_0}-\frac{2}{\delta}\mathrm{ci}(\frac{\eta_1 N_0}{2})\right]^2 + \left[ B_{N_0}-\frac{2}{\delta}\mathrm{si}(\frac{\eta_1 N_0}{2})\right]^2}
    }
    {C_{N'_0}-\frac{2}{\delta}\ln({2N}/{ N'_0})}=0
\end{multline}
Now we consider $\eta_2\neq 0$. Similarly, after doing some mathematical manipulations, we can show that
\begin{multline}
    \lim_{N\rightarrow \infty} \kappa_N \approx
    \\
    \lim_{N\rightarrow \infty}
    \frac{
    \sqrt{
    \left[A_{N_0}-\frac{1}{\delta}\mathrm{ci}(\frac{\eta_2 N_0^2}{4})\right]^2 + \left[ B_{N_0}-\frac{1}{\delta}\mathrm{si}(\frac{\eta_2 N_0^2}{4})\right]^2}
    }
    {C_{N'_0}-\frac{2}{\delta}\ln({2N}/{ N'_0})}=0
\end{multline}
%Considering $\eta_1\neq 0$ or $\eta_2 \neq 0$ (which is inferred from $\rbold\neq \rboldbar$), it is seen that the numerator $A_N$ is oscillatory accumulating both positive and negative values and thus the denominator $C_N$ grows much faster than the numerator $A_N$ as $N$ approaches infinity. Therefore we conclude that $\lim_{N\rightarrow \infty} \kappa_N=0$. 
\vspace{-10pt}
\section{Proof of Theorem   \ref{th:UPA_asymptotic_ratio}}
      Consider a $N$-element ($N=N_1N_2$) UPA on the $xz$ plane as shown in Fig. \ref{fig:system_model}-a with an interelement spacing of $\delta$, centered at the origin, where the central reference element corresponds to $n_1=0$ and $n_2=0$. 
    From \eqref{eq:rn_versus_r} the distance of the antenna element $(n_1,n_2)$ from points $\rbold$ and $\rboldbar$ are respectively written as $r_{n_1,n_2} =\sqrt{r^2-2ar n_1-2br n_2 + \delta^2 n_1^2 + \delta^2 n_2^2 }$ and $\overline{r}_{n_1,n_2} =\sqrt{\rbar^2-2\overline{a}\hspace{1pt}\rbar n_1-2\overline{b}\rbar n_2 + \delta^2 n_1^2 + \delta^2 n_2^2 }$ where $a=\delta\sin(\theta)\cos(\phi)$, $b=\delta\cos(\theta)$, $\overline{a}=\delta\sin(\thetabar)\cos(\overline{\phi})$, and $\bbar=\delta\cos(\thetabar)$.
    Considering the Taylor polynomials of $r_{n_1,n_2}$ and $\rbar_{n_1,n_2}$ %by noting the validity of the Fresnel approximations
    , we can write
    \begin{multline}
        \frac{2\pi}{\lambda}(r_{n_1,n_2}-\rbar_{n_1,n_2}) \approx \frac{2\pi}{\lambda}(r-\rbar) + 
        \\
        \underbrace{\zeta_{1} n_1 + \zeta_{2} n_2 + \zeta_3 n_1^2 + \zeta_4 n_2^2 + \zeta_5 n_1n_2}_{\widetilde{u}(n_1,n_2)}    
    \end{multline}
    in which $\zeta_1=\overline{a}-a$, $\zeta_2=\overline{b}-b$, $\zeta_3= \left(\frac{\delta^2 -a^2}{2r} - \frac{\delta^2 -\overline{a}^2}{2\rbar}\right)$, $\zeta_4= \left(\frac{\delta^2 -b^2}{2r}- \frac{\delta^2 -\overline{b}^2}{2\rbar}\right)$ and $\zeta_5= \left(\frac{\overline{a}\overline{b}}{\rbar^2} - \frac{ab}{r^2} \right)$. Similar to Theorem \ref{th:ULA_asymptotic_ratio} we have
    \begin{align}
        \lim_{N_1,N_2\rightarrow \infty} \kappa_{N_1,N_2}^2=
    \lim_{N_1,N_2\rightarrow \infty}
    \left[
        \left(\frac{A_{N_1,N_2}}{C_{N_1,N_2}}\right)^2+ \left(\frac{B_{N_1,N_2}}{C_{N_1,N_2}}\right)^2
        \right]
    \end{align}
    where 
    \begin{align}
        A_{N_1,N_2} &=  \int_{-\frac{N_1}{2}}^{\frac{N_1}{2}} \! \int_{-\frac{N_2}{2}}^{\frac{N_2}{2}} 
        \!
        \frac{
            \cos(\widetilde{u}(n_1,n_2)) dn_2 dn_1
        }
        {
            \sqrt{r^2-2ar n_1-2br n_2 + \delta^2 n_1^2 + \delta^2 n_2^2 }
        }
        \notag
        \\
        B_{N_1,N_2} &=  \int_{-\frac{N_1}{2}}^{\frac{N_1}{2}} \! \int_{-\frac{N_2}{2}}^{\frac{N_2}{2}} 
        \!
        \frac{
            \sin(\widetilde{u}(n_1,n_2)) dn_2 dn_1
        }
        {
            \sqrt{r^2-2ar n_1-2br n_2 + \delta^2 n_1^2 + \delta^2 n_2^2 }
        }
        \notag
        \\
        C_{N_1,N_2} &=  \int_{-\frac{N_1}{2}}^{\frac{N_1}{2}} \! \int_{-\frac{N_2}{2}}^{\frac{N_2}{2}} 
        \!
        \frac{
            dn_2 dn_1
        }
        {
            \sqrt{\rbar^2-2\overline{a}\rbar n_1-2\overline{b}\rbar n_2 + \delta^2 n_1^2 + \delta^2 n_2^2 }
        }
    \end{align}
    Similar to the proof of Theorem \ref{th:ULA_asymptotic_ratio}, it can be verified that $C_{N_1,N_2}$ is asymptotically divergent, while $A_{N_1,N_2}$ and $B_{N_1,N_2}$ are bounded values, provided that at least one of the coefficients $\zeta_1,...,\zeta_5$ be non-zero (which corresponds to $r\neq \rbar$, or $\theta\neq \thetabar$, or $\phi \neq \overline{\phi}$). %The details are rather lengthy and are omitted here.
    The details are omitted for conciseness.
 %    \begin{align}
 %        r_{n_1,n_2}=r\sqrt{1-\frac{2a}{r} n_1-\frac{2b}{r} n_2 + \frac{\delta^2}{r^2} n_1^2 + \frac{\delta^2}{r^2} n_2^2}
 %    \end{align}
 %    \begin{align}
 %         r_{n_1,n_2}
 %         &\approx r \big[1 - a' n_1 - b' n_2 + \frac{1}{2}(\delta'^2-a'^2) n_1^2 +
 %         \notag
 %        \\
 %         & \hspace{50pt} \frac{1}{2}(\delta'^2-b'^2)n_2^2 - \left(a'b'\right) n_1 n_2 \big]  
 %         \notag
 %         \\
 %         &= r-an_1-bn_2 + \frac{1}{2r}(\delta^2-a^2) n_1^2 +
 %         \notag
 %         \\
 %          & \hspace{50pt} \frac{1}{2r}
 %          (\delta^2-b^2)n_2^2 - \left(\frac{ab}{r^2}\right) n_1 n_2 \big]  
 %         \end{align}
 %    \begin{figure*}[!b]
	% \hrulefill
	% %\begingroup
	% \begin{align}
	% \label{eq:o_dir1_LoS_LoS_output}
	% \widetilde{u}
 % =\left(a-\overline{a}\right) n_1 
 % +
 % \left(b-\overline{b}\right) n_2 
 % + 
 % \frac{1}{2}\left(\frac{\delta^2}{\rbar}-\frac{\overline{a}^2}{\rbar}-\frac{\delta^2}{r}
 % + \frac{a^2}{r}\right) n_1^2
 % +  \frac{1}{2}\left(\frac{\delta^2}{\rbar}-\frac{\overline{b}^2}{\rbar}-\frac{\delta^2}{r}
 % + \frac{b^2}{r}\right) n_2^2 + 
 % \left(\frac{ab}{r^2}-\frac{\overline{a}\overline{b}}{\rbar^2} \right)n_1 n_2
	% \end{align}
 %    \end{figure*}
\end{appendices}

% From \eqref{eq:P3299}, the complex power moving in the radial direction is obtained as 
% \begin{align}
%     P_r =
%      \frac{1}{2} 
%     %\left\{
%     \sum_{n=1}^N E_z^{(n)} \sum_{n=1}^N H_{\phi}^{*(n)}
%     %\right\} 
% \end{align}
% Finally, similar to the scheme achieved in [***] for obtaining the radiating distance of an infinitesimal single dipole antenna, we calculate the radiating distance $d^{\mathrm{NR}}$ in the boresight as the maximum distance $r$ at which the magnitude of the radial active and reactive powers are equal. i.e.,
% \begin{align}
%     d^{\mathrm{NR}}=r, \mathrm{s.t.} \
%       \left|Re\left\{P_r\right\} \right| 
%       =
%       \left|Im\left\{P_r\right\}\right| 
% \end{align}

\bibliographystyle{IEEEtran}

\bibliography{Mybib}

%\balance{
 \vspace{-10pt}
 
\begin{biography}[{\includegraphics[width=1in,height
=1.25in,clip,keepaspectratio]{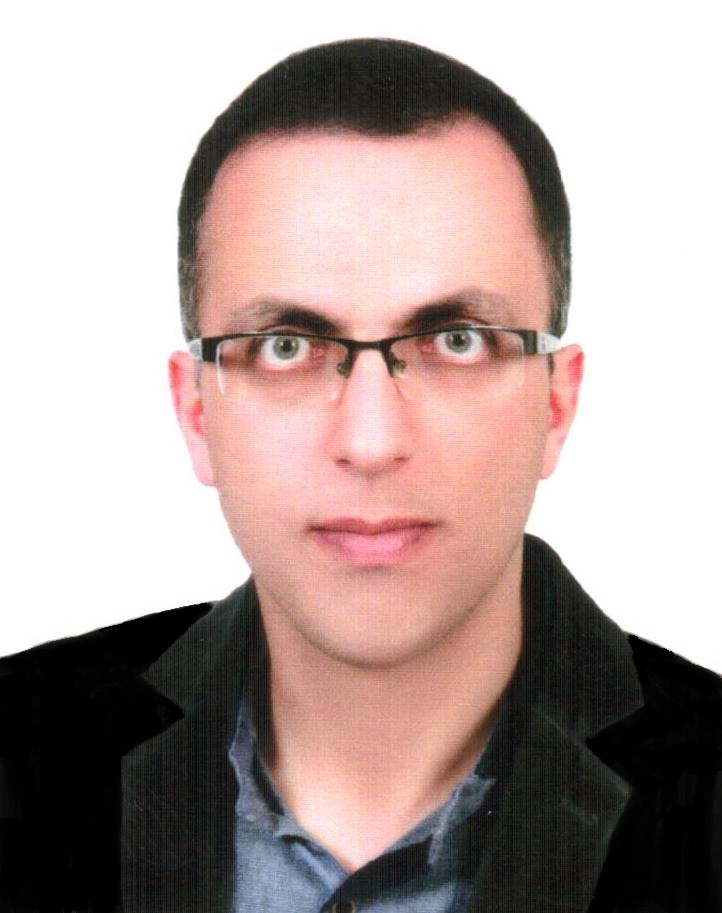}}]{Mehdi Monemi} (Member, IEEE)
		received the B.Sc., M.Sc., and Ph.D. degrees all in electrical and computer engineering from Shiraz University, Shiraz, Iran, and Tarbiat Modares University, Tehran, Iran, and Shiraz University, Shiraz, Iran in 2001, 2003 and 2014 respectively. After receiving his Ph.D., he worked as a project manager in several companies and was an assistant professor in the Department of Electrical Engineering, Salman Farsi University of Kazerun, Kazerun, Iran, from 2017 to May 2023. He was a visiting researcher in the Department of Electrical and Computer Engineering, York University, Toronto, Canada from June 2019 to September 2019. He is currently a Postdoc researcher with the Centre
for Wireless Communications (CWC), University of Oulu, Finland. His current research interests include resource allocation in 5G/6G networks, as well as the employment of machine learning algorithms in wireless networks.
	\end{biography}
 \vspace{-10pt}
\begin{biography}[{\includegraphics[width=1in,height
=1.25in,clip,keepaspectratio]{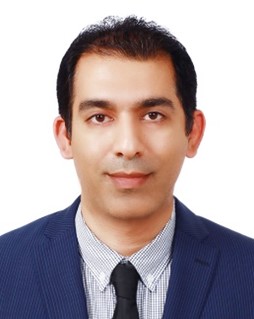}}]{Sirous Bahrami}
		received the B.S. degree from Isfahan University, Isfahan, Iran, in 2007 and the M.S. and Ph.D. degrees in electrical engineering from the Iran University of Science and Technology (IUST), Tehran, Iran, in 2009 and 2014, respectively.
From 2014 to 2020, he was with the Department of Electrical Engineering, Salman Farsi University of Kazerun, Kazerun, Iran. %,where he was investigating tunable filters and reconfigurable antennas.
He joined the Pohang University of Science and Technology (POSTECH) as a Post-doctoral researcher in 2021. He is currently a Researcher Assistant Professor at POSTECH, Pohang, South Korea, developing multifeed and reconfigurable antennas for RF and mm-wave communication systems. His research interests include passive and active microwave devices, planar antennas, and monolithic microwave integrated circuits (MMICs).
Dr. Bahrami was a recipient of the best paper award 2024 Electromagnetic Measurement Technology South Korea, the best student paper award Finalist IEEE RFIC 2023, Top Reviewers award of IEEE TAP in 2024 and the best antenna measurement paper award finalist ISAP 2024.
%Dr. Bahrami is a Reviewer for top-tier IEEE journals, including IEEE TRANSACTIONS ON MICROWAVE TECHNOLOGY, IEEE TRANSACTIONS ON ANTENNAS AND PROPAGATION, IEEE ANTENNAS AND WIRELESS PROPAGATION LETTERS and IEEE ANTENNAS AND PROPAGATION MAGAZINE.
	\end{biography}

\begin{biography}[{\includegraphics[width=1in,height=1.25in,clip,keepaspectratio]{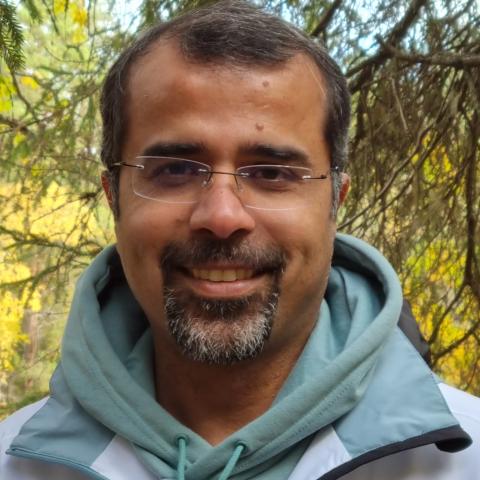}}]{Mehdi Rasti}
		(Senior Member, IEEE) received the B.Sc. degree in electrical engineering from Shiraz University, Shiraz, Iran, in 2001, and the M.Sc. and Ph.D. degrees from Tarbiat Modares University, Tehran, Iran, in 2003 and 2009, respectively. He is currently an Associate Professor with the Centre for Wireless Communications, University of Oulu, Finland. From 2012 to 2022, he was with the Department of Computer Engineering, Amirkabir University of Technology, Tehran. From February 2021 to January 2022, he was a Visiting Researcher with the Lappeenranta-Lahti University of Technology, Lappeenranta, Finland. From November 2007 to November 2008, he was a Visiting Researcher with the Wireless@KTH, Royal Institute of Technology, Stockholm, Sweden. %From September 2010 to July 2012, he was with the Shiraz University of Technology, Shiraz. 
  From June 2013 to August 2013, and from July 2014 to August 2014 he was a visiting researcher in the Department of Electrical and Computer Engineering, University of Manitoba, Winnipeg, MB, Canada. His current research interests include radio resource allocation in IoT, Beyond 5G and 6G wireless networks.
	\end{biography}

 \begin{biography}[{\includegraphics[width=1in,height=1.25in,clip,keepaspectratio]{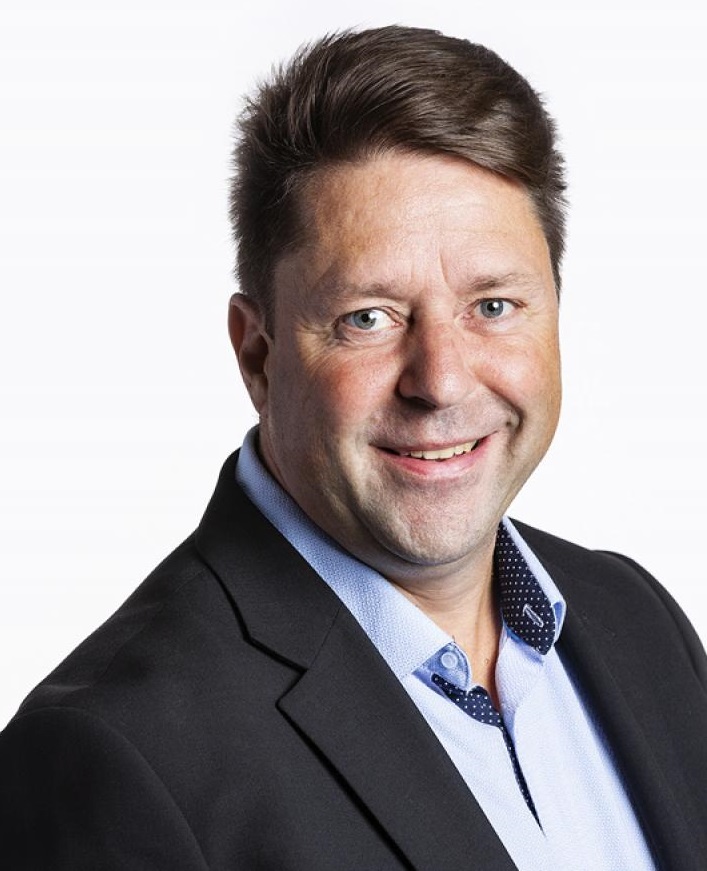}}]{Matti Latva-aho} (Fellow, IEEE) Matti Latva-aho (IEEE Fellow) received his M.Sc., Lic.Tech., and Dr.Tech. (Hons.) degrees in Electrical Engineering from the University of Oulu, Finland, in 1992, 1996, and 1998, respectively. From 1992 to 1993, he was a Research Engineer at Nokia Mobile Phones in Oulu, Finland, after which he joined the Centre for Wireless Communications (CWC) at the University of Oulu. Prof. Latva-aho served as Director of CWC from 1998 to 2006 and was Head of the Department of Communication Engineering until August 2014. He is currently a Professor of Wireless Communications at the University of Oulu and the Director of the National 6G Flagship Programme. He is also a Global Fellow at The University of Tokyo. Prof. Latva-aho has published over 500 conference and journal papers in the field of wireless communications. In 2015, he received the Nokia Foundation Award for his achievements in mobile communications research.
	\end{biography}

\end{document}